\newcommand{\head}[1]{\textnormal{\textbf{#1}}}
\titleformat*{\subsection}{\large\bfseries}
\numberwithin{equation}{section}
\pgfplotsset{compat=newest}
\pgfplotsset{plot coordinates/math parser=false}
\newlength\figureheight
\newlength\figurewidth
\numberwithin{equation}{section}
\newcommand{\beq}{\begin{equation}}
\newcommand{\bEq}{\end{equation}}
\newcommand{\be}{\begin{equation}}
\newcommand{\ee}{\end{equation}}
\newcommand{\T}{\mathbb T}
\renewcommand{\epsilon}{\varepsilon}
\renewcommand{\leq}{\leqslant}
\renewcommand{\geq}{\geqslant}
\renewcommand{\le}{\leq}
\theoremstyle{plain} 
\newtheorem{theorem}{Theorem}[section]
\newtheorem*{theorem*}{Theorem}
\newtheorem{lemma}[theorem]{Lemma}
\newtheorem{assumption}[theorem]{Assumption}
\newtheorem*{lemma*}{Lemma}
\newtheorem{corollary}[theorem]{Corollary}
\newtheorem*{corollary*}{Corollary}
\newtheorem*{proposition*}{Proposition}
\newtheorem{claim}[theorem]{Claim}
\newtheorem{definition}[theorem]{Definition}
\newtheorem*{definition*}{Definition}
\theoremstyle{remark}
\newtheorem*{example*}{Example}
\newtheorem{remark}[theorem]{Remark}
\newtheorem*{remark*}{Remark}
\newtheorem*{remarks*}{Remarks}
\date{}
\title{ Multivariate functional responses low rank regression with an application to brain imaging data}
\author[1]{Xiucai Ding \thanks{E-mail: xcading@ucdavis.edu.}}
\author[2]{Dengdeng Yu  \thanks{E-mail: dengdeng.yu@utoronto.ca}}
\author[3]{Zhengwu Zhang  \thanks{E-mail: zhengwu$_{-}$zhang@urmc.rochester.edu}}
\author[2]{Dehan Kong \thanks{Corresponding author; E-mail: dehan.kong@utoronto.ca}}
\affil[1]{Department of Statistics, University of California, Davis}
\affil[2]{Department of Statistical Sciences, University of Toronto}
\affil[3]{Department of Biostatistics and Computational Biology, University of Rochester}
\begin{document}
\maketitle

\begin{abstract}
We propose a multivariate functional responses low rank regression model with possible high dimensional functional responses and scalar covariates. By expanding the slope functions on a set of sieve basis, we reconstruct the basis coefficients as a matrix. To estimate these coefficients, we propose an efficient procedure using nuclear norm regularization. We also derive error bounds for our estimates and evaluate our method using simulations. We further apply our method to the Human Connectome Project neuroimaging data to predict cortical surface motor task-evoked functional magnetic resonance imaging signals using various clinical covariates to illustrate the usefulness of our results.
\end{abstract}

\section{Introduction}
The advancement of neuroimaging technology has produced massive imaging data observed over both time and space, including functional magnetic resonance imaging (fMRI), electroencephalography (EEG), diffusion tensor imaging (DTI), positron emission tomography (PET) and single photon emission-computed tomography (SPECT), among others. Scientists are often interested in characterizing the association between imaging data and clinical predictors. The functional regression models are widely used to achieve this goal, for instance, the functional linear regression \cite{Ramsayet1991, RamsaySilverman2005, Yaoet2005b} and functional response regression model \cite{fara, RamsaySilverman2005}. We refer the readers to \cite{wang2016functional} for a recent review. 

In this paper, we are interested in predicting the blood-oxygen-level-dependent (BOLD) signals obtained from different regions of interest (ROIs) of the brain using clinical covariates. Specifically, we propose the following multivariate functional responses regression model
\begin{equation}
\label{populationmodel}
{\bm Y}(t)=\sum_{j=1}^s X_j \bm{\beta}_j(t)+\bm{\epsilon}(t), 
\end{equation}
where $ {\bm Y}(t)=(Y_1(t),\ldots, Y_p(t))^{\T}\in \mathbb{R}^p$ represents the BOLD signals from $p$ ROIs, $$\bm{\beta}_j(t)=(\beta_{j1}(t), \ldots, \beta_{jp}(t))^{\T}\in \mathbb{R}^p$$ for $1\leq j\leq s$ represent the coefficient functions, characterizing the effect of the $j$th predictor $X_j$ ($1\leq j\leq s$) on the responses, and $\bm{\epsilon}(t)\in \mathbb{R}^p $ is the random error which is independent of $X_j$ ($1\leq j\leq s$). In the current paper, we assume that both ${\bm Y}(t)$ and $X_j$'s are centered and focus on the case when the model \eqref{populationmodel} does not have an intercept term. {Indeed,  our methodology can be extended to the case when they have nonzero known mean functions. We refer the readers to our discussion in Remark \ref{rem_nonzerosmean}.} 

A similar multivariate varying coefficient model (MVCM) \cite{zhu2011fadtts, zhu2012multivariate} has been studied for delineating the association between multiple diffusion properties along
major white matter fiber bundles with a set of covariates of interest. They assume that the error term $\bm{\epsilon}(t)$ can be decomposed into two independent terms, where the first term depicts the error correlations between two time points, and the second term depicts the individual curve variation. Under this assumption, they proposed a weighted least squares procedure based on a local polynomial kernel smoothing technique \cite{fan1996local} to estimate the coefficient functions $\bm{\beta}_j(t).$ They also employed the functional principal component analysis to delineate the structure of the variability in fiber bundle diffusion properties.  

{
There are several key differences between our proposal and the MVCM. First, the task-evoked fMRI data often has non-stationary nature \cite{jones2012non}.  Motivated by this perspective, unlike the MVCM, our error process can cover a wide range of non-stationary processes. Second, in neuroimaging studies, the dimensions of the responses and covariates can be quite large. In the present paper, we allow the dimensions to be divergent with the sample size, while the MVCM considers the case when the dimenions are fixed. Third, the MVCM uses local kernel smoothing method to estimate the coefficient function, which can be computationally slow since it needs to estimate the coefficient functions pointwisely. To overcome this computational difficulty, our method employs the state-of-art sieve regression  which utilizes the global information among all the time points. By imposing a low-rank structure of the coefficient matrix, our proposal can obtain a global fit of the coefficient curves, which significantly improves the computational efficiency.  }

The rest of the article is organized as follows. In Section \ref{sec:modelassumption}, we
introduce our multivariate functional responses low rank regression model, and propose a low-rank estimation procedure with an efficient algorithm. 
Section \ref{theories} investigates the theoretical properties of our method.
Simulations are conducted in Section \ref{simulation} to evaluate the finite sample performance of the proposed approach. Section \ref{realdata} illustrates an application of our method using data from the Human Connectome Project. We end with some discussion in Section \ref{discussion}. Technical proofs, numerical simulation and real data results are given in the supplementary file.

\section{Model setup and estimation procedure}\label{sec:modelassumption}
 Denote $ \{\bm{y}_i(t), (x_{ij}, 1\leq j\leq s), \bm{\epsilon}_i(t): i=1, \ldots, n\}$ independent and identically distributed (i.i.d.) realizations
from the population $\{{\bm Y}(t), (X_j, 1\leq j\leq s), \bm{\epsilon}(t)\}$ generated from the model (\ref{populationmodel}). Without loss of generality, we assume $ t\in [0,1]$. In practice, we cannot observe the entire trajectories of $\{\bm{y}_i(t)\}$. Instead, we can collect intermittent
measurements $\{\bm{y}_i(t_k)\}$ for $ 0\leq t_1\leq t_2\leq \ldots \leq t_T\leq 1$ for each $i$, where $T \in \mathbb{N}$ is the number of time points. In this paper, we assume each subject is observed at the same time points $ t_1, \ldots, t_T$, and this assumption is valid for our fMRI data. In light of model \eqref{populationmodel}, we can write
\begin{equation}\label{eq_model}
\bm{y}_i(t_k)=\sum_{j=1}^s x_{ij} \bm{\beta}_j(t_k)+\bm{\epsilon}_i(t_k), \ k=1,2,\cdots, T, \ i=1,2,\cdots,n. 
\end{equation}
We are interested in estimating the coefficient functions $\{\bm{\beta}_j(t), 1\leq j\leq s\}$. In the current paper, we assume that the functions $\beta_{jl}(t) (1 \leq j \leq s, 1\leq l\leq p)$ are smooth in $t$, which is a realistic assumption for fMRI data. We also allow $ s, p $ and $T$ to diverge with the sample size $n$.

To estimate $\beta_{jl}(t)$, we approximate $\beta_{jl}(t)$ using sieve expansion \cite{chensieve}. Examples of sieve basis include trigonometric series, orthogonal polynomials, and the orthogonal wavelet basis. In particular, according to Section 2.3 of \cite{chensieve}, we have
\begin{equation}\label{eq_betaapprox}
\beta_{jl}(t)=\sum_{h=1}^c M_{jl,h} b_h(t)+\sum_{h=c+1}^{\infty} M_{jl,h} b_h(t) \approx \sum_{h=1}^c M_{jl,h} b_h(t), 
\end{equation} 
where $\{b_h(t)\}_{h=1}^\infty$ is a set of pre-chosen sieve basis functions, $\{M_{jl,h}: 1\leq j\leq s, 1\leq l\leq p, 1\leq h\leq c \} $ are coefficients to be estimated, and $ c $ is the truncation number of sieve basis functions. For simplicity, we use the same $c$ for all $ 1\leq j\leq s$ and $ 1\leq l\leq p$. 

{Plugging} \eqref{eq_betaapprox} into \eqref{eq_model}, we obtain the approximation
\begin{equation}\label{eq_appro_model}
y_{il}(t_k)\approx \sum_{j=1}^s x_{ij} \sum_{h=1}^c M_{jl,h} b_h(t_k)+\epsilon_{il}(t_k),
\end{equation}
for $k=1,2,\cdots, T, \ l=1, 2, \ldots, p, \ i=1,2,\cdots,n$. 
Based on this approximation, the estimation of $\{\bm{\beta}_j(t), 1\leq j\leq s\}$ boils down to estimating $M_{jl,h}$'s. 

Let $\bm{Y}_i \in \mathbb{R}^{p \times T}$ with $lk$th entry $ y_{il}(t_k)$, and $\bm{E}_i\in \mathbb{R}^{p \times T}$ with $lk$th entry $ \epsilon_{il}(t_k)$. Let $\otimes$ be the Kronecker product. We define
\begin{equation}\label{eq_xgenerate}
\bm{X}_i=\bm{x}_i \otimes \bm{B}\in \mathbb{R}^{sc\times T}, 
\end{equation}
 where $\bm{x}_i=(x_{i1}, \cdots, x_{is})^{\T} \in \mathbb{R}^s$ and $\bm{B} \in \mathbb{R}^{c \times T}=(\bm{b}(t_1),\cdots, \bm{b}(t_T))$ with $\bm{b}(t)=(b_1(t),\ldots, b_c(t))^{\T} \in \mathbb{R}^c$. Further, denote $ \bm{M}_j \in \mathbb{R}^{p \times c}$, whose  entry satisfies $(\bm{M}_j)_{lh}=M_{jl,h},\ 1 \leq l \leq p, \  1 \leq h \leq c$, and $\mathsf{M}=(\bm{M}_1, \bm{M}_2, \ldots, \bm{M}_s)\in \mathbb{R}^{p \times sc}$. One can rewrite \eqref{eq_appro_model} as a matrix form
\begin{equation}\label{eq_appro_model_matrixform}
\bm{Y}_i\approx\mathsf{M} \bm{X}_i+\bm{E}_i, \ i=1,2,\cdots,n. 
\end{equation} 
The model \eqref{eq_appro_model_matrixform} is a multivariate response linear regression model, and the parameter of interest is the coefficient matrix $\mathsf{M}$.

The conventional approach to estimate $\mathsf{M}$ is the ordinary least squares (OLS). However, the OLS may perform suboptimally since they do not utilize the information that the entries of $\bm{Y}_i$ are related, especially when both $ p $ and $ T$ diverge with the sample size $ n$. Recently, \cite{yuan2007dimension, chen2013reduced} proposed reduced rank regression models by assuming low-rankness of $\mathsf{M}$. They introduced nuclear norm penalized regression methods to estimate $\mathsf{M}$, which can achieve parsimonious models with
enhanced interpretability. The low-rank assumption has been commonly used in neuroimaging applications, see 
\cite{zhu2014bayesian, zhou2014regularized, Kong2019, yu2020beyond, hu2020matrix, hu2020nonparametric} for example. {In the current paper, we also assume that $\mathsf{M}$ is of low rank. As we will see in the discussion of Section \ref{theories}, the low rank assumption of $\mathsf{M}$ indicates that $\bm{\beta}_{is}, 1 \leq i \leq s$ can be viewed as a finite linear combination of dynamic factors. }

In particular, we solve
\begin{equation}\label{eq_originalmodel}
\min_{\mathsf{M}}\left[\frac{1}{2 nT} \sum_{i=1}^n \text{Tr} \left\{(\bm{Y}_i-\mathsf{M}\bm{X}_i )(\bm{Y}_i-\mathsf{M}\bm{X}_i)^{\T} \right\}+\lambda \|\mathsf{M}\|_*\right],
\end{equation}
where $\|\mathsf{M}\|_*$ is the nuclear norm, defined as the summation of all the singular values of $\mathsf{M}$, and $\lambda$ is a tuning parameter. 

If we let $\mathcal{Y} = \{ \bm{Y}_1^{\T}, \ldots, \bm{Y}_n^{\T}\}^{\T}\in \mathbb{R}^{nT \times p}$, $\mathcal{X} = \{ \bm{X}_1^{\T}, \ldots, \bm{X}_n^{\T}\}^{\T} \in \mathbb{R}^{nT \times sc}$, the optimization problem (\ref{eq_originalmodel}) is equivalent to
\begin{equation}\label{optimization1}
\min_{\mathsf{M}}\left(  \frac{1}{2nT} \| \mathcal{ Y}-\mathcal{X} \mathsf{M}^{\T}
\|_F^2 +\lambda \| \mathsf{M}^{\T}  \|_*\right).
\end{equation}
Denote the solution of (\ref{optimization1}) as $\widehat{\mathsf{M}}$. It is easy to see that the estimate of the coefficient function can be written as $ \widehat{\beta}_{jl}(t)=\sum_{h=1}^c \widehat{M}_{jl,h} b_h(t)$. 

The optimization problem \eqref{optimization1} can be solved by the proximal gradient algorithm. 
For simplicity, we use $\mathbf{D} = \mathsf{M}^{\T}$. Let ${\mathcal{L}(\mathbf{D} ) }= \frac{1}{2nT}\| \mathcal{ Y}-\mathcal{X} \mathbf{D} \|_F^2$ and ${ \mathcal{P}(\mathbf{D} )} = \lambda \| \mathbf{D}   \|_*$.
The objective function {$\mathcal{Q}(\mathbf{D} )$} can be decomposed as { $\mathcal{Q}(\cdot) = \mathcal{L}(\cdot)+\mathcal{P}(\cdot)$ }. Define $\nabla { \mathcal{L}}(\mathbf{S^{(t)}}) = \nabla ||\mathcal{ Y}-\mathcal{X} \mathbf{S^{(t)}}||_F^2 = 2 \mathcal{X}^T (\mathcal{X} \mathbf{S^{(t)}} - \mathcal{Y})$. We utilize the Nestrov's gradient descent method \cite{beck2009fast,nesterov2013introductory} to solve (\ref{optimization1}). In particular, we propose the following algorithm.

\begin{algorithm}[H]\label{algo1}
1. Initialize: $\mathbf{D} ^{(0)} = \mathbf{D}^{(1)} $, $\alpha^{(0)} = 0$ and $\alpha^{(1)} = 1$, $\delta = 1/\lambda_{\rm{max}}(\mathcal{X}^{\T} \mathcal{X})$. \\
2. Repeat:\\
i. $\mathbf{S^{(t)}} = \mathbf{D}^{(t)} + \frac{\alpha^{(t-1)}-1}{\alpha^{(t)}} (\mathbf{D}^{(t)} - \mathbf{D}^{(t-1)})$;\\
ii. $\mathbf{A}_{\rm{temp}} = \mathbf{S^{(t)}} - \delta \nabla { \mathcal{L}}(\mathbf{S^{(t)}})$;\\
iii. Singular value decomposition: $\mathbf{A}_{\rm{temp}} = \mathbf{U} \operatorname{diag}(\mathbf{a}) \mathbf{V}^T$;\\
iv. $\mathbf{d} = (\mathbf{a} - \lambda \delta \cdot \mathbf{1})_{+}$;\\
v. { $\mathbf{D}^{(t+1)} = \mathbf{U} \operatorname{diag}(\mathbf{d}) \mathbf{V}^T$; }\\
vii. $\alpha^{(t+1)} = \left[1 + \sqrt{1+(2 \alpha^{(t)})^2}\right]/2$;\\
3. Until objective function ${\mathcal{Q}}(\mathbf{D}^{(t)})$ converges.\\ 
\end{algorithm}
From step ii to step v, the gradient descent is based on the first order approximation to the loss function {$\mathcal{L}$} at the current search point $\mathbf{S^{(t)}}$. Specifically, 
{
\begin{align*}
g(\mathbf{D}| \mathbf{S^{(t)}},\delta) =& \mathcal{L}(\mathbf{S^{(t)}}) + \langle \nabla  \mathcal{L}(\mathbf{S^{(t)}}),\mathbf{D} - \mathbf{S^{(t)}} \rangle + \frac{1}{2 \delta} ||\mathbf{D} - \mathbf{S^{(t)}}||^2_F + \mathcal{P}(\mathbf{D}), \\
& = \frac{1}{ 2 \delta} ||\mathbf{D} - (\mathbf{S^{(t)}}- \delta \nabla  \mathcal{L}(\mathbf{S^{(t)}} ))||_F^2 + \mathcal{P}(\mathbf{D}) + c^{(t)},
\end{align*}
}
where $c^{(t)}$ collects the term irrelevant to the optimization and the constant $\delta$ is chosen such that the relation between the surrogate and target functions always holds: $g(\mathbf{D}| \mathbf{S^{(t)}},\delta) \geq { \mathcal Q}(\mathbf{D})$.
We set $\delta$ as a Lipschitz constant for $\nabla {\mathcal{L}}(\cdot)$ with $\delta = 1/\lambda_{\rm{max}}(\mathcal{X}^{\T} \mathcal{X})$.
Solution to the surrogate optimization problem is given by  \emph{Proposition 1} of \cite{zhou2014regularized}. 
Singular value decomposition is performed on the intermediate matrix $\mathbf{A}_{\rm{temp}} = \mathbf{S^{(t)}} - \delta \nabla { \mathcal{L}}(\mathbf{S^{(t)}})$.
The next iterate  $\mathbf{D}^{(t+1)}$ shares the same singular vectors as $\mathbf{A}_{\rm{temp}}$ and its singular values $\mathbf{d}^{(t+1)}$ are determined by minimizing 
$\frac{1}{2 \delta} ||\mathbf{d} -\mathbf{a}||_2^2 + f(\mathbf{d})$, where $\mathbf{a} = \sigma(\mathbf{A}_{\rm{temp}})$.
For the nuclear norm regularization $f(\mathbf{d}) = \lambda \sum_j |b_j| $, the solution is given by soft thresholding the singular values $b_j^{(t+1)} = (a_j -\lambda \delta )_+$ as suggested by \emph{Corollary 1} of \cite{zhou2014regularized}. 

There are two tuning parameters involved in our estimation procedure, the truncation number $ c $ and the regularization parameter  $\lambda$. In this paper, we use $5$-fold cross-validation to select the optimal values based on two-dimensional grid search.

\section{Theoretical Results}\label{theories}
We begin with some notation. Define $\bm{z}=(z_1,\cdots, z_s)^{\T}$ as a \emph{subgaussian} random vector {with some parameter $\sigma>0$} if  for all $\bm{\alpha} \in \mathbb{R}^s$,
\begin{equation*}
{
\mathbb{E} \left[ \exp(\bm{\alpha}^{\T} \bm{z}) \right] \leq \exp(\| \bm{\alpha} \|^2 \sigma^2 /2). }
\end{equation*}
We next introduce the  \emph{locally stationary time series.} Consider the time series \cite{WZ1, WZ2} 
\begin{equation}\label{defn_model}
z_i=G(\frac{i}{n}, \mathcal{F}_i),
\end{equation}
where $\mathcal{F}_i=(\ldots, \eta_{i-1}, \eta_i)$ and $\eta_i, \ i  \in \mathbb{Z}$ are  i.i.d.  random variables, and $G:[0,1] \times \mathbb{R}^{\infty} \rightarrow \mathbb{R}$ is a measurable function such that $\xi_i(t):=G(t, \mathcal{F}_i)$ is a properly defined random variable for all $t \in [0,1].$ We introduce the following dependence measure to quantify the temporal dependence of (\ref{defn_model}).
\begin{definition}
\label{defn_physical} 
Let $\{\eta_i^{\prime}\}$ be an i.i.d. copy of $\{\eta_i\}.$ We assume that for some $q>2,\ \|x_i\|_q<\infty, $ where $\| \cdot \|_q=[\mathbb{E} |\cdot|^q ]^{1/q}$ is the $\mathcal{L}_q$ norm of a random variable.  For $j \geq 0,$ we define the physical dependence measure by 
\begin{equation}\label{eq_phyoriginal}
\delta(j,q):=\sup_{t \in [0,1]} \max_{i} \| G(t, \mathcal{F}_i)-G(t, \mathcal{F}_{i,j})\|_q,
\end{equation}
where $\mathcal{F}_{i,j}:=(\mathcal{F}_{i-j-1}, \eta^{\prime}_{i-j},\eta_{i-j+1}, \ldots, \eta_i).$ 
\end{definition} 
The measure $\delta(j, q)$ quantifies the changes in the system's output when the input of the system
$j$ steps {before} is changed to an i.i.d. copy. If the change is small, then we have short-range
dependence. 

With the above notation, we introduce the assumptions for the theoretical development. 

\begin{assumption}\label{assum_varaible} {We assume that $\bm{x}_i, i=1,2,\cdots,n,$ are  i.i.d. centered subgaussian random vectors independent of $\bm{\epsilon}_i(t_k), i=1,2,\cdots,n,$ for all $k=1,2, \cdots,T$.}  Moreover, we assume that for each $i=1,2,\cdots,n,$ and $l=1,2,\cdots,p,$ $\{\epsilon_{il}(t_k)\}_{k=1}^T$ is a centered locally stationary time series of the form (\ref{defn_model}). Finally, for some large $q, \gamma>0,$ there exists some universal constant $C>0$, such that
\begin{equation}\label{assum_phy}
\delta(j,q) \leq Cj^{-\gamma},  \ j \geq 1.
\end{equation} 
\end{assumption}

{We mention that the assumption (\ref{defn_model}) represents a wide class of stationary, locally stationary linear and nonlinear processes  \cite{WZ1, WZ2}. As we mentioned earlier, previous works \cite{zhu2011fadtts, zhu2012multivariate} focus on fitting the coefficients of functional regression locally. Hence, they do not need to consider the temporal relation for the underlying stochastic process. In contrast, our estimation relies on (\ref{eq_appro_model}), which utilizes the global information for all the time points. A natural assumption is the short-range temporal dependence, i.e., (\ref{assum_phy}), which needs that the temporal correlation between the process $\bm{\epsilon}_i(\cdot)$ has a polynomial decay. Moreover, as a technical byproduct, we only require the existence of second moment of $\epsilon_{il}(\cdot)$. This improves the assumption of finite fourth moment in \cite{zhu2012multivariate}. Finally, we mention that (\ref{assum_phy}) can be satisfied by many stochastic processes, for instance, the Ornstein-Uhlenbeck process and the linear process $$\epsilon_{il}(t)=\sum_{k=1}^\infty a_{k,il}(t) \upsilon_i,$$ where $\{\upsilon_i\}$ are independent standard Gaussian random variables and  $\sup_t |a_{k,il}(t)|^2 \leq C k^{-\gamma},$ for some constant $C>0.$ 
}

We also need the following assumption on the smoothness of $\beta_{jl}(\cdot)$'s. 
\begin{assumption}\label{assum_smooth} For $j=1,2,\cdots,s, l=1,2,\cdots, p, $ $\beta_{jl}(\cdot)$'s are smooth functions of time such that $\beta_{jl}(\cdot) \in C^d([0,1]),$ where $C^d([0,1])$ ) is the function space on $[0, 1]$ of continuous functions that have continuous first $d$ derivatives.
\end{assumption}

By Assumption \ref{assum_smooth}, $\beta_{jl}(t)$ can be well approximated by sieve expansion \cite{chensieve}.  Specifically, in light of (\ref{eq_betaapprox}), we have that 
\begin{equation}\label{eq_betaexpan}
\beta_{jl}(t)=\sum_{h=1}^c M_{jl,h} b_h(t)+O(c^{-d}), \ j=1,2,\cdots,s,
\end{equation} 
where the error $O(c^{-d})$ is entrywise. 
Plugging (\ref{eq_betaexpan}) into (\ref{eq_model}), with high probability, we have
\begin{equation}\label{eq_scalarmodel}
\bm{y}_i(t_k)=\sum_{j=1}^s x_{ij} \bm{M}_j \bm{b}(t_k)+\bm{\epsilon}_i(t_k)+O(sc^{-d}).
\end{equation}

When the error term $O(sc^{-d})$ is negligible, we can approximate $\bm{\beta}_j(t_k)$ using
\begin{equation}\label{eq_approximatebetaone}
\tilde{\bm{\beta}}_j(t_k)=\bm{M}_j \bm{b}(t_k).
\end{equation}
{For a rigorous justification, we refer the readers to Theorem \ref{cor_main} and Corollary \ref{coro_coro} and their proofs. }

Recall $\bm{Y}_i \in \mathbb{R}^{p \times T}$ is a matrix with $lk$-th entry $ y_{il}(t_k)$, and $\bm{E}_i\in \mathbb{R}^{p \times T}$ with $lk$-th entry $ \epsilon_{il}(t_k)$. We can write the model (\ref{eq_model}) as 
\begin{equation}\label{eq_finalmodel1}
\bm{Y}_i=\mathsf{M} \bm{X}_i+\bm{E}_i+o(1), \ i=1,2,\cdots,n,
\end{equation} 
where $\mathsf{M} $ is defined under equation (\ref{eq_xgenerate}). {Therefore, our estimation problem boils down to estimating the coefficient matrix $\mathsf{M} $. }


{
\begin{remark}
Since $\widetilde{\bm{\beta}}_j(t)$ can approximate $\bm{\beta}_j$ well, we now connect the structure of $\widetilde{\bm{\beta}}_j(t)$ with the matrix $\mathsf{M}$ to show that $\bm{\beta}_j$ will have a dynamic factor model structure when $\mathsf{M}$ is approximately low rank. Note $\mathsf{M} $ is a rectangular matrix stacking $\bm{M}_j \in \mathbb{R}^{p \times c}, j=1,2,\cdots, s.$ For each $j=1,2,\cdots,s,$ we write the singular value decomposition of $\bm{M}_j$ as
\begin{equation*}
\bm{M}_j=\sum_{l=1}^{\min \{p,c\}} \sigma_l \mathbf{u}_l \mathbf{v}_l^\top,
\end{equation*}
where $\{\sigma_l\}, \{\mathbf{u}_l\}$ and $\{\mathbf{v}_l\}$ are the singular values, left singular vectors and right singular vectors of $\bm{M}_j,$ respectively.  
Consequently, we find that 
\begin{equation*}
\widetilde{\bm{\beta}}_j(t_k)=\sum_{l=1}^{\min \{p,c\}} \sigma_l (\mathbf{v}_l^\top \mathbf{b}(t_k))\mathbf{u}_l.      
\end{equation*}
If we further denote $\alpha_l(t_k):=\sigma_l(\mathbf{v}_l^\top \mathbf{b}(t_k)),$ then $\widetilde{\bm{\beta}}_j(t_k)$ can be further written as 
\begin{equation*}
\widetilde{\bm{\beta}}_j(t_k)=\sum_{l=1}^{\min\{p,c\}}\alpha_l(t_k) \mathbf{u}_l.      
\end{equation*}

This implies that $\widetilde{\bm{\beta}}_j(t_k)$ is a time-varying linear combination of the basis $\{\mathbf{u}_l\},$ as which we can regard a dynamic factor model. In the current paper, we follow the common low-rank assumption in the literature of approximate factor models \cite{bai2002determining} and assume that only a few of $\{\mathbf{u}_l\}$ are useful for our estimation and prediction. As a result, $\bm{M}_j$ is of low-rank structure.

Suppose that the rank of $\bm{M}_j$ is $r_j,j=1,2,\cdots,s.$ Since 
\begin{equation*}
\text{rank}(\mathsf{M})  \leq \sum_{j=1}^s r_j,   
\end{equation*}
and $s$ is slowly divergent, we can assume that $\mathsf{M}$ is of approximate low-rank structure. This is formally stated in Assumption \ref{assum_lowrank}. 
\end{remark}
}

Denote 
\begin{equation}\label{eq_defnxisieve}
\xi:=\sup_{1 \leq h \leq c} \sup_{t \in [0,1]} |b_h(t)|.
\end{equation}
As mentioned in Section 4.2 of \cite{ding2019}, $\xi$ can be well controlled for the commonly used sieve basis functions. For instance, $\xi=O(1)$ for the trigonometric series and orthogonal polynomials and $\xi=O(\sqrt{c})$ for the orthogonal wavelet basis.

 \begin{assumption}\label{assum_lowrank} We assume that $\mathsf{M}$ is of approximately low-rank structure, i.e., there exists a constant $\kappa>0$ such that
$$ \sum_{i=1}^{\min\{p,sc\}} \sigma_i(\mathsf{M}) \leq \kappa,$$ where $\sigma_i(\mathsf{M}), i=1,2,\cdots, \min\{p,sc\}$ are the singular values of $\mathsf{M}.$  Moreover,  we assume that for {any arbitrarily} small constant $\tau>0,$ we have that
 \begin{equation}\label{eq_boundone}
\frac{ \sqrt{r} \xi p n^{\tau}}{c s\sqrt{T}}=o(1), \ \text{where} \ r=\text{rank}(\mathsf{M}). 
 \end{equation} 

 \end{assumption}
 
\begin{remark} 
The assumption in (\ref{eq_boundone}) is mild. Denote
\begin{equation}\label{eq_discussion}
c=O(n^{\alpha_1}), T=O(n^{\alpha_2}), p=O(n^{\alpha_3}), s=O(n^{\alpha_4}).
\end{equation}
If we choose the trigonometric series or orthogonal polynomials,  (\ref{eq_boundone}) reads as 
\begin{equation*}
r^{1/2} n^{\alpha_3-\alpha_2/2+\tau-\alpha_1-\alpha_4}=o(1).
\end{equation*}
In other words, when the true rank $r$ is finite and $p=O(sc)$, we only need to have $O(n^{2\tau})$ time points observed from the stochastic process $\bm{\epsilon}_i(\cdot)$. 
\end{remark}

To guarantee the consistency of the estimation, we need the following assumption on the parameters.  
  \begin{assumption}\label{assump_parameter} We assume that 
  \begin{equation}\label{eq_boundtwo}
 \xi p s^2 n^{2\tau} c^{-d}=o(1).
  \end{equation}
 \end{assumption} 

The assumption \ref{assump_parameter} is mild. When we use the trigonometric series or orthogonal polynomials, and assume $\bm{\beta}_j(\cdot)$ is infinitely differentiable, (\ref{eq_boundtwo}) will always hold and we can allow $ps^2$ diverging fast. {In our paper, we need our sieve bases satisfy (\ref{eq_boundone}) and (\ref{eq_boundtwo}) and belong to $C^d([0,1])$ defined in Assumption \ref{assum_smooth}. Note that the parameter $\xi$  in (\ref{eq_boundone}) and (\ref{eq_boundtwo}) is directly related to the sieve bases via (\ref{eq_defnxisieve}). Indeed, all the sieve bases listed in Section 2.3 of \cite{chensieve} satisfy these assumptions. For instance, the Fourier basis, the orthogonal polynomials, the Daubenchies orthogonal wavelets and the splines.  }
 
Finally, we introduce the following assumption to guarantee that the covariance matrix of $\bm{x}_i$ is regular.  We will see later that the following condition is a sufficient condition for the restricted strong convexity condition (c.f. Definition \ref{defn_rsc}). 

 \begin{assumption}\label{ass_pdc}  Denote $\Sigma_s$ as the covariance matrix of $\bm{x}_i.$ We assume that $\Sigma_s$ is bounded and there exists some constant $\delta>0$ such that 
 \begin{equation*}
 \lambda_{\min}(\Sigma_s) \geq \delta,
 \end{equation*}
 where $\lambda_{\min}(\Sigma_s)$ is the smallest eigenvalue of $\Sigma_s.$
 \end{assumption}
 
Armed with the above assumptions, we now present our main result. Denote $\lambda_n$ as the regularization parameter of the optimization problem (\ref{optimization1}), and $\mathsf{M}^*$ the true value of $\mathsf{M}$. Recall that $\widehat{\mathsf{M}}$ is the solution of (\ref{optimization1}), we have the following result. {It can be seen that even though our approach may be suboptimal, it can achieve consistency under mild conditions.}

\begin{theorem}\label{cor_main}
Suppose Assumptions \ref{assum_varaible}--\ref{ass_pdc} hold. For any given {arbitrarily} small constant $\tau>0$ defined in Assumption \ref{assum_lowrank}, when both $n$ and $T$ are  large enough, there exists some $C_q>0$ depending on $q$ in Assumption \ref{assum_varaible}, with probability at least $1-C_qn^{-q \tau}$, we have for some constants $C, C_1>0,$ when $\lambda_n \geq C_1 p \xi n^{\tau} T^{-1/2},$ 
\begin{equation}\label{eq_boundbound}
\|\mathsf{M}^*-\widehat{\mathsf{M}}  \|_F  \leq C \left( \frac{\sqrt{p \xi } n^{\tau/2}}{ \sqrt{c \operatorname{tr}(\Sigma_s)} T^{1/4}}+ \frac{\sqrt{r} p \xi n^{\tau}}{c \operatorname{tr}(\Sigma_s) \sqrt{T}} \right).
\end{equation}
\end{theorem}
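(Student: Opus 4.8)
The plan is to follow the standard program for nuclear-norm-penalized M-estimation (as in the reduced-rank regression works cited), but adapted to the essential complication here: the noise is a dependent, locally stationary process rather than an i.i.d. sequence. Write $\Delta := \widehat{\mathsf{M}} - \mathsf{M}^*$ and recall from \eqref{eq_finalmodel1} that, in stacked form, $\mathcal{Y} = \mathcal{X}(\mathsf{M}^*)^{\T} + \mathcal{E} + \mathcal{B}$, where $\mathcal{E}$ collects the errors $\bm{E}_i$ and $\mathcal{B}$ is the sieve approximation bias of entrywise order $O(sc^{-d})$. First I would derive the basic inequality. Since $\widehat{\mathsf{M}}$ minimizes \eqref{optimization1}, comparing the objective at $\widehat{\mathsf{M}}$ and $\mathsf{M}^*$ and expanding the squared Frobenius norm gives
\[
\frac{1}{2nT}\|\mathcal{X}\Delta^{\T}\|_F^2 \leq \frac{1}{nT}\langle \mathcal{X}^{\T}(\mathcal{E}+\mathcal{B}),\, \Delta^{\T}\rangle + \lambda_n\big(\|(\mathsf{M}^*)^{\T}\|_* - \|\widehat{\mathsf{M}}^{\T}\|_*\big).
\]
By the duality of the nuclear and operator norms, the inner product is bounded by $\|\tfrac{1}{nT}\mathcal{X}^{\T}(\mathcal{E}+\mathcal{B})\|_{\mathrm{op}}\,\|\Delta\|_*$, which motivates choosing $\lambda_n$ to dominate twice this operator norm.

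Second, and this is the heart of the argument, I would bound the stochastic gradient $\|\tfrac{1}{nT}\mathcal{X}^{\T}\mathcal{E}\|_{\mathrm{op}}$. Because of the structure $\mathcal{X} \sim (\bm{x}_i \otimes \bm{B})$, the entries of this matrix are weighted sums $\tfrac{1}{nT}\sum_{i,k} x_{ij}\,b_h(t_k)\,\epsilon_{il}(t_k)$, and the operator norm must be controlled uniformly over the index $(j,h,l)$. The subgaussianity of $\bm{x}_i$ (Assumption \ref{assum_varaible}) handles the covariate factor, the quantity $\xi$ from \eqref{eq_defnxisieve} controls the basis factor, and the $T^{-1/2}$ rate reflects temporal averaging. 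The obstacle is that $\{\epsilon_{il}(t_k)\}_k$ is not i.i.d.\ but locally stationary, so the usual matrix-Bernstein step is unavailable. Here I would instead invoke the physical dependence measure together with the polynomial decay \eqref{assum_phy} and the $q$-th moment assumption to apply a Nagaev/Rosenthal-type moment inequality for dependent sequences; this produces a deviation bound of the stated scale with the $n^{\tau}$ inflation and failure probability $C_q n^{-q\tau}$. This dependent-data concentration is the step I expect to be most delicate. It justifies the admissible choice $\lambda_n \geq C_1 p\xi n^{\tau}T^{-1/2}$, while the bias $\mathcal{B}$ contributes only a lower-order, constant-scale term controlled by Assumption \ref{assump_parameter}.

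Third, I would establish restricted strong convexity of the loss. Using the Kronecker identity $\mathcal{X}^{\T}\mathcal{X} = \sum_i (\bm{x}_i\bm{x}_i^{\T})\otimes(\bm{B}\bm{B}^{\T})$ together with Assumption \ref{ass_pdc} ($\lambda_{\min}(\Sigma_s)\geq\delta$) and a concentration bound for the empirical Gram matrix, I would show that on the cone of admissible errors $\tfrac{1}{nT}\|\mathcal{X}\Delta^{\T}\|_F^2 \geq \kappa\|\Delta\|_F^2$, where Assumption \ref{ass_pdc} supplies exactly the non-degeneracy needed for the restricted strong convexity condition flagged before Definition \ref{defn_rsc}. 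Tracking the curvature constant $\kappa$ through the Kronecker and basis-normalization factors is what produces the $c\,\operatorname{tr}(\Sigma_s)$ scaling appearing in the denominators of \eqref{eq_boundbound}.

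Finally, I would combine the pieces through the standard cone decomposition associated with the singular value decomposition of $\mathsf{M}^*$. With $\lambda_n$ chosen as above, the sign of the nuclear-norm difference forces $\Delta$ into the restricted set on which $\|\Delta\|_* \leq C\sqrt{r}\,\|\Delta\|_F$ (the relevant projection having rank at most $2r$). Substituting this together with the restricted strong convexity lower bound into the basic inequality yields a quadratic inequality of the schematic form $\kappa\|\Delta\|_F^2 \lesssim \lambda_n\sqrt{r}\,\|\Delta\|_F + \lambda_n B$, where $B$ is the constant-order bias/tail contribution. Solving this quadratic produces the two summands of \eqref{eq_boundbound}: the main term $\lambda_n\sqrt{r}/\kappa \asymp \sqrt{r}\,p\xi n^{\tau}/(c\,\operatorname{tr}(\Sigma_s)\sqrt{T})$ and the square-root term $\sqrt{\lambda_n B/\kappa} \asymp \sqrt{p\xi}\,n^{\tau/2}/(\sqrt{c\,\operatorname{tr}(\Sigma_s)}\,T^{1/4})$. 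Assumptions \ref{assum_lowrank} and \ref{assump_parameter} then ensure both terms are $o(1)$, delivering the consistency claimed in Theorem \ref{cor_main}.
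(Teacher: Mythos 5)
Your proposal is correct and follows essentially the same route as the paper: the authors also run the Negahban-et-al.\ regularized M-estimator program (their Theorem \ref{thm_main} is exactly your basic inequality plus cone decomposition, modified to carry the sieve-bias term $\mathcal{E}$), verify restricted strong convexity via the Kronecker structure $\bm{x}_i\otimes\bm{B}$ with covariance concentration and a Riemann-sum bound on $\tfrac{1}{T}\bm{B}\bm{B}^{\T}$, and control $\|\tfrac{1}{nT}\mathbf{E}\mathcal{X}_c^{\T}\|$ with a Rosenthal/Nagaev-type inequality for the physical dependence measure (Lemma \ref{lem_localconcen}), which is precisely the delicate step you flagged. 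The two summands of \eqref{eq_boundbound} arise from the same quadratic inequality you describe, with the bounded bias contribution $\mathcal{D}(\Delta)$ producing the square-root term.
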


{One thing to note here is that $T$ also diverges with the sample size $n$. Since the role of $\tau$ is to control the probability and can be arbitrary, we can obtain a consistent estimator for a reasonably large $T$. Specially, in the setting of (\ref{eq_discussion}), our estimator is consistent under Assumption \ref{assump_parameter}. }

\begin{remark}
In our theoretical development, we borrow the idea of the regularized M-estimator developed in \cite{negahban2012unified}. However, one main challenge of our proof is that we need to account for the approximation errors brought by truncation of the basis expansion in \eqref{eq_betaapprox}. 

We next provide some insights of the above results when we use either the trigonometric series or the orthogonal polynomials.  From Assumption \ref{ass_pdc}, we have $\operatorname{tr}(\Sigma_s)=O(s).$ Hence, by Assumption \ref{assum_lowrank}, the second term of the right-hand side of (\ref{eq_boundbound}) is of order $o(1).$ On one hand, if the second term of the right-hand side of (\ref{eq_boundbound}) dominates the first one,
\begin{equation}
\frac{\sqrt{r} p \xi n^{\tau}}{c \operatorname{tr}(\Sigma_s) \sqrt{T}}>\frac{\sqrt{p \xi } n^{\tau/2}}{ \sqrt{c \operatorname{tr}(\Sigma_s)} T^{1/4}},
\end{equation}
which implies $\sqrt{r} \sqrt{p} n^{\tau/2}>\sqrt{cs} T^{1/4}.$ If we further let the matrix be of exactly low rank and the functions $\bm{\beta}_j(\cdot)$ be infinitely differentiable such that $c$ can be chosen at an order of $\log T$,  then we obtain an upper bound for number of time points as 
\begin{equation*}
T \ll n^{2 \tau} \left( \frac{p}{s} \right)^2.
\end{equation*}
On the other hand, when the first term of the right-hand side dominates, we need to have that 
\begin{equation*}
\frac{\sqrt{p} n^{\tau/2}}{\sqrt{cs} T^{1/4}}=o(1), 
\end{equation*}
which basically requires that 
\begin{equation*}
T \gg n^{2 \tau} \left( \frac{p}{s} \right)^2.
\end{equation*}
In this sense, the choice of $T$ will not significantly affect the consistency of our estimators.  Since our assumptions are mild as explained in Section \ref{sec:modelassumption}, once $T$ and $n$ are reasonably large, we obtain a consistent estimator. 
\end{remark}

 Let $\widehat{\mathsf{M}}=(\widehat{\bm{M}}_1, \cdots, \widehat{\bm{M}}_s).$ Denote the estimator 
\begin{equation*}
    \widehat{\bm{\beta}}_j(t)=\widehat{\bm{M}}_j \bm{b}(t), \ 1 \leq j \leq s, t \in [0,1]. 
\end{equation*}
We now state the convergence result for the coefficient functions.
\begin{corollary}\label{coro_coro} Suppose that the assumptions of Theorem \ref{cor_main} hold. Then for $1 \leq j \leq s$ and some universal constant $C>0,$ with probability at least $1-C_q n^{-q \tau},$ we have
\begin{equation*}
\sup_t \|\bm{\beta}_j(t)-\widehat{\bm{\beta}}_j(t) \| \leq C \sqrt{c} \left( \frac{\sqrt{p \xi } n^{\tau/2}}{ \sqrt{c \operatorname{tr}(\Sigma_s)} T^{1/4}}+ \frac{\sqrt{r} p \xi n^{\tau}}{c \operatorname{tr}(\Sigma_s) \sqrt{T}} \right).   
\end{equation*}
\end{corollary}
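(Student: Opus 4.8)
The plan is to reduce Corollary \ref{coro_coro} to Theorem \ref{cor_main} by a triangle-inequality decomposition that separates the sieve truncation error from the estimation error of the coefficient matrix. Writing $\widetilde{\bm{\beta}}_j(t)=\bm{M}_j^*\bm{b}(t)$ for the best sieve reconstruction built from the true matrix, I would split
$$\bm{\beta}_j(t)-\widehat{\bm{\beta}}_j(t)=\bigl(\bm{\beta}_j(t)-\widetilde{\bm{\beta}}_j(t)\bigr)+\bigl(\bm{M}_j^*-\widehat{\bm{M}}_j\bigr)\bm{b}(t),$$
and bound the two pieces separately, uniformly in $t\in[0,1]$.

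For the approximation term, I would invoke the sieve bound \eqref{eq_betaexpan}: each of the $p$ coordinates of $\bm{\beta}_j(t)-\widetilde{\bm{\beta}}_j(t)$ is $O(c^{-d})$ uniformly in $t$, so that $\sup_t\|\bm{\beta}_j(t)-\widetilde{\bm{\beta}}_j(t)\|=O(\sqrt{p}\,c^{-d})$. Assumption \ref{assump_parameter}, through \eqref{eq_boundtwo}, is exactly what guarantees that this contribution is negligible relative to the statistical rate and hence can be absorbed into the constant $C$.

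For the estimation term, I would apply the Cauchy--Schwarz inequality coordinatewise to obtain $\|(\bm{M}_j^*-\widehat{\bm{M}}_j)\bm{b}(t)\|\le\|\bm{b}(t)\|\,\|\bm{M}_j^*-\widehat{\bm{M}}_j\|_F$, and then use the two elementary facts $\sup_t\|\bm{b}(t)\|\le\sqrt{c}\,\xi$ (immediate from the definition \eqref{eq_defnxisieve} of $\xi$) and $\|\bm{M}_j^*-\widehat{\bm{M}}_j\|_F\le\|\mathsf{M}^*-\widehat{\mathsf{M}}\|_F$, the latter holding because $\bm{M}_j$ is a column block of $\mathsf{M}$. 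Substituting the Frobenius bound \eqref{eq_boundbound} of Theorem \ref{cor_main} then produces the prefactor $\sqrt{c}$ (the residual $\xi=O(1)$ for the trigonometric and orthogonal polynomial bases being absorbed into the constant) multiplying the stated rate.

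Because every estimate above is uniform in $t$, taking the supremum is immediate, and the whole argument takes place on the single high-probability event of Theorem \ref{cor_main}, so the probability $1-C_q n^{-q\tau}$ transfers verbatim and no further probabilistic work is required. The only genuinely delicate step is the order comparison in the second paragraph: one must verify that the truncation term $\sqrt{p}\,c^{-d}$ is dominated by $\sqrt{c}$ times the slower of the two terms in \eqref{eq_boundbound}, and this is precisely where Assumption \ref{assump_parameter} is used. Everything else is routine norm bookkeeping.
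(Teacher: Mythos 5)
Your proposal is correct and, for the main estimation term, follows essentially the same route as the paper: the paper writes $\|\widehat{\bm{\beta}}_j(t)-\bm{\beta}_j(t)\|^2=\operatorname{tr}\bigl((\widehat{\bm{M}}_j-\bm{M}_j)(\widehat{\bm{M}}_j-\bm{M}_j)^\top\bm{b}(t)\bm{b}^\top(t)\bigr)$ and applies Lemma \ref{lem_tracebound} together with $\|\bm{b}(t)\bm{b}^\top(t)\|=\sum_{h=1}^c b_h^2(t)\leq c$, which is exactly your bound $\|(\widehat{\bm{M}}_j-\bm{M}_j)\bm{b}(t)\|\leq\|\bm{b}(t)\|\,\|\widehat{\bm{M}}_j-\bm{M}_j\|_F$ combined with $\|\widehat{\bm{M}}_j-\bm{M}_j\|_F\leq\|\widehat{\mathsf{M}}-\mathsf{M}^*\|_F$ and Theorem \ref{cor_main}, derived via a trace inequality instead of Cauchy--Schwarz. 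The one place you depart from the paper is the initial triangle-inequality split isolating the sieve truncation error $\bm{\beta}_j(t)-\bm{M}_j^*\bm{b}(t)=O(\sqrt{p}\,c^{-d})$: the paper's proof simply identifies $\bm{\beta}_j(t)$ with $\bm{M}_j\bm{b}(t)$ and never mentions this term, so your treatment is more careful on this point; you correctly flag that its domination by the statistical rate is the delicate step, and relying on Assumption \ref{assump_parameter} there is consistent with how the paper handles the truncation error elsewhere (e.g.\ in bounding $\mathcal{E}$). Note also that both your argument and the paper's implicitly use $\xi=O(1)$ when reducing $\sqrt{c}\,\xi$ (respectively $\sum_h b_h^2(t)\leq c\xi^2$) to the stated $\sqrt{c}$ prefactor, so you are no worse off than the original on that score.
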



Compared to Theorem \ref{cor_main},
we get an extra $\sqrt{c}$ factor in Corollary \ref{coro_coro} since our estimate involves the sieve basis functions. Similarly, we can obtain a consistent estimator when both $T$ and $n$ are reasonably large.

{
\begin{remark}\label{rem_nonzerosmean}
It is remarkable that in the high dimensional setting, it is not trivial to center the high-dimensional responses. However, in many applications, we can assume that there exists a time-varying mean function for each $Y_l(t), \ 1 \leq l \leq p.$ 
Specifically, we can assume that for some functions $m_i(\cdot) \in C^d([0,1])$ such that 
\begin{equation*}
\mathbb{E}(Y_l(t))=m_l(t), \ 1 \leq l \leq p. 
\end{equation*}
In this setting, we can rewrite our model (\ref{populationmodel}) as
\begin{equation*}
\bm{Y}(t)-\bm{m}(t)=\sum_{j=1}^s X_j \bm{\beta}_j(t)+\bm{\epsilon}(t),    
\end{equation*}
 where $\bm{m}(t)=(m_1(t),\cdots, m_p(t)).$ Moreover, if we set $\bm{\beta}_0(t)=\bm{m}(t)$ and $X_0=1,$ we can further write 
 \begin{equation*}
 \bm{Y}(t)=\sum_{j=0}^s X_j \bm{\beta}_j+\bm{\epsilon}(t).
 \end{equation*}
 
 Since $m_l(t) \in C^d([0,1]), l=1,2,\cdots, p,$ we can expand them on a set of basis functions  and $m_l(t) \approx \sum_{h=1}^c \kappa_{lh} b_h(t).$ Therefore, we can apply our current methodology to estimate the coefficients. 
\end{remark}

}

\section{Simulations}\label{simulation}
In the section, we perform simulation studies to evaluate our method. We consider a set of Fourier basis
 \[
    b_j(t)=\left\{
                \begin{array}{ll}
                  1, & \textrm{if } j = 1;\\
                  \sqrt{2} \sin ( \pi j t), & \textrm{if } j \textrm{ is even;}\\
                  \sqrt{2} \cos ( \pi (j-1) t), &  \textrm{Otherwise}.
                \end{array}
              \right.
  \]

The $\bm{x}_i$ is generated from a multivariate normal distribution with mean zero and covariance $\Sigma$ with $j_1j_2$th entry $\Sigma_{j_1j_2}=0.5^{|j_1-j_2|}$ for $ 1\leq j_1\leq j_2\leq s$. The $\bm{X}_i=\bm{x}_i \otimes \bm{B}$, where $\bm{B} \in \mathbb{R}^{c \times T}=(\bm{b}(t_1),\cdots, \bm{b}(t_T))$ with $\bm{b}(t)=(b_1(t),\ldots, b_c(t))^{\T} \in \mathbb{R}^c$. Here, we set $ t_k=\frac{k-1}{T}$ for $ k=1, \ldots, T$.  

The response is generated from
\begin{equation*}
\mathbf{Y}_i=\mathsf{M}\mathbf{X}_i+\nu \cdot \mathbf{E}_i, \ i=1,2,\cdots,n,
\end{equation*}
where $\nu$ is a constant, and each row of $\mathbf{E}_i \in \mathbb{R}^{p \times T}$ is a time series with an autoregressive structure. In particular, let $\mathbf{E}_{ij} \in \mathbb{R}^{T}$ be the $j$th row of $\mathbf{E}_{i}$ and $E_{ijk}=\mathbf{E}_{ij}(\frac{k}{T})$ as the $jk$th entry of $\mathbf{E}_i$ for $j = 1, \ldots, p$ and $k=1, \ldots, T$. We set $\mathbf{E}_{ij}(\frac{k}{T}) = 0.3 \mathbf{E}_{ij}(\frac{k-1}{T}) + \bm\varepsilon_{i}(\frac{k}{T})$, with $\mathbf{E}_{ij}(0) = 0$ for all $j = 1, \ldots, p$, where $\bm\varepsilon_i(\frac{k}{T})$
 is a series of i.i.d Gaussian random variables with mean $0$ and variance $1$. We consider $ (n, p, T, c, s)=(100, 32, 256, 4, 8)$. In this case, the matrix $\mathsf{M}\in \mathbb{R}^{32\times 32}$, and we consider three different shapes for $\mathsf{M}$: a square shape, a T shape and a cross shape, shown in Figure \ref{sim1snr1figure} (a) (d) (g). The true coefficient functions are generated from $\beta_{jl}(t)=\sum_{h=1}^c M_{jl,h} b_h(t)$.  
 
We define the signal to noise ratio (SNR) as 
$$\mathrm{SNR} = 
\frac{\sum_{i=1}^n \text{Tr}(\mathbf{X}_i^* \mathsf{M}^* \mathsf{M} \mathbf{X}_i)}{\sum_{i=1}^n \nu^2 \text{Tr}(\mathbf{E}_i^* \mathbf{E}_i)}.$$
We consider three cases $\mathrm{SNR}=1, 5, 10 $, where we change $ \nu $ to obtain different SNRs. 


To fit the model, we consider two sets of sieve basis. One is the same set of Fourier basis, and we evaluate the performance of our method if one can choose the basis correctly. In practice, we may not know the true basis, therefore, we also fit a different basis when applying our procedure, to reflect the scenarios where the true underlying basis does not align with the fitted basis. In the simulation, we consider the Chebyshev basis of second kind. In particular,
the basis is defined as
\[
    b_j(t)=\left\{
                \begin{array}{ll}
                  2\cdot (1-[2(t-1/2)]^2)^{1/4}/\sqrt{\pi}, & \textrm{if } j = 1;\\
                  2 t b_1(t), & \textrm{if } j = 2;\\
                  2t b_{j-1}(t) - b_{j-2}(t), &  \textrm{Otherwise}.
                \end{array}
              \right.
  \]
  
For each case, we report the mean integrated squared errors (MISEs) of the estimates of $\bm\beta_j(\cdot) \in \mathbb{R}^p$, $j = 1, \ldots, s$ defined as
\begin{equation*}
\mathrm{MISE}_j := \frac{1}{p}\sum_{l=1}^{p}\int_0^1 \left( \beta_{jl}(t) - \hat{\beta}_{jl}(t) \right)^2 dt.
\end{equation*}
We also compare with the ordinary least squares, where we set $\lambda=0 $ in \eqref{eq_originalmodel} and solve the optimization problem. All the results are based on 100 Monte Carlo run.


We include the cases (SNR = 5), where the true basis is the Fourier basis and the fitted basis is also Fourier basis in Table \ref{sim1:snr5:mise:basis22:tab}, and the true basis is the Fourier basis with the fitted basis Chebyshev of second kind in Table \ref{sim1:snr5:mise:basis21:tab}. For SNR = 1 and SNR = 10, the results are included in Tables \ref{sim1:snr1:mise:basis22:tab}  
to \ref{sim1:snr10:mise:basis21:tab} in the supplementary material. 
In particular, the MISEs for 8 functional slope estimates for our proposed methods are smaller than those for OLS methods. As expected, when the true basis is Fourier, fitting using Fourier basis results in better estimation accuracy (smaller MISEs) compared to using Chebyshev basis.
We have also plotted the estimated $\widehat{\mathsf{M}}$ from one randomly selected Monte Carlo run in Figure \ref{sim1snr1figure} for SNR = 1 with Fourier basis fit. From the results, we can see that our estimates can achieve much better estimation accuracy for those coefficient functions compared with OLS.

In addition, we also perform a simulation study where the true basis is the Chebyshev basis of second kind defined in previous paragraph. The results of fitting our method using both Chebyshev basis of second kind and Fourier basis are included in Tables 
\ref{sim1:snr1:mise:basis11:tab} 
to \ref{sim1:snr10:mise:basis12:tab} in the supplementary material. The findings are similar.

When the fitted basis and the true basis align with each other, we also report the average number of basis selected using using 5-fold cross validation in Table \ref{sim1:c:tab}. As shown from the results, if we know the true basis, cross validation can select the right number of basis for most scenarios. When SNR increases, the average number of basis selected gets closer to the truth ($c=4$).

{To investigate how the truncation number $ c $ affects the performance of the proposed method, we add a simulation study, where 4 Fourier basis is used to generate the data, but we fit our model by setting $ c=6 $, and the $ \lambda $ is still chosen by 5-fold cross validation. Compared with the result where $ (c, \lambda) $ are chosen by 5-fold cross validation, we find the rank of estimated  $\widehat{\mathsf{M}}$ becomes smaller. 
The results are as shown in Tables \ref{sim1v1c6:snr1:mise:basis22:tab}, \ref{sim1v1c6:snr5:mise:basis22:tab} and \ref{sim1v1c6:snr10:mise:basis22:tab} in the supplementary material. Taking SNR = 5 for example (Table \ref{sim1v1c6:snr5:mise:basis22:tab}), the average ranks of estimated $\widehat{\mathsf{M}}$ when $c=6$ ($1.00$, $2.00$ and $2.00$) are smaller than the average ranks of $\widehat{\mathsf{M}}$ when $c$ is determined by cross validation ($9.79$, $13.41$ and $13.56$). However, the MISEs of the estimated coefficient functions using $c=6$ are actually greater than the MISEs of the estimated functional slopes from using cross validation, which shows the importance of using cross-validation to select the truncation number $ c $. 

In addition, we perform a simulation study, where a set of equivalent basis (Chebyshev2) is used in fitting. The results are included in Table \ref{sim1v1c6:mise:basis21:tab} in the supplementary material. We find that the average ranks of estimated $\widehat{\mathsf{M}}$ are smaller than the average ranks of $\widehat{\mathsf{M}}$ when the true basis (Fourier) is used. Taking SNR = 5 (Table \ref{sim1v1c6:mise:basis21:tab}) for example, the average ranks of estimated $\widehat{\mathsf{M}}$ when the equivalent basis is used ($1.00$, $2.00$ and $2.00$) are smaller than the average ranks of $\widehat{\mathsf{M}}$ when $c$ is determined by cross validation ($9.79$, $13.41$ and $13.56$). We have found that the MISEs obtained by fitting using the Chebyshev2 basis is still reasonably small, which shows the robustness of proposed method when an equivalent basis is used. 

To mimic the case where the coefficient functions lie in an infinite dimensional space, 
we add an additional simulation study with a modified setting $ (n, p, T, c, s)=(100, 32, 256,$ $ 50, 4)$, 
where the coefficient function $\beta_{il}(t)$'s are generated from $50$ basis functions such that $\beta_{jl}(t)=\sum_{h=1}^8 M_{jl,h}  \omega_h b_h(t) + \sum_{h=9}^{50} w_h b_h(t)$, where $\omega_1 = 1, \omega_2 = 0.8, \omega_3 = 0.6, \omega_4 = 0.5$, and $\omega_h = 8(h-2)^{-4}$ for $h \geq 5$. 
We consider three cases $\mathrm{SNR} =1, 5, 10$. As shown in Table \ref{sim1v7:c:tab} of the supplementary material, the number of basis selected by cross validation is much smaller than $50$ due to the decay of $\omega_h$s. Taking SNR = 5 for example, when the Fourier basis is used, the average number of basis selected for the T shape is $4.790$ with a standard error $0.041$. We include the cases (SNR = 5), where the true basis is the Fourier basis and fit our method using the Fourier basis in Table \ref{sim1v7:snr5:mise:basis22:tab} and the Chebyshev basis of second kind in Table \ref{sim1v7:snr5:mise:basis21:tab} in the supplementary material. For SNR = 1 and SNR = 10, the results are included in Tables 
\ref{sim1v7:snr1:mise:basis21:tab},
\ref{sim1v7:snr10:mise:basis21:tab},
 \ref{sim1v7:snr1:mise:basis22:tab} 
and \ref{sim1v7:snr10:mise:basis22:tab}
 in the supplementary material. 
In particular, the MISEs for 4 functional slope estimates for our proposed methods are smaller than those for OLS methods. As expected, when the true basis is Fourier, fitting using Fourier basis results in better estimation accuracy (smaller MISEs) compared with using Chebyshev basis.

In addition, we also perform simulation studies where the true basis is the Chebyshev basis of second kind defined in previous paragraph, and we fit our method using both Chebyshev basis of second kind and Fourier basis. The results are included in Tables 
\ref{sim1v7:snr1:mise:basis11:tab} 
to \ref{sim1v7:snr10:mise:basis12:tab} in the supplementary material. The findings are similar.

\section{Real data applications}\label{realdata}
We apply our method to the cortical surface motor task related fMRI data from Human Connectome Project (HCP) Dataset (https://www.humanconnectome.org/). We use the $900$ Subjects release that includes behavioral and 3T MR imaging data from $970$ healthy adult participants collected in 2012-spring 2015. We focus on the $845$ subjects having the cortical surface motor task related fMRI data. This task was adapted from the one developed by Buckner and colleagues \cite{buckner2011organization,yeo2011organization}. 

In the motor task, participants are presented with visual cues that ask them to either tap their left or right fingers, or squeeze their left or right toes, or move their tongue to map motor areas. Each block of a movement type lasted 12 seconds (10 movements), and is preceded by a 3 second cue. In each of the two runs, there are 13 blocks, with 2 of tongue movements, 4 of hand movements (2 right and 2 left), and 4 of foot movements (2 right and 2 left). In addition, there are 3 15-second fixation blocks per run. This task contains the following events, each of which is computed against the fixation baseline. For each subject, number of frames per run of the motor task is $284$, with run duration of $3.57$ minutes \cite{wu20171200}. {For each subject, two motor task-related fMRI scans are available: one run was acquired with right-to-left phase encoding, and a second run with left-to-right phase encoding. In this paper, we use the left-to-right phase encoding scan for each subject.}

{We use the ``Desikan-Killiany'' atlas \cite{Desikan2006} to divide the brain into $68$ regions of interest (ROIs). For each subject $i$, we average the blood oxygenation level dependent (BOLD) time series of all pixels in each ROI, which results in a functional curve $ y_{il}(t) $ for $ 1\leq i\leq n $ and $1\leq l\leq p$.} For each curve $ y_{il}(t) $, we do not observe their full trajectory, but instead realization of the curve on $284$ equally space time points: $ t_k=2.16*(k-1)/283$ minutes ($1\leq k\leq 284$). 
We consider $ s=4$ motor instrument covariates measured using tests adapted from the American Thoracic Society's 6-minute walk test \cite{enright2003six}, 
the 9-hole pegboard test \cite{wang2015dexterity}, and 
the American Society of Hand Therapy's grip strength test \cite{macdermid1994interrater}. In the test adapted from American Thoracic Society's 6-minute walk test, the sub-maximal cardiovascular endurance is measured by recording the distance that the participant is able to walk on a 50-foot course in 2 minutes and the time that the participant is able to walk a 4-meter distance at their usual pace. 
In the 9-hole pegboard test, the manual dexterity is measured by the time required for the participant to accurately place and remove 9 plastic pegs into a plastic pegboard.
In the test adapted from American Society of Hand Therapy's grip strength test, participants are seated in a chair with their feet touching the ground. With the elbow bent to 90 degrees and the arm against the trunk, wrist at neutral, participants squeeze the Jamar Plus Digital dynamometer as hard as they can for a count of three. The dynamometer records a digital reading of force in pounds.
The 4 covariates we consider are ``Endurance-AgeAdj",    ``GaitSpeed-Comp",   ``Dexterity-AgeAdj",  and  ``Strength-AgeAdj", where ``GaitSpeed-Comp" is the distance walked in 2 minute, and ``Endurance-AgeAdj", ``Dexterity-AgeAdj'' and ``Strength-AgeAdj" are sub-maximal cardiovascular endurance, manual dexterity and grip strength respectively, adjusted by the participant's age.

To implement our method, we first standardize the functional responses $ y_{il}(t) $'s and centre the covariates $ x_{ij}$'s. 
We apply our method by fitting the model using Fourier basis and select the optimal regularization parameter and truncation number by five-fold cross validation. {We have selected $9$ Fourier basis functions, and the rank of $\widehat{\mathsf{M}}$ is $4$. }


{We have also obtained the OLS estimate $\widehat{\mathsf{M}}_{\mathrm{OLS}}$, i.e. setting $ \lambda=0 $ in equation \eqref{eq_originalmodel}. We plotted the first 10 singular values of the $\widehat{\mathsf{M}}$ (red solid) and $\widehat{\mathsf{M}}_{\mathrm{OLS}}$ (black dashed) in Figure \ref{dataPlotsScree}. Inspecting the figure reveals that the first $4$ singular value of $\widehat{\mathsf{M}}_{\mathrm{OLS}}$ dominate the remaining ones, which verifies the low-rank assumption in this paper. }

Previous literature \cite{martino2011intrasurgical} suggested the left and right superior frontal regions are strongly associated with motor function. Therefore, 
We plot the estimated coefficient functions $\{\hat{\beta}_j(t), 1\leq j\leq 4\}$ corresponding to the left superior frontal regions in Figure \ref{dataPlots1}, and the estimated coefficient functions $\{\hat{\beta}_j(t), 1\leq j\leq 4\}$ corresponding to the right superior frontal region in Figure \ref{dataPlots2}.  From the figures, we can see the estimated coefficient functions $\hat{\beta}_j(t)$ for right and left superior frontal regions have similar patterns for each $ 1\leq j\le 4$. This is explained by the symmetry of the brain.

{To summarize the result of the performance of all $68$ regions, we plot the standardized $\hat{\beta}_1(t)$ for all $68$ ROIs in Figure \ref{dataPlots3} (a). Here the standardized $\hat{\beta}_1(t)$ is defined as $\hat{\beta}_{j,\mathrm{stand}}(t) = \{\hat{\beta}_j(t) - \int_0^1 \hat{\beta}_j(s) ds\} / [\int_0^1 \{\hat{\beta}_j(u) - \int_0^1 \hat{\beta}_j(s)ds\}^2 du]^{1/2}$ for $1 \leq j \leq 4$.  Similar plots for standardized versions of $\hat{\beta}_2(t)$, $\hat{\beta}_3(t)$ and $\hat{\beta}_4(t)$ are included in Figure \ref{dataPlots3} (b)-(d), respectively. }

{We have also tested the non-stationary assumption of the error processes in real data application. In particular, we apply the Kwiatkowski-Phillips-Schmidt-Shin (KPSS) tests \cite{kwiatkowski1992testing} on the fitted residual time series for $845$ individuals, which yield $ 845\times 68= 57,640$ error processes. We find that $96.7\%$ of them are not (trend) stationary with significance level $0.05$. This indicates that most of the error processes in the application are not stationary.}

}

\section{Discussion}\label{discussion}
In this paper, we propose a multivariate functional responses low rank regression model with possible high dimensional functional responses and scalar covariates. To estimate the nonparametric coefficient functions, our method employs the state-of-art sieve regression. By imposing a low-rank structure of the coefficient matrix, our proposal can obtain a global fit of the coefficient estimates. We have shown that our method performs well in both simulation and the HCP fMRI data application. 

There are a number of important directions for future work. First, we assume the covariates affect the responses linearly with only main effects. Further investigation is warranted to extend the proposed approach to the case with interaction effects and/or nonlinear effects. Second, it is an interesting topic to further develop inference procedure for our approach, which can characterize the uncertainty of estimates. One may consider using either bootstrap or debiased approaches to construct simultaneous confidence bands for the coefficient curves.

\section*{Acknowledgement}
The authors would like to thank the editor, associate editor, and the referees for their constructive comments, which have substantially improved the paper.

\bibliographystyle{abbrv}
\bibliographystyle{plain}
\bibliography{refsg}

\begin{center}
{\Large Supplementary material for \\
Multivariate functional responses low rank regression with an application to brain imaging data}
\end{center}
This supplementary material contains technical proofs, numerical simulation and real data application results.

\begin{appendix}

\renewcommand{\theequation}{S.\arabic{equation}}
\renewcommand{\thetable}{S.\arabic{table}}
\renewcommand{\thefigure}{S.\arabic{figure}}
\renewcommand{\thesection}{S.\arabic{section}}
\renewcommand{\thelemma}{S.\arabic{lemma}}
\renewcommand{\thetheorem}{S.\arabic{theorem}}

\section{Some preliminaries}
We write (\ref{eq_betaexpan}) as
\begin{equation}\label{eq_mentryall}
\beta_{jl}(t)=\sum_{h=1}^c M_{jl,h} b_h(t)+\sum_{h=c+1}^{\infty} M_{jl,h} b_h(t),
\end{equation}
where $\sum_{h=c+1}^{\infty} M_{jl,h}b_h(t)$ corresponds to the error $O(c^{-d})$ in (\ref{eq_betaexpan}). Similar to the definition of $\mathsf{M},$ we denote $\mathsf{M}^{\dagger}$ as the collection of the entries corresponding to the second term of the right-hand side of (\ref{eq_mentryall}) and
$\overline{\mathsf{M}}$ as the matrix containing all the entries $M_{jl,h}, j=1,2,\cdots,s,\ h=1,2,\cdots,$ in (\ref{eq_mentryall}). Strictly speaking, $\overline{\mathsf{M}}$ is not a matrix, but for notation convenience, we denote it as a matrix of dimension $ p\times s\infty$. Therefore, we have that $\overline{\mathsf{M}}=(\mathsf{M}, \mathsf{M}^{\dagger}).$ 

In light of (\ref{eq_model}) with (\ref{eq_mentryall}), for the sequence of observation pairs $(\mathsf{X}_i, \bm{Y}_i), i=1,2,\cdots, n,$  we denote the loss function for $\overline{\mathsf{M}}$ as 
\begin{equation}\label{eq_defineloss1}
\mathcal{L}(\overline{\mathsf{M}}; (\mathsf{X}_i, \bm{Y}_i), i=1,2,\cdots,n)=\frac{1}{nT}\sum_{i=1}^n\| \bm{Y}_i-\overline{\mathsf{M}} \mathsf{X}_i  \|_F^2,
\end{equation}
where $\overline{\mathsf{M}} \mathsf{X}_i=\mathsf{M} \bm{X}_i+\mathsf{M}^{\dagger} \bm{X}_i^{\dagger}$. Recall $\bm{X}_i$ is defined in (\ref{eq_xgenerate}),  $\bm{X}_i^\dagger$ is defined in a similar fashion by using the basis $\{b_j(\cdot)\}_{j>c}$ and $\mathsf{X}_i$ is defined accordingly. Ideally, $\mathsf{M}^{\dagger} \bm{X}_i^{\dagger}$ corresponds to the $o(1)$ part in (\ref{eq_finalmodel1}).
Therefore, the true value $\overline{\mathsf{M}}^*$ is defined as 
\begin{equation*}
\overline{\mathsf{M}}^*=\underset{\overline{\mathsf{M}}}{\text{arg min}} \ \mathbb{E} \mathcal{L}(\overline{\mathsf{M}}; (\mathsf{X}_i, \bm{Y}_i), i=1,2,\cdots,n).
\end{equation*}

As discussed in Section \ref{sec:modelassumption},  we want to estimate $\mathsf{M}$ and hence we can treat $\mathsf{M}^\dagger$ as nuisance parameters. In this sense, the true value of $\mathsf{M}$ is defined as  
\begin{equation}\label{eq_mstar1}
\mathsf{M}^*=\underset{\mathsf{M}}{\text{arg min}} \ \mathbb{E} \mathcal{L}(\mathsf{M}; (\mathsf{X}_i, \bm{Y}_i), i=1,2,\cdots,n| (\mathsf{M}^{\dagger})^*),
\end{equation}
where
\begin{equation*}
\mathcal{L}(\mathsf{M}; (\mathsf{X}_i, \bm{Y}_i), i=1,2,\cdots,n| (\mathsf{M}^{\dagger})^*)=\frac{1}{nT} \sum_{i=1}^n \| \bm{Y}_i- \mathsf{M} \bm{X}_i-(\mathsf{M}^\dagger)^* \bm{X}_i^\dagger \|_F^2.
\end{equation*}
To ease the notation, we introduce a diagonal block matrix $\bm{M} \in \mathbb{R}^{np \times nsc}$ with $n$ blocks and each diagonal block is $\mathsf{M}$ and $\mathcal{X} \in \mathbb{R}^{ns \infty \times T}, \mathcal{Y} \in \mathbb{R}^{np \times T}$ contain the sequences of $\mathsf{X}_i$ and $\bm{Y}_i, i=1,2,\cdots,n$ respectively. Similarly, we can define $\bm{M}^\dagger$ and $\overline{\bm{M}}.$ As a consequence, we can rewrite  (\ref{eq_mstar1}) as 
\begin{equation}\label{eq_mstar}
\mathsf{M}^*=\underset{\mathsf{M}}{\text{arg min}} \ \mathbb{E} \mathcal{L}(\mathsf{M}; \mathcal{X}, \mathcal{Y}| (\mathsf{M}^{\dagger})^*),
\end{equation}
where
\begin{equation*}
\mathcal{L}(\mathsf{M}; \mathcal{X}, \mathcal{Y}| (\mathsf{M}^{\dagger})^*)=\frac{1}{nT}\|\mathcal{Y}-\bm{M} \mathcal{X}_c-\bm{M}^\dagger \mathcal{X}_c^\dagger \|_F^2,
\end{equation*}
with $\mathcal{X}_c \in \mathbb{R}^{nsc \times T}$ contains the matrices $\bm{X}_i, i=1,2,\cdots, n$ and $\mathcal{X}_c$ is defined similarly.

Next, we provide an estimate of $\mathsf{M}^*$ denoted by $\widehat{\mathsf{M}}.$ For any given regularizer $\mathcal{R}$ and regularization penalty $\lambda_n$, let
\begin{equation}\label{eq_mhat}
\widehat{\mathsf{M}}=\underset{\mathsf{M}}{\text{arg min}} \left[\mathcal{L}_1(\mathsf{M};\mathcal{X}, \mathcal{Y})+\lambda_n \mathcal{R}(\mathsf{M}) \right],
\end{equation}  
where $\mathcal{L}_1$ is an approximate loss function for $\mathcal{L}$ and defined as 
\begin{equation}\label{eq_definel1}
\mathcal{L}_1(\mathsf{M}; \mathcal{X}, \mathcal{Y})=\frac{1}{nT}\|\mathcal{Y}-\bm{M} \mathcal{X}_c \|_F^2,
\end{equation}
$\mathcal{R}$ is the nuclear norm for the rectangular matrix.
Our goal is to derive a bound for $\| \mathsf{M}-\widehat{\mathsf{M}} \|_F.$ We state such results in the following subsection. 

We start by decomposing the loss function $\mathcal{L}(\mathsf{M};\mathcal{X}, \mathcal{Y}|(\mathsf{M}^\dagger)^*).$ Note that 
\begin{align}\label{eq_decompose}
\mathcal{L}(\mathsf{M};\mathcal{X}, \mathcal{Y}|(\mathsf{M}^\dagger)^*)=\mathcal{L}_1(\mathsf{M}; \mathcal{X}, \mathcal{Y})&-2\frac{1}{nT}\operatorname{tr} \left((\mathcal{Y}-\bm{M} \mathcal{X}_c)[(\bm{M}^\dagger)^* \mathcal{X}_c^{\dagger}]^{\T} \right) \nonumber \\
&+\frac{1}{nT}\| (\bm{M}^{\dagger})^* \mathcal{X}_c^\dagger \|_F^2.
\end{align}
First of all, since $(\bm{M}^\dagger)^*$ is a nuisance parameter , the third term on the right-hand side of (\ref{eq_decompose}) can be regarded as a constant term with respect to $\mathsf{M}$. Hence it suffices to minimize $\mathcal{L}_1(\mathsf{M}; \mathcal{X}, \mathcal{Y})-\frac{2}{nT}\operatorname{tr} \left((\mathcal{Y}-\bm{M} \mathcal{X}_c)[(\bm{M}^\dagger)^* \mathcal{X}_c^\dagger]^{\T} \right).$ Denote
\begin{equation}\label{eq_defnmathcale}
\mathcal{E}(\mathsf{M})=\frac{2}{nT}\operatorname{tr} \left((\mathcal{Y}-\bm{M} \mathcal{X}_c)\left[(\bm{M}^\dagger)^*\mathcal{X}_c^\dagger \right]^{\T} \right).
\end{equation}
In order to state our results and make it clear how our work differs from \cite{negahban2012unified}, we follow the notation of \cite{negahban2012unified} and let $\theta \equiv \mathsf{M}.$  Based on the above discussion, in view of the definition of $\widehat{\bm{M}}$ in (\ref{eq_mhat}), it suffices to consider the following optimization problem
\begin{equation}\label{eq_finalmodel}
\widehat{\theta}=\underset{\theta}{\text{arg min}} \left[\mathcal{L}(\theta;\mathcal{X}, \mathcal{Y}|(\mathsf{M}^\dagger)^*)+\lambda_n \mathcal{R}(\theta)+\mathcal{E}(\theta)  \right].
\end{equation}


We introduce some notation and assumptions, which are also used in \cite{negahban2012unified}. Let $\mathcal{M}$ be the model subspace to capture the constraints; for instance,  the subspace of low-rank matrices under Assumption \ref{assum_lowrank} in our problem. Let $\overline{\mathcal{M}}$ be the completion of $\mathcal{M}$ and $\overline{\mathcal{M}}^\perp$ be the orthogonal complement of $\overline{\mathcal{M}}.$ It is remarkable that $\overline{\mathcal{M}}^{\perp}$ is referred to as the \emph{perturbation subspace}, representing deviations away from the model subspace $\mathcal{M}.$     

We need the following definition, taken from Definition 1 in \cite{negahban2012unified}. 

\begin{definition}[Decomposability of $\mathcal{R}$] Given a pair of subspaces $\mathcal{M} \subseteq \overline{\mathcal{M}},$ a norm-based regularizer $\mathcal{R}$ is decomposable with respect to $(\mathcal{M},\overline{\mathcal{M}}^\perp)$ if 
\begin{equation*}
\mathcal{R}(\theta+\gamma)=\mathcal{R}(\theta)+\mathcal{R}(\gamma),
\end{equation*}
for all $\theta \in \mathcal{M}$ and $\gamma \in \overline{\mathcal{M}}^\perp$.  
\end{definition}


It has been shown in Example 3 of \cite{negahban2012unified} that the nuclear norm is decomposable with respect to appropriately chosen subspaces (see equations (13a) and (13b) of \cite{negahban2012unified}).

We then introduce the \emph{restricted strong convexity} (RSC) condition, which is taken from Definition 2 of \cite{negahban2012unified}. Denote the error of Taylor series of $\mathcal{L}$ at $\theta^*$ as 
\begin{equation*}
\delta \mathcal{L}(\Delta, \theta^*):=\mathcal{L}(\theta^*+\Delta)-\mathcal{L}(\theta^*)-\langle \nabla \mathcal{L}(\theta^*), \Delta \rangle.
\end{equation*}
\begin{definition}[Restricted Strong Convexity]\label{defn_rsc} The loss function satisfies a RSC condition with curvature $\kappa_{\mathcal{L}}>0$ and tolerance function $\tau_{\mathcal{L}}$ if 
\begin{equation*}
\delta \mathcal{L}(\Delta, \theta^*) \geq \kappa_{\mathcal{L}} \| \Delta \|^2-\tau^2_{\mathcal{L}}(\theta^*),
\end{equation*}
for all $\Delta \in \mathbb{C}$ defined in (\ref{set_one}) or (\ref{set_two}). 
\end{definition} 

Finally, we introduce the \emph{subspace compatibility constant} to control $\mathcal{R}(\cdot)$ (Definition 3 in \cite{negahban2012unified}). 
\begin{definition} [Subspace compatibility constant]\label{defn_subspacecompati} For any subspace $\mathcal{M}$ of $\mathbb{R}^p,$ the subspace compatibility constant with respect to the pair $(\mathcal{R}, \| \cdot \|)$ is given by 
\begin{equation*}
\Psi(\mathcal{M}):=\sup_{\bm{u} \in \mathcal{M} \backslash \{ \bm{0}\}} \frac{\mathcal{R}(\bm{u})}{\| \bm{u}\|}.
\end{equation*}
\end{definition}


We are ready to state our main results. The following result deals with general $M$-estimator of the form  (\ref{eq_finalmodel}). We define the projection operator as follow
\begin{equation*}
\Pi_{\mathcal{M}}(\bm{u}):=\underset{\bm{v} \in \mathcal{M}}{\text{arg min}} \ \| \bm{u}-\bm{v} \|,
\end{equation*}
with the projection $\Pi_{\mathcal{M}^\perp}$ defined in an analogous way.  For simplicity, we use the following shorthand notation $\bm{u}_{\mathcal{M}}=\Pi_{\mathcal{M}}(\bm{u})$ and $\bm{u}_{\mathcal{M}^\perp}=\Pi_{\mathcal{M}^\perp}(\bm{u}).$
\begin{theorem}
\label{thm_main} Suppose that the loss function $\mathcal{L}$ is convex and differentiable, and satisfies the RSC condition in Definition \ref{defn_rsc} with curvature $\kappa_{\mathcal{L}}$ and tolerance $\tau_{\mathcal{L}}.$ We also assume that the regularizer $\mathcal{R}$ is a norm and is decomposable with respect to the subspace pair $(\mathcal{M}, \overline{\mathcal{M}}^\perp),$ where $\mathcal{M} \subset \overline{\mathcal{M}}.$ Denote 
\begin{eqnarray}\label{eq_defn_bound}
&&\varsigma (\kappa,\tau, \upsilon) \equiv \varsigma(\lambda_n, \Psi, \kappa, \tau, \theta^*, \upsilon) \nonumber\\
&:=& 9 \frac{\lambda_n^2}{\kappa^2} \Psi^2(\overline{\mathcal{M}})+\frac{\lambda_n}{\kappa} \{2 \tau^2(\theta^*)+4 \mathcal{R}(\theta^*_{\mathcal{M}^\perp})+2\upsilon \}. 
\end{eqnarray}
 Suppose that there exists some linear function $\mathcal{D}$ in $\Delta$ and independent of $\theta^*$ such that 
\begin{equation*}
\mathcal{D}(\Delta)=\mathcal{E}(\theta^*+\Delta)-\mathcal{E}(\theta^*).
\end{equation*}
Furthermore, let $\epsilon>0$ such that
\begin{equation}\label{eq_epsilon}
\sup_{\Delta} | \mathcal{D}(\Delta) | \leq \epsilon.
\end{equation}
If the strictly positive regularization constant satisfies $\lambda_n \geq 2 \mathcal{R}^* (\nabla \mathcal{L}(\theta^*)),$ when conditional on the observation $(\mathcal{X}, \mathcal{Y})$ and $n$ is large enough, we have that 
\begin{equation}\label{eq_thm_1}
\| \widehat{\theta}_{\lambda_n}-\theta^* \|^2 \leq C\varsigma(\kappa_{\mathcal{L}}, \tau_{\mathcal{L}},2\epsilon),
\end{equation}
where the error norm is the same as $\mathcal{L}(\cdot)$ and $C>0$ is some universal constant.
\end{theorem}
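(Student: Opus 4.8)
The plan is to follow the unified treatment of regularized $M$-estimators in \cite{negahban2012unified}, the single new ingredient being the nuisance term $\mathcal{E}(\theta)$ produced by truncating the sieve expansion. Throughout I write $\Delta = \widehat{\theta} - \theta^*$ for the error and split it as $\Delta = \Delta_{\overline{\mathcal{M}}} + \Delta_{\overline{\mathcal{M}}^\perp}$ using the projections $\Pi_{\overline{\mathcal{M}}}$ and $\Pi_{\overline{\mathcal{M}}^\perp}$. The whole argument is deterministic once we condition on $(\mathcal{X},\mathcal{Y})$ and assume $n$ large enough, so that both the RSC condition of Definition \ref{defn_rsc} and the prescribed choice $\lambda_n \geq 2\mathcal{R}^*(\nabla\mathcal{L}(\theta^*))$ are in force.

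First I would use the optimality of $\widehat{\theta}$ in (\ref{eq_finalmodel}). Comparing the objective at $\widehat{\theta}$ and $\theta^*$ and writing $\mathcal{E}(\widehat{\theta}) - \mathcal{E}(\theta^*) = \mathcal{D}(\Delta)$ gives the basic inequality
\[
\mathcal{L}(\widehat{\theta}) - \mathcal{L}(\theta^*) + \mathcal{D}(\Delta) \leq \lambda_n\{\mathcal{R}(\theta^*) - \mathcal{R}(\widehat{\theta})\}.
\]
The crucial point is that, since $\mathcal{D}$ is \emph{linear} and uniformly bounded by $\epsilon$ through (\ref{eq_epsilon}), the term $\mathcal{D}(\Delta)$ acts as a deterministic perturbation of magnitude at most $\epsilon$ that is insensitive to the direction of $\Delta$. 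Replacing $\mathcal{L}(\widehat{\theta}) - \mathcal{L}(\theta^*)$ by $\langle \nabla\mathcal{L}(\theta^*),\Delta\rangle$ via convexity, applying the dual-norm bound $|\langle\nabla\mathcal{L}(\theta^*),\Delta\rangle| \leq \mathcal{R}^*(\nabla\mathcal{L}(\theta^*))\,\mathcal{R}(\Delta)$ together with $\lambda_n \geq 2\mathcal{R}^*(\nabla\mathcal{L}(\theta^*))$, and then invoking decomposability of $\mathcal{R}$ and the triangle inequality, I obtain the membership of $\Delta$ in the enlarged cone
\[
\mathcal{R}(\Delta_{\overline{\mathcal{M}}^\perp}) \leq 3\mathcal{R}(\Delta_{\overline{\mathcal{M}}}) + 4\mathcal{R}(\theta^*_{\mathcal{M}^\perp}) + \frac{2\epsilon}{\lambda_n},
\]
which is exactly the constraint set on which RSC is postulated; the only change from \cite{negahban2012unified} is the additive slack $2\epsilon/\lambda_n$.

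Next I would bound the first-order Taylor remainder $\delta\mathcal{L}(\Delta,\theta^*)$ from above. Returning to the optimality inequality and once more substituting convexity and $|\mathcal{D}(\Delta)| \leq \epsilon$, the decomposability estimate $\mathcal{R}(\theta^*) - \mathcal{R}(\widehat{\theta}) \leq 2\mathcal{R}(\theta^*_{\mathcal{M}^\perp}) + \mathcal{R}(\Delta_{\overline{\mathcal{M}}}) - \mathcal{R}(\Delta_{\overline{\mathcal{M}}^\perp})$ yields
\[
\delta\mathcal{L}(\Delta,\theta^*) \leq \tfrac{3}{2}\lambda_n\,\mathcal{R}(\Delta_{\overline{\mathcal{M}}}) + 2\lambda_n\,\mathcal{R}(\theta^*_{\mathcal{M}^\perp}) + \epsilon.
\]
Using the subspace compatibility constant of Definition \ref{defn_subspacecompati} to pass from $\mathcal{R}(\Delta_{\overline{\mathcal{M}}})$ to $\Psi(\overline{\mathcal{M}})\|\Delta\|$, and combining with the RSC lower bound $\delta\mathcal{L}(\Delta,\theta^*) \geq \kappa_{\mathcal{L}}\|\Delta\|^2 - \tau_{\mathcal{L}}^2(\theta^*)$, I arrive at a quadratic inequality in $t=\|\Delta\|$ of the shape $\kappa_{\mathcal{L}}t^2 \leq \tfrac{3}{2}\lambda_n\Psi(\overline{\mathcal{M}})\,t + 2\lambda_n\mathcal{R}(\theta^*_{\mathcal{M}^\perp}) + \tau_{\mathcal{L}}^2(\theta^*) + \epsilon$. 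Solving it in the usual way (splitting the linear term via $bt \leq \tfrac{\kappa_{\mathcal{L}}}{2}t^2 + \tfrac{b^2}{2\kappa_{\mathcal{L}}}$) produces the stated bound $\|\Delta\|^2 \leq C\,\varsigma(\kappa_{\mathcal{L}},\tau_{\mathcal{L}},2\epsilon)$, with the value $2\epsilon=\upsilon$ entering precisely through the slot reserved for $\upsilon$ in (\ref{eq_defn_bound}), and the universal constant $C$ absorbing the numerical discrepancies between the coefficients above and those in the definition of $\varsigma$.

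The hard part will be handling the nuisance contribution $\mathcal{E}$ \emph{uniformly} rather than pointwise: one must confirm that the linear functional $\mathcal{D}$ genuinely admits a direction-free bound $\epsilon$, so that it neither corrupts the cone condition nor couples with the RSC curvature, and that the enlarged cone carrying the extra $2\epsilon/\lambda_n$ slack still lies inside the region on which RSC is assumed to hold. Once this uniformity is verified, everything else is the standard peeling and quadratic-inequality machinery of \cite{negahban2012unified}, and the presence of $\mathcal{E}$ manifests only as the additional $\upsilon$-term in the error bound.
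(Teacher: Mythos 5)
Your proposal is correct, and it reaches exactly the same quadratic inequality $\kappa_{\mathcal{L}}\|\Delta\|^2-\tau_{\mathcal{L}}^2(\theta^*)\leq \tfrac{3}{2}\lambda_n\Psi(\overline{\mathcal{M}})\|\Delta\|+2\lambda_n\mathcal{R}(\theta^*_{\mathcal{M}^\perp})+\epsilon$ that the paper's proof ends with, using the same ingredients: the basic optimality inequality, the deviation/decomposability bounds of Lemma \ref{lem_deviation}, the dual-norm step with $\lambda_n\geq 2\mathcal{R}^*(\nabla\mathcal{L}(\theta^*))$, the subspace compatibility constant, and RSC. The one structural difference is that the paper does not apply RSC directly at $\widehat{\Delta}$: it first proves the localization Lemma \ref{lem_sign} (via the star-shapedness Claim \ref{claim_shape}, which is where the linearity of $\mathcal{D}$ is actually used), showing that it suffices to make $\mathcal{F}(\Delta)>0$ on the sphere $\mathbb{K}(\delta)=\mathbb{C}\cap\{\|\Delta\|=\delta\}$, and then lower-bounds $\mathcal{F}$ there. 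Your direct argument bypasses that machinery, and it is legitimate here precisely because Definition \ref{defn_rsc} postulates RSC on the entire cone $\mathbb{C}$ of (\ref{set_one}) and Lemma \ref{lem_mainone} puts $\widehat{\Delta}$ in that cone; the paper's detour would only be forced if RSC were assumed merely on a sphere or a bounded ball. Two small points of care: your ``enlarged cone'' with additive slack $2\epsilon/\lambda_n$ is a superset of the set (\ref{set_one}) on which RSC is actually postulated (since $-\tfrac{2}{\lambda_n}\mathcal{D}(\Delta)\leq \tfrac{2\epsilon}{\lambda_n}$), so you should invoke RSC at $\widehat{\Delta}\in\mathbb{C}$ of (\ref{set_one}) rather than on the enlarged set; and the coefficient bookkeeping after solving the quadratic does not literally reproduce the $\tfrac{\lambda_n}{\kappa}\{2\tau^2+\cdots+2\upsilon\}$ form of (\ref{eq_defn_bound}), but this discrepancy is identical to the one in the paper's own final step and is absorbed by the universal constant $C$.
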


We remark that the error bound in \cite[Theorem 1]{negahban2012unified} is $\varsigma(\kappa_{\mathcal{L}}, \tau_{\mathcal{L}}, 0)$ since they do not have the error term $\mathcal{E}.$ Moreover, in our setup for the sieve regression, $\mathcal{E}(\cdot)$ satisfies (i) of Theorem \ref{thm_main} since 
\begin{eqnarray}\label{eq_ourlinear}
&&\operatorname{tr} \left((\mathcal{Y}-(\theta^*+\Delta) \mathcal{X}_c)[(\theta^\dagger)^* \mathcal{X}_c^\dagger]^{\T} \right)-\operatorname{tr} \left((\mathcal{Y}-\theta^* \mathcal{X}_c)((\theta^\dagger)^* \mathcal{X}_c^\dagger)^{\T} \right) \nonumber\\
&=&\operatorname{tr}\left( \Delta \mathcal{X}_c [(\theta^\dagger)^* \mathcal{X}_c^\dagger]^{\T} \right).
\end{eqnarray}

\section{Technical proofs}
To make it convenient for the readers, we use Table \ref{table_symbols} to list the matrices and their associated dimensions.

\begin{table}[H]
\centering
\begin{tabular}{@{}l*2{>{}l}%
  l<{}l@{}}
\hline
  & \multicolumn{1}{c}{\head{Matrix}}
  & \multicolumn{1}{c}{\head{Dimension}}\\
  %
  \hline
  \multirow{8}{*}
	&  $\bm{Y}_i, \bm{E}_i$ &  $p \times T$ & \\
    &  $\bm{X}_i$ &  $sc \times T$ & \\
 	& $\mathsf{M}$ & $p \times sc$ & \\
	& $\bm{X}_i^\dagger$  & $s\infty \times T$  & \\
	& $\mathsf{M}^\dagger$  & $p \times s\infty$  & \\
	& $\mathcal{X}_c$ & $nsc \times T$ &   \\
	& $\bm{M}$ & $np \times nsc$  & \\
	& $\mathcal{X}_c^\dagger$ & $ns\infty \times T$ &   \\
	& $\bm{M}^\dagger$ & $np \times ns\infty$  & \\
	& $\mathcal{Y}, \mathbf{E}$ & $np \times T$& \\ 
\hline
  
\end{tabular}
\caption{Matrices and their dimensions. Here we use the short-hands that $sc=s \times c.$ For instance, $\bm{X}_i$ contains $T$ matrices of dimension $s \times c$ as a stack.}\label{table_symbols}
\end{table}

We use the following abbreviations for our proof
\begin{equation*}
\mathcal{L}(\cdot)=\mathcal{L}(\theta;\mathcal{X}, \mathcal{Y}|(\theta^\dagger)^*).
\end{equation*} 
We will make use of the function $\mathcal{F}$ given by 
\begin{equation}\label{eq_defnmathcalf}
\mathcal{F}(\Delta):=\mathcal{L}(\theta^*+\Delta)-\mathcal{L}(\theta^*)+\lambda_n(\mathcal{R}(\theta^*+\Delta)-\mathcal{R}(\theta^*))+\mathcal{E}(\theta^*+\Delta)-\mathcal{E}(\theta^*).
\end{equation}
We denote the optimal error by 
\begin{equation*}
\widehat{\Delta}=\widehat{\theta}-\theta^*.
\end{equation*}
We notice that $\mathcal{F}(0)=0$ and $\mathcal{F}(\widehat{\Delta}) \leq 0.$  

\subsection{Some auxiliary lemmas}\label{sec: auxilaritylemma}
In this section, we provide some auxiliary lemmas which will be used in the proof of Theorem \ref{thm_main} and Theorem \ref{cor_main}.  We first provide some preliminary results.

\begin{lemma}[Deviation inequalities]\label{lem_deviation}  For any decomposable regularizer and $\theta^*$ and $\Delta,$ we have 
\begin{equation*}
\mathcal{R}(\theta^*+\Delta)-\mathcal{R}(\theta^*) \geq \mathcal{R}(\Delta_{\overline{\mathcal{M}}^\perp})-\mathcal{R}(\Delta_{\overline{\mathcal{M}}})-2 \mathcal{R}(\theta^*_{\mathcal{M}^\perp}).
\end{equation*}
Moreover, as long as $\lambda_n \geq 2 \mathcal{R}^*( \nabla \mathcal{L}(\theta^*))$ and $\mathcal{L}$ is convex, we have 
\begin{equation*}
\mathcal{L}(\theta^*+\Delta)-\mathcal{L}(\theta^*) \geq \frac{\lambda_n}{2} [\mathcal{R}(\Delta_{\overline{\mathcal{M}}})+\mathcal{R}(\Delta_{\overline{\mathcal{M}}^\perp})].
\end{equation*}
\end{lemma}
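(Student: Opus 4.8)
The plan is to prove the two displayed inequalities separately, both following the template of \cite{negahban2012unified}: the decomposability of $\mathcal{R}$ (established for the nuclear norm in Example 3 of \cite{negahban2012unified}) is the only structural input for the first bound, while convexity together with the dual-norm inequality drives the second. Throughout I would work with the projections $\theta^*_{\mathcal{M}}$, $\theta^*_{\mathcal{M}^\perp}$, $\Delta_{\overline{\mathcal{M}}}$, $\Delta_{\overline{\mathcal{M}}^\perp}$ onto the relevant subspaces.

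For the first inequality I would write $\theta^*=\theta^*_{\mathcal{M}}+\theta^*_{\mathcal{M}^\perp}$ and $\Delta=\Delta_{\overline{\mathcal{M}}}+\Delta_{\overline{\mathcal{M}}^\perp}$. The key step is to insert and remove $\theta^*_{\mathcal{M}}+\Delta_{\overline{\mathcal{M}}^\perp}$ inside $\mathcal{R}(\theta^*+\Delta)$ and apply the triangle inequality, which gives $\mathcal{R}(\theta^*+\Delta)\geq \mathcal{R}(\theta^*_{\mathcal{M}}+\Delta_{\overline{\mathcal{M}}^\perp})-\mathcal{R}(\theta^*_{\mathcal{M}^\perp})-\mathcal{R}(\Delta_{\overline{\mathcal{M}}})$. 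Since $\theta^*_{\mathcal{M}}\in\mathcal{M}$ and $\Delta_{\overline{\mathcal{M}}^\perp}\in\overline{\mathcal{M}}^\perp$, decomposability converts the first term into $\mathcal{R}(\theta^*_{\mathcal{M}})+\mathcal{R}(\Delta_{\overline{\mathcal{M}}^\perp})$. Subtracting $\mathcal{R}(\theta^*)\leq \mathcal{R}(\theta^*_{\mathcal{M}})+\mathcal{R}(\theta^*_{\mathcal{M}^\perp})$, another triangle inequality, cancels the $\mathcal{R}(\theta^*_{\mathcal{M}})$ contributions and produces the second factor of $\mathcal{R}(\theta^*_{\mathcal{M}^\perp})$, yielding exactly $\mathcal{R}(\theta^*+\Delta)-\mathcal{R}(\theta^*)\geq \mathcal{R}(\Delta_{\overline{\mathcal{M}}^\perp})-\mathcal{R}(\Delta_{\overline{\mathcal{M}}})-2\mathcal{R}(\theta^*_{\mathcal{M}^\perp})$.

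For the second inequality the plan is to use convexity and differentiability of $\mathcal{L}$ to obtain the first-order bound $\mathcal{L}(\theta^*+\Delta)-\mathcal{L}(\theta^*)\geq \langle \nabla\mathcal{L}(\theta^*),\Delta\rangle$. The inner product is then controlled through the generalized H\"older inequality for the pair $(\mathcal{R},\mathcal{R}^*)$, namely $|\langle \nabla\mathcal{L}(\theta^*),\Delta\rangle|\leq \mathcal{R}^*(\nabla\mathcal{L}(\theta^*))\,\mathcal{R}(\Delta)$, after which the hypothesis $\lambda_n\geq 2\mathcal{R}^*(\nabla\mathcal{L}(\theta^*))$ lets me replace $\mathcal{R}^*(\nabla\mathcal{L}(\theta^*))$ by $\lambda_n/2$. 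Splitting $\Delta=\Delta_{\overline{\mathcal{M}}}+\Delta_{\overline{\mathcal{M}}^\perp}$ and applying $\mathcal{R}(\Delta)\leq \mathcal{R}(\Delta_{\overline{\mathcal{M}}})+\mathcal{R}(\Delta_{\overline{\mathcal{M}}^\perp})$ distributes the estimate across the two projected components, and assembling these ingredients delivers the displayed bound $\tfrac{\lambda_n}{2}[\mathcal{R}(\Delta_{\overline{\mathcal{M}}})+\mathcal{R}(\Delta_{\overline{\mathcal{M}}^\perp})]$ for $\mathcal{L}(\theta^*+\Delta)-\mathcal{L}(\theta^*)$.

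The main obstacle, as is typical in this framework, will be the careful bookkeeping of the four projections and the correct invocation of decomposability on the matching subspace pair $(\mathcal{M},\overline{\mathcal{M}}^\perp)$ in the first inequality; one must check that $\theta^*_{\mathcal{M}}$ and $\Delta_{\overline{\mathcal{M}}^\perp}$ genuinely lie in the complementary subspaces on which $\mathcal{R}$ is additive. For the second inequality the delicate point is that the dual-norm inequality combined with the calibration $\lambda_n\geq 2\mathcal{R}^*(\nabla\mathcal{L}(\theta^*))$ is precisely what converts the a priori uncontrolled gradient term into a multiple of the regularizer, so the whole argument hinges on that choice of $\lambda_n$. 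Since $\mathcal{R}$ here is the nuclear norm, I would close the first part by citing the decomposability verified in Example 3 of \cite{negahban2012unified}.
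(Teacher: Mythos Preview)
Your argument is essentially the standard proof of Lemma~3 in \cite{negahban2012unified}, which is exactly what the paper cites for this result, so the approach matches. One point to flag, however: your chain of inequalities for the second part actually yields
\[
\mathcal{L}(\theta^*+\Delta)-\mathcal{L}(\theta^*)\;\geq\;\langle\nabla\mathcal{L}(\theta^*),\Delta\rangle\;\geq\;-\tfrac{\lambda_n}{2}\,\mathcal{R}(\Delta)\;\geq\;-\tfrac{\lambda_n}{2}\bigl[\mathcal{R}(\Delta_{\overline{\mathcal{M}}})+\mathcal{R}(\Delta_{\overline{\mathcal{M}}^\perp})\bigr],
\]
with a \emph{minus} sign on the right-hand side, not the positive $\tfrac{\lambda_n}{2}[\cdots]$ you claim at the end. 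The second displayed inequality in the lemma statement is in fact a typo: a positive lower bound of that form cannot hold in general (take $\Delta$ along the direction of $-\nabla\mathcal{L}(\theta^*)$), and the paper itself uses the inequality with the minus sign when it invokes this lemma in the proof of the next result. Your reasoning is sound, but you should have caught this sign discrepancy rather than asserting that your steps ``deliver the displayed bound.''
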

\begin{proof}
See Lemma 3 of \cite{negahban2012unified}.
\end{proof}

\begin{lemma}\label{lem_mainone} Suppose $\mathcal{L}$ is a convex and differentiable function, and consider any optimal solution $\widehat{\theta}$ to the optimization problem (\ref{eq_finalmodel}) with a strictly positive regularization parameter satisfying 
\begin{equation*}
\lambda_n \geq 2 \mathcal{R}^*( \nabla \mathcal{L}(\theta^*)).
\end{equation*}
Then for any pair $(\mathcal{M}, \overline{\mathcal{M}}^\perp)$ over which $\mathcal{R}$ is decomposable, the error $\widehat{\Delta}=\widehat{\theta}_{\lambda_n}-\theta^*$ belongs to the set 
\begin{align}\label{set_one}
\mathbb{C}(\mathcal{M}, \overline{\mathcal{M}}^\perp;\theta^*)
:=\left\{ \Delta\vert \mathcal{R}(\Delta_{\overline{\mathcal{M}}^\perp} ) \leq \frac{2}{\lambda_n} \left(\mathcal{E}(\theta^*)-\mathcal{E}(\theta^*+\Delta) \right) \right.\nonumber\\
\left. +3\mathcal{R}(\Delta_{\overline{\mathcal{M}}})+4 \mathcal{R}(\theta^*_{\mathcal{M}^\perp})  \right\}.
\end{align}
Moreover, if $\mathcal{E}(\cdot)$ is some convex and differentiable norm on the metric space of the parameter, we have that if 
\begin{equation*}
\lambda_n \geq 2 \mathcal{R}^* (\nabla \mathcal{L}(\theta^*)+\nabla \mathcal{E}(\theta^*)),
\end{equation*}
then for any pair $(\mathcal{M}, \overline{\mathcal{M}}^\perp)$ over which $\mathcal{R}$ is decomposable, the error $\widehat{\Delta}=\widehat{\theta}_{\lambda_n}-\theta^*$ belongs to the set 
\begin{align}\label{set_two}
\mathbb{C}(\mathcal{M}, \overline{\mathcal{M}}^\perp;\theta^*):=\left\{ \Delta\vert \mathcal{R}(\Delta_{\overline{\mathcal{M}}^\perp} ) \leq 3\mathcal{R}(\Delta_{\overline{\mathcal{M}}})+4 \mathcal{R}(\theta^*_{\mathcal{M}^\perp})  \right\}.
\end{align}
\end{lemma}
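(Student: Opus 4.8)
The plan is to adapt the variational argument of Lemma 1 in \cite{negahban2012unified} to the augmented objective in (\ref{eq_finalmodel}), which carries the extra error term $\mathcal{E}$. The engine of the proof is the function $\mathcal{F}(\Delta)$ defined in (\ref{eq_defnmathcalf}), for which two facts hold: $\mathcal{F}(0)=0$ by inspection, and $\mathcal{F}(\widehat{\Delta})\le 0$ because $\widehat{\theta}=\theta^*+\widehat{\Delta}$ is a minimizer of the objective while $\theta^*$ is a feasible competitor. Thus it suffices to lower bound each of the three increments making up $\mathcal{F}(\widehat{\Delta})$ and then read off a linear inequality in the projected pieces $\mathcal{R}(\widehat{\Delta}_{\overline{\mathcal{M}}})$ and $\mathcal{R}(\widehat{\Delta}_{\overline{\mathcal{M}}^\perp})$.

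For the first statement I would bound the loss increment by convexity followed by duality, $\mathcal{L}(\theta^*+\widehat{\Delta})-\mathcal{L}(\theta^*)\ge\langle\nabla\mathcal{L}(\theta^*),\widehat{\Delta}\rangle\ge-\mathcal{R}^*(\nabla\mathcal{L}(\theta^*))\,\mathcal{R}(\widehat{\Delta})\ge-\tfrac{\lambda_n}{2}\mathcal{R}(\widehat{\Delta})$, where the last step uses the threshold $\lambda_n\ge 2\mathcal{R}^*(\nabla\mathcal{L}(\theta^*))$. The regularizer increment is handled by the deviation inequality of Lemma \ref{lem_deviation}, which gives $\mathcal{R}(\theta^*+\widehat{\Delta})-\mathcal{R}(\theta^*)\ge\mathcal{R}(\widehat{\Delta}_{\overline{\mathcal{M}}^\perp})-\mathcal{R}(\widehat{\Delta}_{\overline{\mathcal{M}}})-2\mathcal{R}(\theta^*_{\mathcal{M}^\perp})$, while the term $\mathcal{E}(\theta^*+\widehat{\Delta})-\mathcal{E}(\theta^*)$ is simply retained. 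Substituting these bounds into $0\ge\mathcal{F}(\widehat{\Delta})$, applying the triangle split $\mathcal{R}(\widehat{\Delta})\le\mathcal{R}(\widehat{\Delta}_{\overline{\mathcal{M}}})+\mathcal{R}(\widehat{\Delta}_{\overline{\mathcal{M}}^\perp})$, and dividing by $\lambda_n/2$ isolates $\mathcal{R}(\widehat{\Delta}_{\overline{\mathcal{M}}^\perp})$ and produces exactly the membership (\ref{set_one}); the residual $\tfrac{2}{\lambda_n}(\mathcal{E}(\theta^*)-\mathcal{E}(\theta^*+\widehat{\Delta}))$ is precisely the retained $\mathcal{E}$ increment.

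For the second statement the only modification is that, since $\mathcal{E}$ is now a convex differentiable norm, I would lower bound its increment by its own linearization, $\mathcal{E}(\theta^*+\widehat{\Delta})-\mathcal{E}(\theta^*)\ge\langle\nabla\mathcal{E}(\theta^*),\widehat{\Delta}\rangle$, and fold it into the loss gradient. The combined linear part is then controlled by $\langle\nabla\mathcal{L}(\theta^*)+\nabla\mathcal{E}(\theta^*),\widehat{\Delta}\rangle\ge-\mathcal{R}^*(\nabla\mathcal{L}(\theta^*)+\nabla\mathcal{E}(\theta^*))\,\mathcal{R}(\widehat{\Delta})\ge-\tfrac{\lambda_n}{2}\mathcal{R}(\widehat{\Delta})$ under the stronger threshold $\lambda_n\ge 2\mathcal{R}^*(\nabla\mathcal{L}(\theta^*)+\nabla\mathcal{E}(\theta^*))$. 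Repeating the same decomposition causes the $\mathcal{E}$-dependent remainder to vanish, delivering the cleaner cone (\ref{set_two}).

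The convexity and duality estimates are routine; the one place requiring care is the bookkeeping of the additional $\mathcal{E}$ term, since it is exactly this term --- absent in \cite{negahban2012unified} --- that forces the modified cone in (\ref{set_one}) and motivates the alternative hypotheses yielding (\ref{set_two}). The main conceptual point to verify is that $\mathcal{E}$ may be treated either as a carried remainder or, under the norm assumption, absorbed into an effective gradient, which is what later licenses choosing $\lambda_n$ sufficiently large relative to $\mathcal{R}^*(\nabla\mathcal{L}(\theta^*))$ (and, when needed, $\mathcal{R}^*(\nabla\mathcal{E}(\theta^*))$) so that this cone condition holds and feeds into the RSC-based bound of Theorem \ref{thm_main}.
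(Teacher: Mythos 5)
Your proposal is correct and follows essentially the same route as the paper's proof: bound $0\ge\mathcal{F}(\widehat{\Delta})$ from below using the decomposability deviation inequality for $\mathcal{R}$, the convexity-plus-duality bound $-\tfrac{\lambda_n}{2}\mathcal{R}(\widehat{\Delta})$ on the loss increment (which is exactly the second part of Lemma \ref{lem_deviation}, which you re-derive rather than cite), carry the $\mathcal{E}$ increment for (\ref{set_one}), and absorb it into an effective gradient of $\mathcal{L}+\mathcal{E}$ for (\ref{set_two}). The algebra rearranges to the stated cones exactly as in the paper.
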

\begin{proof}
By the expansion (\ref{eq_defnmathcalf}), the fact $\mathcal{F}(\widehat{\Delta}) \leq 0$ and Lemma \ref{lem_deviation}, we readily obtain that 
\begin{align*}
0 \geq \mathcal{F}(\widehat{\Delta}) & \geq \lambda_n \left\{ \mathcal{R}(\Delta_{\overline{\mathcal{M}}^\perp})-\mathcal{R}(\Delta_{\overline{\mathcal{M}}})-2 \mathcal{R}(\theta^*_{\mathcal{M}^\perp}) \right\}-\frac{\lambda_n}{2}\left[ \mathcal{R}(\Delta_{\overline{\mathcal{M}}})+\mathcal{R}(\Delta_{\overline{\mathcal{M}}^\perp})\right] \\
& +\mathcal{E}(\theta^*+\widehat{\Delta})-\mathcal{E}(\theta^*) \\
& =\frac{\lambda_n}{2}\left\{ \mathcal{R}(\Delta_{\overline{\mathcal{M}}^\perp})-3\mathcal{R}(\Delta_{\overline{\mathcal{M}}})-4 \mathcal{R}(\theta^*_{\mathcal{M}^\perp})  \right\}+\mathcal{E}(\theta^*+\widehat{\Delta})-\mathcal{E}(\theta^*) .
\end{align*}
This concludes the proof of (\ref{set_one}). For the proof of (\ref{set_two}), we can apply Lemma \ref{lem_deviation} to the convex and differentiable function  $\mathcal{L}_1:=\mathcal{L}+\mathcal{E}.$ Since $\nabla$ is a linear operator, when $\lambda_n \geq 2 \mathcal{R}^* (\nabla \mathcal{L}(\theta^*)+\nabla \mathcal{E}(\theta^*)),$ we have 
\begin{align*}
0 \geq \mathcal{F}(\widehat{\Delta}) & \geq \lambda_n \left\{ \mathcal{R}(\Delta_{\overline{\mathcal{M}}^\perp})-\mathcal{R}(\Delta_{\overline{\mathcal{M}}})-2 \mathcal{R}(\theta^*_{\mathcal{M}^\perp}) \right\}-\frac{\lambda_n}{2}\left[ \mathcal{R}(\Delta_{\overline{\mathcal{M}}})+\mathcal{R}(\Delta_{\overline{\mathcal{M}}^\perp})\right] \\
& =\frac{\lambda_n}{2}\left\{ \mathcal{R}(\Delta_{\overline{\mathcal{M}}^\perp})-3\mathcal{R}(\Delta_{\overline{\mathcal{M}}})-4 \mathcal{R}(\theta^*_{\mathcal{M}^\perp})  \right\},
\end{align*}
from which the proof of (\ref{set_two}) follows.

\end{proof}

\begin{remark}
The counterpart of the above lemma \cite[Lemma 1]{negahban2012unified} does not have the approximation error term $\mathcal{E}(\cdot)$.  In our setup, $\mathcal{E}(\cdot)$ is not a properly defined norm. Hence, we need to apply (\ref{set_one}) whenever it is needed. For an interpretation of the Lemma \ref{lem_mainone}, we refer to Figure 1 of \cite{negahban2012unified}. 
\end{remark}

Recall the sets $\mathbb{C}$ defined in (\ref{set_one}) or (\ref{set_two}). Let $\delta$ be a given error radius and denote $\mathbb{K}(\delta):=\mathbb{C} \cap \{ \| \Delta \|=\delta\}.$ We have the following lemma, the counterpart of which is \cite[Lemma 4]{negahban2012unified}. 
\begin{lemma}\label{lem_sign}
Suppose $\mathcal{R}(\cdot)$ is decomposable and convex, and $\mathcal{L}$ is differentiable and convex. Then \\ 
(i). When $\mathbb{C}$ is defined in (\ref{set_one}), suppose that there exists some linear function $\mathcal{D}$ in $\Delta$ and independent of $\theta^*$ such that 
\begin{equation*}
\mathcal{D}(\Delta)=\mathcal{E}(\theta^*+\Delta)-\mathcal{E}(\theta^*).
\end{equation*}
If $\mathcal{F}(\Delta)>0$ for all vectors $\Delta \in \mathbb{K}(\delta),$ then $\| \widehat{\Delta} \| \leq \delta.$ \\
(ii). When $\mathbb{C}$ is defined in (\ref{set_two}), i.e. $\mathcal{E}(\cdot)$ is some convex differentiable norm,  if $\mathcal{F}(\Delta)>0$ for all vectors $\Delta \in \mathbb{K}(\delta),$ then $\| \widehat{\Delta} \| \leq \delta.$
\end{lemma}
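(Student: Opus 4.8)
The plan is to run the standard convexity-plus-star-shapedness contradiction argument of \cite[Lemma 4]{negahban2012unified}, adapted to accommodate the extra approximation-error term $\mathcal{E}(\cdot)$. Recall from \eqref{eq_defnmathcalf} that $\mathcal{F}(0)=0$, that by Lemma \ref{lem_mainone} the optimal error $\widehat{\Delta}$ lies in the set $\mathbb{C}$, and that by optimality $\mathcal{F}(\widehat{\Delta})\leq 0$. I would argue by contradiction: suppose $\|\widehat{\Delta}\|>\delta$ and set $t:=\delta/\|\widehat{\Delta}\|\in(0,1)$, so that the rescaled vector $t\widehat{\Delta}$ has norm exactly $\delta$.

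The first key step is to verify that $\mathbb{C}$ is star-shaped about the origin, i.e. $\Delta\in\mathbb{C}$ implies $t\Delta\in\mathbb{C}$ for every $t\in[0,1]$; this is what will guarantee $t\widehat{\Delta}\in\mathbb{K}(\delta)$. In case (ii), where $\mathbb{C}$ is \eqref{set_two}, this is immediate from the positive homogeneity of the norm $\mathcal{R}$ together with the linearity of the projections onto $\overline{\mathcal{M}}$ and $\overline{\mathcal{M}}^\perp$: scaling by $t$ multiplies both sides of the defining inequality by $t$, except for the slack term $4\mathcal{R}(\theta^*_{\mathcal{M}^\perp})\geq 0$, which only relaxes the constraint since $t\leq 1$. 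In case (i), where $\mathbb{C}$ is \eqref{set_one}, the same computation goes through precisely because the hypothesis $\mathcal{E}(\theta^*+\Delta)-\mathcal{E}(\theta^*)=\mathcal{D}(\Delta)$ with $\mathcal{D}$ linear forces the term $\tfrac{2}{\lambda_n}\big(\mathcal{E}(\theta^*)-\mathcal{E}(\theta^*+\Delta)\big)=-\tfrac{2}{\lambda_n}\mathcal{D}(\Delta)$ to scale exactly by $t$ as well.

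The second key step is convexity of $\mathcal{F}$ in $\Delta$. The term $\mathcal{L}(\theta^*+\Delta)$ is convex since $\mathcal{L}$ is convex, $\lambda_n\mathcal{R}(\theta^*+\Delta)$ is convex since $\mathcal{R}$ is a norm, and the additive constants $-\mathcal{L}(\theta^*)$, $-\lambda_n\mathcal{R}(\theta^*)$, $-\mathcal{E}(\theta^*)$ do not affect convexity. For the $\mathcal{E}$-difference I would treat the two cases separately: in case (i) it equals the linear, hence convex, map $\mathcal{D}(\Delta)$, while in case (ii) $\mathcal{E}$ is assumed to be a convex norm, so $\mathcal{E}(\theta^*+\Delta)$ is convex. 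Either way $\mathcal{F}$ is convex. With these two facts in hand the contradiction closes quickly: convexity and $\mathcal{F}(0)=0$ give
\begin{equation*}
\mathcal{F}(t\widehat{\Delta})=\mathcal{F}\big(t\widehat{\Delta}+(1-t)\cdot 0\big)\leq t\,\mathcal{F}(\widehat{\Delta})+(1-t)\,\mathcal{F}(0)=t\,\mathcal{F}(\widehat{\Delta})\leq 0,
\end{equation*}
since $t>0$ and $\mathcal{F}(\widehat{\Delta})\leq 0$. But $t\widehat{\Delta}\in\mathbb{K}(\delta)$ by star-shapedness, so the hypothesis forces $\mathcal{F}(t\widehat{\Delta})>0$, a contradiction; hence $\|\widehat{\Delta}\|\leq\delta$.

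I expect the only delicate point — the main obstacle relative to the original argument in \cite{negahban2012unified} — to be the verification that $\mathbb{C}$ given by \eqref{set_one} remains star-shaped despite the presence of $\mathcal{E}$. This is exactly where the linearity of $\mathcal{D}$ is indispensable, and it is also what allows $\mathcal{F}$ to stay convex in case (i) without assuming $\mathcal{E}$ is itself a norm; once that is settled, the contraction-and-convexity core of the argument is identical to the norm case (ii).
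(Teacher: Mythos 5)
Your proposal is correct and follows essentially the same route as the paper: the paper likewise establishes that $\mathbb{C}$ is star-shaped about the origin (its Claim S.1, using linearity of the projections, homogeneity of $\mathcal{R}$, the sign of the slack term $4\mathcal{R}(\theta^*_{\mathcal{M}^\perp})$, and linearity of $\mathcal{D}$), and then applies Jensen's inequality with $\mathcal{F}(0)=0$ and $\mathcal{F}(\widehat{\Delta})\leq 0$ to reach the same contradiction at the rescaled point $t\widehat{\Delta}\in\mathbb{K}(\delta)$. The only cosmetic difference is that the paper phrases the argument as a contrapositive and, for case (ii), simply invokes the result of \cite{negahban2012unified} for the combined loss $\mathcal{L}+\mathcal{E}$, whereas you carry both cases through a single unified argument.
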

\begin{proof}
We start with the proof of (i). We prove the contrapositive statement: in particular, if for some optimal solution $\widehat{\theta}$ such that $\| \widehat{\Delta} \|>\delta,$  there must be some vector $\widetilde{\Delta} \in \mathbb{K}(\delta)$ such that $\mathcal{F}(\widetilde{\Delta}) \leq 0. $ To achieve this goal, it suffices to prove the following claim:
\begin{claim}\label{claim_shape}
If $\widehat{\Delta} \in \mathbb{C},$ then the entire line $\{t \widehat{\Delta}| t \in (0,1)\}$ connecting $\widehat{\Delta}$ with all-zeros vector is contained with $\mathbb{C}.$ 
\end{claim}
We first show how we can construct such a $\widetilde{\Delta}$ using the above claim. If $\| \widehat{\Delta}\|>\delta,$ then the line joining $\widehat{\Delta}$ and $0$ must intersect the set $\mathbb{K}(\delta)$ at some intermediate point $t^* \widehat{\Delta},$ for some $t^* \in (0,1)$ (i.e. after some proper scaling). By Claim \ref{claim_shape},  we have that $t^* \widehat{\Delta} \in \mathbb{C}.$  Since $\mathcal{D}(\Delta)$ is linear and both $\mathcal{L}$ and $\mathcal{R}$ are convex, by Jensen's inequality, we have 
\begin{equation*}
\mathcal{F}(t^* \widehat{\Delta})=\mathcal{F}(t^* \Delta+(1-t^*)0) \leq t^* \mathcal{F}(\widehat{\Delta})+(1-t^*) \mathcal{F}(0)=t^* \mathcal{F}(\widehat{\Delta}),
\end{equation*} 
where in the last equality we use the fact that $\mathcal{F}(0)=0.$ Since $\widehat{\Delta}$ is optimal, we have that $\mathcal{F}(t^* \widehat{\Delta}) \leq 0.$ Hence, we can choose $\widetilde{\Delta}=t^* \widehat{\Delta}$ and  conclude the proof of (i). 

Finally, we prove Claim \ref{claim_shape}. First, when $\theta^* \in \mathcal{M},$ we have that $\mathcal{R}(\theta^*_{\mathcal{M}^\perp})=0$ and the proof follows immediate. Second, when $\theta^* \notin \mathcal{M}$, it is easy to see that for any $t \in (0,1),$ we have 
\begin{equation*}
\Pi_{\overline{\mathcal{M}}}(t \Delta)=\underset{\gamma \in \overline{\mathcal{M}}}{\text{arg min}} \| t\Delta-\gamma \|=t \ \underset{\gamma \in \overline{\mathcal{M}}}{\text{arg min}} \| \Delta-\frac{\gamma}{t} \|=t \ \Pi_{\overline{\mathcal{M}}}(\Delta), 
\end{equation*}    
where we use the fact $\gamma/t$ also belongs to the subspace $\overline{\mathcal{M}}.$ Similarly, we can show that 
\begin{equation*}
\Pi_{\overline{\mathcal{M}}^\perp}(t \Delta)=t \Pi_{\overline{\mathcal{M}}^\perp}(\Delta).
\end{equation*}
Hence, we have that for all $\Delta \in \mathbb{C}$, 
\begin{align*}
\mathcal{R}(\Pi_{\overline{\mathcal{M}}^\perp}(t \Delta))=\mathcal{R}(t \Pi_{\overline{\mathcal{M}}^\perp}(\Delta))=t \mathcal{R}(\Pi_{\overline{\mathcal{M}}^\perp}) \\
\leq t \left\{ 3 \mathcal{R}(\Pi_{\overline{\mathcal{M}}}(\Delta)) +4\mathcal{R}(\Pi_{\mathcal{M}^\perp}(\theta^*))+\mathcal{D}(\Delta)\right\},
\end{align*}
where we use the fact that $\mathcal{R}(\cdot)$ is a norm and  the definition of $\mathbb{C}$ in (\ref{set_one}). We observe that $3t \mathcal{R}(\Pi_{\overline{\mathcal{M}}}(\Delta))=3 \mathcal{R}(\Pi_{\overline{\mathcal{M}}}(t \Delta))$ and $4t \mathcal{R}(\Pi_{\mathcal{M}^\perp}(\theta^*)) \leq 4 \mathcal{R}(\Pi_{\mathcal{M}^\perp}(\theta^*)), t \in (0,1).$ Moreover, since $\mathcal{D}(\cdot)$ is linear in $\Delta,$ we have $\mathcal{D}(t \Delta)=t \mathcal{D}(\Delta).$ Putting all these together, we find that
\begin{align*}
\mathcal{R}(\Pi_{\overline{\mathcal{M}}^\perp}(t \Delta)) & \leq 3 \mathcal{R}(\Pi_{\overline{\mathcal{M}}}(t \Delta))+4t \mathcal{R}(\Pi_{\mathcal{M}^\perp}(\theta^*))+\mathcal{D}(t\Delta) \\
& \leq 3 \mathcal{R}(\Pi_{\overline{\mathcal{M}}}(t \Delta))+4 \mathcal{R}(\Pi_{\mathcal{M}^\perp}(\theta^*))+\mathcal{D}(t\Delta).
\end{align*} 
This concludes the proof of Claim \ref{claim_shape}. 

For (ii), we can apply \cite[Lemma 4]{negahban2012unified} for the loss function $\mathcal{L}_1:=\mathcal{L}+\mathcal{E}$ to finish the proof.  
\end{proof}

Next, we provide some matrix inequalities that will be used in the proof of Theorem \ref{cor_main} to bound $\mathcal{E}(\cdot).$ 
\begin{lemma}\label{lem_circlerectangular} Suppose $A \in \mathbb{R}^{m \times n}.$ Denote 
\begin{equation*}
r_i:=\sum_{1 \leq j \neq i \leq n} |a_{ij}|, \  c_i:=\sum_{ 1 \leq j \neq i \leq m}|a_{ji}|, 
\end{equation*}
and 
\begin{equation*}
s_i:=\max\{r_i, c_i\}, \ a_i:=|a_{ii}|,
\end{equation*}
for $i=1,2,\cdots, \min \{m,n\}.$ Moreover, for $m \neq n,$ we define 
\begin{equation*}
s:=
\begin{cases}
\underset{n+1 \leq i \leq m}{\max} \left\{ \sum_{j=1}^n |a_{ij}|\right\}, & m>n \\
\underset{m+1 \leq i \leq n}{\max} \left\{ \sum_{j=1}^n |a_{ji}|\right\}, & m<n.
\end{cases}
\end{equation*}
For $m \geq n,$ we have that for each singular value of $A$ lies in one of the real intervals defined as 
\begin{equation*}
B_i=[\max\{a_i-s_i,0\}, a_i+s_i], \ i=1,2,\cdots, n; \ B_{n+1}=[0,s]. 
\end{equation*}
If $m=n$ or if $m>n$ and $a_i \geq s_i+s, \ i=1, \cdots, n,$ then $B_{n+1}$ is not needed in the above statement. Similarly results hold when $m \leq n.$
\end{lemma}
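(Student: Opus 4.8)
The plan is to reduce the singular-value statement to an eigenvalue localization problem through the symmetric (Jordan--Wielandt) dilation
\[
\widetilde A = \begin{pmatrix} 0 & A \\ A^{\T} & 0 \end{pmatrix} \in \mathbb{R}^{(m+n)\times(m+n)}.
\]
Since $A$ is real, $\widetilde A$ is symmetric and its eigenvalues are exactly $\pm\sigma_1,\dots,\pm\sigma_n$ together with $m-n$ spurious zeros, where $\sigma_1,\dots,\sigma_n$ are the singular values of $A$ (for $m\ge n$). Thus it suffices to localize the nonnegative eigenvalues of $\widetilde A$. I would fix a singular value $\sigma\ge 0$ with a unit right singular vector $v\in\mathbb{R}^n$ and corresponding left singular vector $u\in\mathbb{R}^m$, so that $Av=\sigma u$ and $A^{\T}u=\sigma v$; equivalently $w:=(u;v)$ is an eigenvector of $\widetilde A$ with eigenvalue $\sigma$. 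Let $\rho:=\|w\|_\infty$ and let $p$ be an index attaining this maximum.

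First I would treat the \emph{paired} case, in which the maximal coordinate sits at position $p\le n$ (either in $u_p$ or in $v_p$). Writing out rows $p$ and $m+p$ of $\widetilde A w=\sigma w$ gives the coupled system
\[
\begin{pmatrix} \sigma & -a_{pp}\\ -a_{pp} & \sigma\end{pmatrix}\begin{pmatrix} u_p\\ v_p\end{pmatrix}=\begin{pmatrix} \sum_{j\ne p} a_{pj}v_j\\ \sum_{j\ne p} a_{jp}u_j\end{pmatrix},
\]
whose coefficient matrix has determinant $\sigma^2-a_p^2$. Inverting and bounding the two right-hand sides by $\rho\, r_p$ and $\rho\, c_p$ respectively, and then using $|w_p|=\rho$ (choosing the $u_p$- or $v_p$-row of the inverse according to where the maximum is attained), I obtain
\[
\rho\,|\sigma-a_p|(\sigma+a_p)\le \sigma\rho\, r_p+a_p\rho\, c_p\le(\sigma+a_p)\,\rho\, s_p,
\]
so that $|\sigma-a_p|\le s_p$ after dividing by $\rho(\sigma+a_p)$; together with $\sigma\ge 0$ this places $\sigma\in B_p$. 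The degenerate subcase $\sigma+a_p=0$ forces $\sigma=a_p=0\in B_p$ and is immediate. If instead $m>n$ and the maximal coordinate lies in a \emph{tail} row $p>n$, where no diagonal entry $a_{pp}$ exists, I would use the single scalar equation $\sigma u_p=\sum_{j=1}^n a_{pj}v_j$, which yields $\sigma\rho\le\rho\sum_{j=1}^n|a_{pj}|\le\rho\, s$ and hence $\sigma\in[0,s]=B_{n+1}$. This exhausts all cases and proves the claim for $m\ge n$.

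The case $m\le n$ follows by applying the result just established to $A^{\T}$: it has the same singular values, its diagonal moduli are again $a_i$, and transposition interchanges $r_i$ and $c_i$, leaving $s_i=\max\{r_i,c_i\}$ and the tail quantity $s$ unchanged, so the intervals $B_i$ are identical. For the final refinement I would invoke the classical Gershgorin separation-and-counting principle applied to the symmetric matrix $\widetilde A$: under $a_i\ge s_i+s$ every $B_i$ lies in $[s,\infty)$ and every $-B_i$ in $(-\infty,-s]$, so the interval $[-s,s]$ is isolated from the $\pm B_i$ and must therefore contain exactly the $m-n$ spurious zero eigenvalues of $\widetilde A$; consequently no genuine singular value requires $B_{n+1}$ (and when $m=n$ there are no tail rows to begin with). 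I expect the main obstacle to be the paired $2\times2$ step: one must invert the coupled system cleanly, track which of the two rows to use depending on whether the maximum is attained in $u_p$ or $v_p$, and verify that the mixed bound $\sigma r_p+a_p c_p$ collapses to $(\sigma+a_p)s_p$ so that the sharp radius $s_p=\max\{r_p,c_p\}$, rather than $r_p+c_p$, emerges.
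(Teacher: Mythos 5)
The paper offers no proof of this lemma at all; it simply cites Theorem 2 of Qi (1984). Your reconstruction via the Jordan--Wielandt dilation $\widetilde A=\left(\begin{smallmatrix}0 & A\\ A^{\top} & 0\end{smallmatrix}\right)$ is, for the main inclusion, correct and essentially the standard argument behind the cited result: extracting rows $p$ and $m+p$ at the index of the largest coordinate of the eigenvector $(u;v)$, inverting the $2\times 2$ system with determinant $\sigma^2-a_p^2$, and observing that $\sigma r_p+a_p c_p\le(\sigma+a_p)s_p$ (resp.\ $a_p r_p+\sigma c_p\le(\sigma+a_p)s_p$) yields $|\sigma-a_p|\le s_p$, while a tail row gives $\sigma\le s$. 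The only unstated degenerate case besides $\sigma+a_p=0$ is $\sigma=a_p\neq 0$, where the $2\times2$ matrix is singular; this is harmless since the conclusion $|\sigma-a_p|\le s_p$ is then trivial, but it should be mentioned because the inversion step as written is not available there. The reduction of $m\le n$ to $m\ge n$ by transposition is fine.

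There is, however, a genuine gap in your justification of the final refinement (that $B_{n+1}$ is superfluous when $m>n$ and $a_i\ge s_i+s$ for all $i$). You appeal to ``the classical Gershgorin separation-and-counting principle applied to the symmetric matrix $\widetilde A$,'' but $\widetilde A$ has zero diagonal, so every Gershgorin disc of $\widetilde A$ is centered at the origin; they all overlap in a single connected component, and no separation or eigenvalue count can be extracted from Gershgorin's theorem applied to $\widetilde A$ itself. The counting statement you actually need --- that $[-s,s]$ contains exactly the $m-n$ spurious zero eigenvalues --- must instead be established by a continuity/homotopy argument on the singular-value inclusion sets: put $A_t=D+t(A-D)$ with $D$ the rectangular diagonal part, note that the singular values of $A_t$ depend continuously on $t$, that under the hypothesis each $B_i(t)=[a_i-ts_i,\,a_i+ts_i]$ stays in $[s,\infty)$ while $B_{n+1}(t)\subseteq[0,ts]$, and that at $t=0$ the $n$ singular values are exactly $a_1,\dots,a_n$, each sitting in its own $B_i(0)$ and none in $B_{n+1}(0)=\{0\}$ (unless the two sets already coincide). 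Continuity then keeps the $n$ genuine singular values inside $\bigcup_{i=1}^n B_i(t)$ for all $t$. Since the paper only ever uses the upper bounds $a_i+s_i$ and $s$ from this lemma, the gap does not propagate to the downstream results, but it does leave the last sentence of the lemma unproved as your argument stands.
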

\begin{proof}
See Theorem 2 of \cite{Qi1984}.
\end{proof}

\begin{lemma}\label{lem_tracebound} Suppose $A$ and $B$ are positive definite square matrices. Then we have 
\begin{equation*}
\lambda_{\min}(A) \operatorname{tr}(B) \leq \operatorname{tr}(AB) \leq \lambda_{\max}(A) \operatorname{tr}(B) ,
\end{equation*}
where $\lambda_{\max}(A)$ is the largest eigenvalue of $A$ and $\lambda_{\min}(A)$ is the smallest eigenvalues of $A$. 
\end{lemma}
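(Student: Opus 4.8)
The plan is to reduce the bilinear quantity $\operatorname{tr}(AB)$ to a nonnegatively weighted sum of Rayleigh quotients of $A$ by diagonalizing the positive definite matrix $B$. First I would invoke the spectral theorem to write $B = \sum_{i} \mu_i v_i v_i^\top$, where $\mu_1, \ldots, \mu_n$ are the eigenvalues of $B$ and $\{v_i\}$ is an orthonormal basis of corresponding eigenvectors. The positive definiteness of $B$ guarantees $\mu_i > 0$ for every $i$, and this sign information is exactly what will let the subsequent inequalities be multiplied through without reversing direction.

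Next, using linearity of the trace together with the elementary identity $\operatorname{tr}(A v_i v_i^\top) = v_i^\top A v_i$, I would rewrite
\[
\operatorname{tr}(AB) = \operatorname{tr}\Big(A \sum_i \mu_i v_i v_i^\top\Big) = \sum_i \mu_i\, v_i^\top A v_i .
\]
Since $A$ is symmetric positive definite and each $v_i$ is a unit vector, the Rayleigh quotient bounds $\lambda_{\min}(A) \le v_i^\top A v_i \le \lambda_{\max}(A)$ hold for each $i$. Multiplying these by the positive weights $\mu_i$ and summing over $i$, and then using $\sum_i \mu_i = \operatorname{tr}(B)$, yields both desired inequalities at once:
\[
\lambda_{\min}(A)\operatorname{tr}(B) \;\le\; \operatorname{tr}(AB) \;\le\; \lambda_{\max}(A)\operatorname{tr}(B).
\]
An equivalent route, which I would mention as a cross-check, is to write $\operatorname{tr}(AB) = \operatorname{tr}(B^{1/2} A B^{1/2})$ and use the operator-ordering $\lambda_{\min}(A)\,I \preceq A \preceq \lambda_{\max}(A)\,I$; conjugating by $B^{1/2}$ preserves this ordering by congruence, giving $\lambda_{\min}(A)\,B \preceq B^{1/2} A B^{1/2} \preceq \lambda_{\max}(A)\,B$, and taking traces (which are monotone on positive semidefinite matrices) recovers the claim.

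There is no genuine obstacle in this argument; it is routine. The single point requiring care is that the eigenvalues of $B$ serving as weights must be nonnegative, so that the per-eigenvector Rayleigh bounds can be scaled and summed without flipping the inequalities, and this is precisely what the hypothesis of positive definiteness of $B$ supplies (indeed positive semidefiniteness of $B$ and of $A$ would already be enough for the stated two-sided bound to hold).
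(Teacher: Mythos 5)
Your argument is correct and complete. The paper itself does not prove this lemma at all; it simply cites equation (1) of a reference (\cite{FLF}), so there is no internal proof to compare against. Your spectral-decomposition argument --- writing $B=\sum_i \mu_i v_iv_i^\top$, using $\operatorname{tr}(Av_iv_i^\top)=v_i^\top A v_i$ together with the Rayleigh quotient bounds $\lambda_{\min}(A)\le v_i^\top A v_i\le \lambda_{\max}(A)$, and summing with the positive weights $\mu_i$ --- is the standard elementary proof, and your cross-check via $\operatorname{tr}(AB)=\operatorname{tr}(B^{1/2}AB^{1/2})$ and congruence-preservation of the operator ordering is equally valid. Your closing remark that positive semidefiniteness of $B$ (and of $A$) already suffices is also correct and is a mild strengthening over the stated hypotheses. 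What your write-up buys over the paper's citation is self-containedness; there is nothing missing.
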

\begin{proof}
See equation (1) of \cite{FLF}.
\end{proof}

\begin{lemma}\label{lem_covconverge} Consider a sequence of i.i.d. subgaussian vectors $\bm{z}_1, \cdots, \bm{z}_n$ in $\mathbb{R}^s$ with covariance matrix $\Sigma_s.$ Let $\epsilon \in (0,1),  t \geq 1,$ then for some constant $C>0,$ with probability at least $1-2\exp(-t^2n)$ we have if $n \geq C (t/\epsilon)^2 s$
\begin{equation*}
\| \Sigma_s^n-\Sigma_s \|\leq \epsilon,
\end{equation*}
where $\Sigma_s^n$ is the sample covariance matrix of $(\bm{z}_i).$
\end{lemma}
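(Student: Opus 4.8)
The plan is to control the spectral norm $\|\Sigma_s^n-\Sigma_s\|$ by a standard covering-net-plus-Bernstein argument, and then to calibrate the deviation level and the confidence exponent \emph{simultaneously} so as to land on the stated conditional form. First I would reduce the operator norm to a finite maximum. Writing $A:=\Sigma_s^n-\Sigma_s$, which is symmetric, and fixing a $1/4$-net $\mathcal{N}$ of the unit sphere $S^{s-1}$ with cardinality $|\mathcal{N}|\leq 9^s$, one has $\|A\|\leq 2\max_{\bm{x}\in\mathcal{N}}|\langle A\bm{x},\bm{x}\rangle|$. For a fixed direction $\bm{x}\in S^{s-1}$,
\begin{equation*}
\langle A\bm{x},\bm{x}\rangle=\frac{1}{n}\sum_{i=1}^n\left(\langle \bm{z}_i,\bm{x}\rangle^2-\mathbb{E}\langle \bm{z}_i,\bm{x}\rangle^2\right),
\end{equation*}
and since $\bm{z}_i$ is subgaussian with parameter $\sigma$, the scalar $\langle\bm{z}_i,\bm{x}\rangle$ is subgaussian with $\|\langle\bm{z}_i,\bm{x}\rangle\|_{\psi_2}\leq C\sigma$, so its square is subexponential and the centered summands have $\psi_1$-norm bounded by a constant multiple of $\sigma^2$.

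Next I would apply Bernstein's inequality for sums of independent centered subexponential variables: for every $\eta\geq 0$,
\begin{equation*}
\mathbb{P}\left(|\langle A\bm{x},\bm{x}\rangle|\geq\eta\right)\leq 2\exp\left(-c\,n\min\{\eta^2/\sigma^4,\,\eta/\sigma^2\}\right),
\end{equation*}
and then take a union bound over the $9^s$ points of $\mathcal{N}$. Setting the deviation target to $\eta=\tfrac12\max\{\delta,\delta^2\}$ with $\delta:=C_0\sqrt{s/n}+\tau/\sqrt{n}$, a short case analysis ($\delta\leq 1$ versus $\delta>1$) shows $n\min\{\eta^2,\eta\}\geq c(C_0^2 s+\tau^2)$, so that with $C_0$ chosen large enough (depending on $\sigma$) the net factor $9^s=e^{s\ln 9}$ is absorbed. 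This yields, for every $\tau\geq 0$, with probability at least $1-2\exp(-c\tau^2)$,
\begin{equation*}
\|\Sigma_s^n-\Sigma_s\|\leq\max\{\delta,\delta^2\},\qquad \delta=C_0\sqrt{\tfrac{s}{n}}+\frac{\tau}{\sqrt{n}}.
\end{equation*}

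Finally I would calibrate. Choosing $\tau=t\sqrt{n}/\sqrt{c}$ turns the confidence into $1-2\exp(-t^2 n)$, and because $t\geq 1$ it gives $\delta\leq C_1 t\sqrt{s/n}$ for a constant $C_1$ depending only on $\sigma$. Imposing $n\geq C(t/\epsilon)^2 s$ with $C=C_1^2$ then forces $\delta\leq\epsilon<1$, whence $\max\{\delta,\delta^2\}=\delta\leq\epsilon$, which is exactly the claim.

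The step I expect to be the genuine crux is this last calibration, and it is worth flagging why a naive approach fails. A single-scale net bound concentrates $\|\Sigma_s^n-\Sigma_s\|$ at scale $\epsilon$ only with probability $\exp(-cn\epsilon^2)$; since $\epsilon<1\leq t$ we have $\epsilon^2<t^2$, so one can never reach confidence $\exp(-t^2 n)$ while keeping the deviation at scale $\epsilon$. The resolution is precisely the separation of the free tail parameter $\tau$ from the complexity term $\sqrt{s/n}$ inside $\delta$: tuning $\tau\propto t\sqrt{n}$ pushes \emph{both} the complexity contribution and the tail contribution down to the common order $t\sqrt{s/n}$, and the sample-size hypothesis $n\geq C(t/\epsilon)^2 s$ is exactly what drives this common quantity below $\epsilon$. (This is the content of Corollary 5.50 in Vershynin's non-asymptotic random matrix notes, read with dimension $s$ and sample size $n$.)
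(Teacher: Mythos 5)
Your overall architecture --- reduce the operator norm to a maximum over a $1/4$-net of $S^{s-1}$, apply subexponential Bernstein to the centered quadratic forms $\tfrac1n\sum_i(\langle \bm{z}_i,\bm{x}\rangle^2-\mathbb{E}\langle \bm{z}_i,\bm{x}\rangle^2)$, union bound over the $9^s$ net points, and obtain $\|\Sigma_s^n-\Sigma_s\|\leq\max\{\delta,\delta^2\}$ with $\delta=C_0\sqrt{s/n}+\tau/\sqrt{n}$ on an event of probability $1-2\exp(-c\tau^2)$ --- is exactly the standard proof of the result the paper invokes (its entire proof is the citation ``See Corollary 5.50 of Vershynin''), so up to that point you are on the paper's route. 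The gap is in your final calibration, which you yourself flag as the crux. If you set $\tau=t\sqrt{n}/\sqrt{c}$ to force the confidence level $1-2\exp(-t^2n)$, then
\begin{equation*}
\delta \;=\; C_0\sqrt{\tfrac{s}{n}}+\frac{\tau}{\sqrt{n}} \;=\; C_0\sqrt{\tfrac{s}{n}}+\frac{t}{\sqrt{c}},
\end{equation*}
and the second term is a constant of size at least $1/\sqrt{c}$ that does not decrease with $n$; it is certainly not bounded by $C_1t\sqrt{s/n}$ in the regime $n\geq C(t/\epsilon)^2s\gg s$. So $\max\{\delta,\delta^2\}$ stays bounded below by a fixed positive constant and never drops below $\epsilon<1$. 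Your claimed inequality $\delta\leq C_1t\sqrt{s/n}$ is simply false for this choice of $\tau$, and the ``resolution'' you describe (that tuning $\tau\propto t\sqrt{n}$ pushes the tail contribution down to order $t\sqrt{s/n}$) does not happen.

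The underlying reason no fix is available is the one contained in your own ``why a naive approach fails'' remark, read for a \emph{single} fixed direction: $\|\Sigma_s^n-\Sigma_s\|\geq|\langle(\Sigma_s^n-\Sigma_s)\bm{x},\bm{x}\rangle|$, and for Gaussian data this scalar deviation exceeds $\epsilon$ with probability of order $\exp(-Cn\epsilon^2)$, which is much larger than $2\exp(-t^2n)$ once $\epsilon$ is small and $t\geq1$. Hence the statement with the \emph{sample size} $n$ in the exponent cannot hold. Vershynin's Corollary 5.50 puts the \emph{dimension} in the exponent: the correct calibration is $\tau=t\sqrt{s}$, which gives $\delta\leq(C_0+t)\sqrt{s/n}\leq C_1t\sqrt{s/n}\leq\epsilon$ under $n\geq C(t/\epsilon)^2s$, with confidence $1-2\exp(-ct^2s)$. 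The lemma as printed in the paper appears to be a transcription slip ($n$ where $s$ is meant, and with the absolute constant $c$ from Bernstein absorbed); your proof becomes correct once you prove that corrected statement, but as written it attempts to establish a strictly stronger claim that is false.
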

\begin{proof}
See Corollary 5.50 of \cite{RMTNON}.
\end{proof}

\begin{lemma}\label{lem_riemansum}
Let $[a,b]$ be a bounded closed interval. We take an $n$-division $\Delta$ of $[a,b]$ as 
\begin{equation*}
\Delta: \ a=s_0 \leq s_1 \leq \cdots s_{n-1} \leq s_n=b,
\end{equation*}
and hence $s_i=a+i(b-a)/n.$ If $f$ is a twice differentiable and $f^{''}$is bound and almost everywhere continuous on $[a,b]$ then 
\begin{equation}
\lim_{n \rightarrow \infty} n^2 \left\{ \int_a^b f(x)dx-\sum_{i=1}^n(s_i-s_{i-1})f(\frac{s_{i-1}+s_i}{2}) \right\}=\frac{(b-a)^2}{24}(f'(b)-f'(a)).
\end{equation}

\end{lemma}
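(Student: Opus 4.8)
The plan is to analyze the approximation error subinterval by subinterval and show that, after multiplication by $n^2$, only a single leading term survives. Write $h=(b-a)/n$ and $m_i=(s_{i-1}+s_i)/2$ for the midpoint, and set
\[
E_i := \int_{s_{i-1}}^{s_i} f(x)\,dx - h\,f(m_i),
\]
so that the bracketed quantity is $\sum_{i=1}^n E_i$. First I would expand $f$ about $m_i$ with the integral form of the remainder, $f(x)=f(m_i)+f'(m_i)(x-m_i)+\int_{m_i}^x (x-t)f''(t)\,dt$; this is legitimate because $f'$, having a bounded derivative $f''$, is Lipschitz and hence absolutely continuous, so the fundamental theorem of calculus and the integration-by-parts that produce this remainder both apply. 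Integrating over $[s_{i-1},s_i]$, the constant term reproduces $h\,f(m_i)$ and the linear term drops out since $\int_{s_{i-1}}^{s_i}(x-m_i)\,dx=0$ by symmetry about the midpoint, leaving the clean representation $E_i=\int_{s_{i-1}}^{s_i}\int_{m_i}^x (x-t)f''(t)\,dt\,dx$.

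Next I would isolate the leading term by comparing $f''(t)$ with its midpoint value. Because $\int_{s_{i-1}}^{s_i}\int_{m_i}^x (x-t)\,dt\,dx=\int_{s_{i-1}}^{s_i}\tfrac{(x-m_i)^2}{2}\,dx=h^3/24$, we get the decomposition
\[
E_i=\frac{h^3}{24}\,f''(m_i)+\int_{s_{i-1}}^{s_i}\int_{m_i}^x (x-t)\bigl[f''(t)-f''(m_i)\bigr]\,dt\,dx.
\]
For the main term, $n^2\sum_i \tfrac{h^3}{24}f''(m_i)=\tfrac{(b-a)^2}{24}\sum_i h\,f''(m_i)$, which is $\tfrac{(b-a)^2}{24}$ times the midpoint Riemann sum of $f''$; since $f''$ is bounded and a.e.\ continuous (hence Riemann integrable), this converges to $\tfrac{(b-a)^2}{24}\int_a^b f''(t)\,dt=\tfrac{(b-a)^2}{24}(f'(b)-f'(a))$, the claimed limit. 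It then remains only to show that $n^2$ times the sum of the remainder terms tends to $0$.

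For the remainder, bounding $|x-t|\le h/2$ and enlarging the inner domain to the whole subinterval gives $\bigl|E_i-\tfrac{h^3}{24}f''(m_i)\bigr|\le \tfrac{h^2}{2}\int_{s_{i-1}}^{s_i}|f''(t)-f''(m_i)|\,dt$, so after summing and multiplying by $n^2=(b-a)^2/h^2$ the problem reduces to proving $\sum_{i=1}^n\int_{s_{i-1}}^{s_i}|f''(t)-f''(m_i)|\,dt\to 0$. This is precisely the $L^1([a,b])$ convergence of the midpoint step function $g_n(t):=f''(m_i)$ on $[s_{i-1},s_i)$ to $g:=f''$, and it is the step where the weak regularity hypothesis matters and is the main obstacle: with $f''$ merely bounded and a.e.\ continuous one cannot invoke the mean-value form of the midpoint error. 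The plan is to use the bounded convergence theorem: at every continuity point $t_0$ of $g$ (a full-measure set by hypothesis) the enclosing midpoint satisfies $|m_i-t_0|\le h\to 0$, so $g_n(t_0)\to g(t_0)$; since $|g-g_n|\le 2\sup|f''|$ is uniformly bounded on the finite interval $[a,b]$, dominated convergence yields $\int_a^b|g-g_n|\,dt\to 0$. Combining this with the main-term limit completes the proof.
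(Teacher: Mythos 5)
Your argument is correct. Note, however, that the paper does not actually prove this lemma: its ``proof'' is a one-line citation to Theorem 1.1 of the reference \cite{HT}, so there is no internal argument to compare against. What you have supplied is a complete, self-contained derivation, and it handles the two delicate points created by the weak hypothesis on $f''$ correctly: (i) the Taylor expansion with integral remainder is justified because $f'$, having a bounded everywhere-defined derivative, is Lipschitz and hence absolutely continuous, so $f'(s)-f'(m_i)=\int_{m_i}^s f''(t)\,dt$ and a Fubini argument yields $E_i=\int_{s_{i-1}}^{s_i}\int_{m_i}^x (x-t)f''(t)\,dt\,dx$ without assuming continuity of $f''$; and (ii) you correctly avoid the textbook mean-value form of the midpoint error (which needs $f''$ continuous) by splitting off the exact term $\tfrac{h^3}{24}f''(m_i)$, identifying the main sum as a midpoint Riemann sum of the Riemann-integrable function $f''$ (Lebesgue's criterion applies since $f''$ is bounded and a.e.\ continuous), and controlling the remainder through the $L^1$ convergence of the midpoint step functions via dominated convergence. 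The constants check out ($\int_{s_{i-1}}^{s_i}\tfrac{(x-m_i)^2}{2}\,dx=h^3/24$ and $n^2h^2=(b-a)^2$), and $\int_a^b f''=f'(b)-f'(a)$ holds again by absolute continuity of $f'$. This is a valid and, in fact, more informative replacement for the paper's external citation.
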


\begin{proof}
See Theorem 1.1 of \cite{HT}.
\end{proof}
\begin{lemma}\label{lem_circletheorem} 

Let  $A=(a_{ij})$ be a real $ n\times n$ matrix. For  $1 \leq i \leq n,$ let  $R_{i}=\sum _{{j\neq {i}}}\left|a_{{ij}}\right| $ be the sum of the absolute values of the non-diagonal entries in the  $i$-th row. Let  $ D(a_{ii},R_{i})\subseteq \mathbb {R} $ be a closed disc centered at $a_{ii}$ with radius  $R_{i}$. Such a disc is called a \emph{Gershgorin disc}. Every eigenvalue of $ A=(a_{ij})$ lies within at least one of the Gershgorin discs  $D(a_{ii},R_{i})$, where $R_i=\sum_{j\ne i}|a_{ij}|$.
\end{lemma}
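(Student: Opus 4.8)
The plan is to give the standard one-line-idea proof of the Gershgorin circle theorem, based on localizing an eigenvector at its largest-magnitude coordinate. Fix an arbitrary eigenvalue $\lambda$ of $A$ and let $x=(x_1,\ldots,x_n)^{\T}\neq 0$ be an associated eigenvector, so that $Ax=\lambda x$. First I would choose an index $i$ at which the modulus of the coordinates is maximized, i.e. $|x_i|=\max_{1\leq j\leq n}|x_j|$. Since $x\neq 0$, this maximum is strictly positive, which is the only point requiring a moment's care: it guarantees that dividing by $x_i$ in the next step is legitimate.

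Next I would write out the $i$-th component of the identity $Ax=\lambda x$, namely $\sum_{j=1}^n a_{ij}x_j=\lambda x_i$, and move the diagonal contribution to the left-hand side to obtain
\begin{equation*}
(\lambda-a_{ii})\,x_i=\sum_{j\neq i}a_{ij}x_j.
\end{equation*}
Taking absolute values, applying the triangle inequality, and then dividing by $|x_i|>0$ yields
\begin{equation*}
|\lambda-a_{ii}|\leq \sum_{j\neq i}|a_{ij}|\,\frac{|x_j|}{|x_i|}\leq \sum_{j\neq i}|a_{ij}|=R_i,
\end{equation*}
where the second inequality uses that $|x_j|/|x_i|\leq 1$ for every $j$ by the maximality of $|x_i|$. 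This exactly says $\lambda\in D(a_{ii},R_i)$, so $\lambda$ lies in the $i$-th Gershgorin disc.

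Since $\lambda$ was an arbitrary eigenvalue, every eigenvalue lies in at least one disc $D(a_{ii},R_i)$, which is the claim. I do not expect any real obstacle here: the result is classical and the argument is entirely elementary, the only substantive ingredient being the choice of the dominating coordinate of the eigenvector and the observation that it is nonzero. (I note for completeness that the statement as phrased places the discs in $\mathbb{R}$; the same proof works verbatim over $\mathbb{C}$, and for a real matrix the complex eigenvalues still satisfy the bound, so no additional hypothesis beyond $A$ being a square matrix is needed.)
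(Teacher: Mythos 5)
Your proof is correct and complete: it is the standard argument for the Gershgorin circle theorem, localizing an eigenvector at a coordinate of maximal modulus and applying the triangle inequality to the corresponding row of $Ax=\lambda x$. The paper does not actually prove this lemma; it simply cites Lemma D.1 of an external reference, so there is nothing internal to compare against, and your elementary argument is exactly what such a reference would contain. Your closing remark is also apt: since the statement restricts the discs to $\mathbb{R}$, one should either assume $A$ is symmetric (as in the paper's application to $\frac{1}{T}\bm{B}\bm{B}^{\T}$, where all eigenvalues are real) or read the discs as subsets of $\mathbb{C}$; the argument itself is unaffected.
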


\begin{proof}
See Lemma D.1 of \cite{ding2019}. 
\end{proof}

Finally, we provide a concentration inequality for the locally stationary time series $\{\epsilon_{ik}(t)\}, 1 \leq i \leq n, \ 1 \leq k \leq p.$

\begin{lemma}\label{lem_localconcen}
Let $x_i=G_i(\mathcal{F}_i),$ where $G_i(\cdot)$ is a measurable function and  $\mathcal{F}_i=(\ldots, \eta_{i-1}, \eta_i)$ and $\eta_i, \ i  \in \mathbb{Z}$ are i.i.d.  random variables. Suppose that $\mathbb{E}x_i=0$ and $\max_i \mathbb{E}|x_i|^q<\infty$ for some $q>1.$ For some $k>0,$ let $\delta_x(k):=\max_{ 1 \leq i \leq n} \left\|G_i(\mathcal{F}_i)-G_i(\mathcal{F}_{i,i-k}) \right\|_q.$ We further let $\delta_x(k)=0$ if $k<0.$ Write $\gamma_k=\sum_{i=0}^k \delta_x(i).$ Let $S_i=\sum_{j=1}^i x_j.$ \\
(i). For $q'=\min(2,q),$
\begin{equation*}
\left\| S_n \right\|_q^{q'} \leq C_q \sum_{i=-n}^{\infty} (\gamma_{i+n}-\gamma_i)^{q'}.
\end{equation*}
(ii). If $\Delta:=\sum_{j=0}^{\infty} \delta_x(j) <\infty,$ we then have 
\begin{equation*}
\left\| \max_{1 \leq i \leq n}|S_i| \right\|_q \leq C_q n^{1/q'} \Delta. 
\end{equation*}
In (i) and (ii), $C_q$ are generic finite constants which only depend on $q$ and can vary from place to place. 
\end{lemma}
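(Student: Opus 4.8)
The plan is to derive both bounds from one device: the orthogonal martingale (projection) decomposition of each $x_i$ along the filtration $\{\mathcal{F}_k\}$. Define the projections $\mathcal{P}_k(\cdot) := \mathbb{E}[\,\cdot \mid \mathcal{F}_k\,] - \mathbb{E}[\,\cdot \mid \mathcal{F}_{k-1}\,]$. Since $\eta_i$ are i.i.d.\ the tail $\sigma$-field is trivial, so $\mathbb{E}[x_l\mid \mathcal{F}_{-\infty}]=\mathbb{E}x_l=0$ and $x_l = \sum_{j\ge 0}\mathcal{P}_{l-j}x_l$ in $L^q$, the summands being martingale differences. The first ingredient is the coupling bound $\|\mathcal{P}_{l-j}x_l\|_q \le \delta_x(j)$: writing $x_l^\ast$ for the copy of $x_l$ in which the innovation $\eta_{l-j}$ lying $j$ steps in the past is replaced by an independent copy, one has $\mathbb{E}[x_l^\ast \mid \mathcal{F}_{l-j}] = \mathbb{E}[x_l \mid \mathcal{F}_{l-j-1}]$ because $x_l^\ast$ is independent of $\eta_{l-j}$ and identically distributed; hence $\mathcal{P}_{l-j}x_l = \mathbb{E}[x_l - x_l^\ast \mid \mathcal{F}_{l-j}]$, and conditional Jensen together with Definition \ref{defn_physical} gives $\|\mathcal{P}_{l-j}x_l\|_q \le \|x_l - x_l^\ast\|_q \le \delta_x(j)$. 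The second ingredient is the martingale moment inequality in its $q'=\min(2,q)$ form: for martingale differences $\{D_k\}$ one has $\|\sum_k D_k\|_q^{q'} \le C_q \sum_k \|D_k\|_q^{q'}$, which for $q \ge 2$ is Burkholder's inequality followed by Minkowski in $L^{q/2}$, and for $1<q<2$ follows from Burkholder--Davis--Gundy combined with the subadditivity of $t \mapsto t^{q/2}$.

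For part (i) I would reorganize $S_n$ as a martingale in the projection index. Interchanging the order of summation, $S_n = \sum_{k=-\infty}^{n} M_{n,k}$ with $M_{n,k} := \sum_{l=\max(1,k)}^{n}\mathcal{P}_k x_l$, and $\{M_{n,k}\}_k$ is a martingale difference sequence with respect to $\{\mathcal{F}_k\}$. The coupling bound gives $\|M_{n,k}\|_q \le \sum_{l=\max(1,k)}^{n}\delta_x(l-k) = \gamma_{n-k} - \gamma_{-k}$, using $\delta_x(\cdot)=0$ on the negatives. Applying the moment inequality and reindexing $i=-k$ then yields $\|S_n\|_q^{q'} \le C_q \sum_{k\le n}(\gamma_{n-k}-\gamma_{-k})^{q'} = C_q \sum_{i=-n}^{\infty}(\gamma_{n+i}-\gamma_i)^{q'}$, which is the claim.

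For part (ii) the key is that a \emph{different} grouping produces genuine martingales in the time index $i$, which lets Doob's maximal inequality be used without sacrificing the square-root rate. Fix a lag $j\ge 0$ and set $T_j(i) := \sum_{l=1}^{i}\mathcal{P}_{l-j}x_l$. Since $\mathcal{P}_{l-j}x_l$ is $\mathcal{F}_{l-j}$-measurable with $\mathbb{E}[\mathcal{P}_{l-j}x_l \mid \mathcal{F}_{l-j-1}] = 0$, the process $\{T_j(i)\}_i$ is a martingale with respect to the shifted filtration $\{\mathcal{F}_{i-j}\}_i$. Doob's $L^q$ maximal inequality (valid since $q>1$), followed by the moment inequality and the coupling bound, gives $\|\max_{1\le i\le n}|T_j(i)|\|_q \le \tfrac{q}{q-1}\|T_j(n)\|_q \le C_q\big(\sum_{l=1}^n \delta_x(j)^{q'}\big)^{1/q'} = C_q n^{1/q'}\delta_x(j)$. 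Since $S_i = \sum_{j\ge 0}T_j(i)$, the triangle inequality over $j$ finishes the proof: $\|\max_{1\le i\le n}|S_i|\|_q \le \sum_{j\ge 0}\|\max_i|T_j(i)|\|_q \le C_q n^{1/q'}\sum_{j\ge 0}\delta_x(j) = C_q n^{1/q'}\Delta$.

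The hard part will be the martingale moment inequality in the regime $1<q<2$: Burkholder's inequality and clean $\ell^2$-aggregation are only available for $q\ge 2$, so for $q<2$ one must route through Burkholder--Davis--Gundy and exploit subadditivity, which is exactly why the exponent $q'=\min(2,q)$ propagates through both statements. A secondary point deserving care is justifying that the rearrangements defining $M_{n,k}$ and the lag grouping $T_j(i)$ converge in $L^q$, so the formal manipulations are legitimate; this is supplied by $\|x_i\|_q<\infty$ and the $L^q$ martingale convergence underlying $x_l = \sum_{j\ge0}\mathcal{P}_{l-j}x_l$, and does not require $\Delta<\infty$ for part (i).
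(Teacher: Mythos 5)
The paper does not actually prove this lemma: its ``proof'' is the single line ``See Lemma D.6 of \cite{ding2019}'', so there is no internal argument to compare yours against. Your proof is correct and is the standard argument behind results of this type, namely Wu's martingale projection method: the identity $\mathcal{P}_{l-j}x_l=\mathbb{E}[x_l-x_l^{*}\mid\mathcal{F}_{l-j}]$ yielding $\|\mathcal{P}_{l-j}x_l\|_q\le\delta_x(j)$ via conditional Jensen, the Burkholder/von Bahr--Esseen moment inequality with exponent $q'=\min(2,q)$ (Burkholder plus Minkowski in $L^{q/2}$ for $q\ge 2$, Burkholder--Davis--Gundy plus subadditivity of $t\mapsto t^{q/2}$ for $1<q<2$), the $k$-indexed martingale differences $M_{n,k}$ for part (i), and the lag-$j$ martingales $T_j(\cdot)$ combined with Doob's $L^q$ maximal inequality and the triangle inequality over $j$ for part (ii). These are exactly the ingredients of the proof in the cited reference, so in substance you have reconstructed the intended argument rather than found a new one. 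Two small points are worth flagging, neither of which is a gap. First, the statement's definition of $\delta_x(k)$ via $\mathcal{F}_{i,i-k}$ literally indexes the replaced innovation by absolute position rather than by lag; your reading of $\delta_x(j)$ as the lag-$j$ measure, so that $\|\mathcal{P}_{l-j}x_l\|_q\le\delta_x(j)$, is the only one under which the stated bounds $\gamma_{n-k}-\gamma_{-k}$ make sense and is the one consistent with Definition \ref{defn_physical} of the main text, so it is clearly the intended reading. Second, in part (ii) the step you call ``the triangle inequality over $j$'' should be recorded as the pointwise chain $\max_{1\le i\le n}|S_i|\le\sum_{j\ge 0}\max_{1\le i\le n}|T_j(i)|$ followed by Minkowski in $L^q$; this is immediate but deserves a line. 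Your closing remarks on where the real work lies (the $1<q<2$ regime of the moment inequality, and the $L^q$ convergence justifying the rearrangements) are accurate.
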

\begin{proof}
See Lemma D.6 of \cite{ding2019}. 
\end{proof}

\subsection{Proof of Theorem \ref{thm_main}}

With the preparation in Section \ref{sec: auxilaritylemma}, especially Lemma \ref{lem_sign}, we now proceed to finish the proof of Theorem \ref{thm_main}.

\begin{proof}
The proof is essentially similar to that of \cite[Theorem 1]{negahban2012unified}, for the self-completeness, we also provide the complete proof. In light of Lemma \ref{lem_sign}, it suffices to establish a lower bound on $\mathcal{F}(\Delta)$ over $\mathbb{K}(\delta)$ for an appropriately chosen radius $\delta>0.$ Indeed, for an arbitrary $\Delta \in \mathbb{K}(\delta),$ using the definition of $\mathcal{F}$ in (\ref{eq_defnmathcalf}),  we have 
\begin{align*}
\mathcal{F}(\Delta) & \geq \langle \nabla \mathcal{L}(\theta^*), \Delta \rangle+\kappa_{\mathcal{L}} \| \Delta \|^2-\tau_{\mathcal{L}}^2(\theta^*)+\lambda_n\{\mathcal{R}(\theta^*+\Delta)-\mathcal{R}(\theta^*)\}+\mathcal{D}(\Delta) , \\
& \geq \langle \nabla \mathcal{L}(\theta^*), \Delta \rangle+\kappa_{\mathcal{L}} \| \Delta \|^2-\tau_{\mathcal{L}}^2(\theta^*) + \\
& \quad +\lambda_n\{\mathcal{R}(\Delta_{\overline{\mathcal{M}}^\perp})-\mathcal{R}(\Delta_{\overline{\mathcal{M}}})-2 \mathcal{R}(\theta^*_{\mathcal{M}^\perp})\}+ \mathcal{D}(\Delta),
\end{align*}  
where the first inequality follows from RSC and the second inequality follows from Lemma \ref{lem_deviation}. Moreover, by the Cauchy-Schwarz inequality and the definition of dual norm, we readily obtain that 
\begin{equation*}
\left|\langle \nabla \mathcal{L}(\theta^*), \Delta \rangle \right| \leq \mathcal{R}^*(\nabla \mathcal{L}(\theta^*)) \mathcal{R}(\Delta).
\end{equation*}
Since $\lambda_n \geq 2 \mathcal{R}^*(\nabla \mathcal{L}(\theta^*))$ by assumption, we find that $\left|\langle \nabla \mathcal{L}(\theta^*), \Delta \rangle \right| \leq \frac{\lambda_n}{2} \mathcal{R}(\Delta)$ and hence we have
\begin{eqnarray*}
\mathcal{F}(\Delta)  &\geq& 
\kappa_{\mathcal{L}} \| \Delta \|^2-\tau_{\mathcal{L}}^2(\theta^*)+\\
&& + \lambda_n\{\mathcal{R}(\Delta_{\overline{\mathcal{M}}^\perp})-\mathcal{R}(\Delta_{\overline{\mathcal{M}}})-2 \mathcal{R}(\theta^*_{\mathcal{M}^\perp})\}+\mathcal{D}(\Delta)-\frac{\lambda_n}{2} \mathcal{R}(\Delta).
\end{eqnarray*}
Together with $\mathcal{R}(\Delta) \leq \mathcal{R}(\Delta_{\overline{\mathcal{M}}^\perp})+\mathcal{R}(\Delta_{\overline{\mathcal{M}}}),$ we find that 
\begin{equation*}
\mathcal{F}(\Delta) \geq \kappa_{\mathcal{L}} \| \Delta \|^2-\tau_{\mathcal{L}}^2(\theta^*)-\frac{\lambda_n}{2} \{ 3 \mathcal{R}(\Delta_{\overline{\mathcal{M}}})+4 \mathcal{R}(\theta^*_{\mathcal{M}^\perp}) \}+\mathcal{D}(\Delta).
\end{equation*} 
Since $0 \in \overline{\mathcal{M}},$ it is easy to see (the equation below eq. (55) of \cite{negahban2012unified}) $\| \Delta_{\overline{\mathcal{M}}} \| \leq \| \Delta \|.$ Moreover, by Definition \ref{defn_subspacecompati}, we find that $ \mathcal{R}(\Delta_{\overline{\mathcal{M}}}) \leq \Psi(\overline{\mathcal{M}}) \| \Delta \|. $ Since $\sup_{\Delta}|\mathcal{D}(\Delta)| \leq \epsilon,$ this leads to 
\begin{equation*}
\mathcal{F}(\Delta) \geq \kappa_{\mathcal{L}} \| \Delta \|^2-\tau_{\mathcal{L}}^2(\theta^*)-\frac{\lambda_n}{2} \{ 3 \Psi(\overline{\mathcal{M}})\| \Delta \|+4 \mathcal{R}(\theta^*_{\mathcal{M}^\perp}) \}-\epsilon.
\end{equation*} 
Note that the right-hand side of the above inequality is a strictly defined quadratic form in $\| \Delta \|$ and hence will be positive for $\| \Delta \|$ large. The proof then follows from some elementary computation on the quadratic equation.

\end{proof}

\subsection{Proof of Theorem \ref{cor_main}}
We will employ Theorem \ref{thm_main} to prove Theorem \ref{cor_main}. The key ingredient is to provide a bound for $\mathcal{E}(\cdot)$ using (\ref{eq_betaexpan}).  We will need the following facts on the matrix differentiation. For details, it can be found in \cite{matrixbook}. For any two $m \times n $ rectangular matrices $A,B$ and any matrix function $f,$ we have 
\begin{equation}\label{eq_matrixfacts}
\nabla_{A} \operatorname{tr} (AB^{\T})=B, \ \nabla_{A^{\T}} f(A)=(\nabla_{A} f(A))^{\T}. 
\end{equation}
We prepare some computation on the derivatives using (\ref{eq_matrixfacts}). Recall (\ref{eq_definel1}), as 
\begin{equation*}
\mathcal{L}_1=\frac{1}{nT} \operatorname{tr} \left( \mathcal{Y} \mathcal{Y}^{\T}-\mathcal{Y} ( \bm{M} \mathcal{X}_c)^{\T}- \bm{M} \mathcal{X}_c \mathcal{Y}^{\T}+ \bm{M} \mathcal{X}_c \mathcal{X}_c^{\T} \bm{M}^{\T} \right),
\end{equation*}
by (\ref{eq_matrixfacts}), we readily obtain that 
\begin{equation}\label{eq_gradone}
\nabla_{\bm{M}} \mathcal{L}_1=\frac{1}{nT}\left(-\mathcal{Y} \mathcal{X}_c^{\T}- \mathcal{X}_c \mathcal{Y}^{\T}+\bm{M} \mathcal{X}_c \mathcal{X}_c^{\T}+\mathcal{X}_c \mathcal{X}_c^{\T} \bm{M}^{\T} \right), 
\end{equation}
and the Hessian matrix of $\mathcal{L}_1$ at $\bm{M}$ is 
\begin{equation}\label{eq_hessianone}
\mathbf{H}_{\bm{M}} \mathcal{L}_1=\frac{2}{nT} \mathcal{X}_c \mathcal{X}_c^{\T}.
\end{equation}
Recall (\ref{eq_defnmathcale}), we have that 
\begin{equation}\label{eq_gradtwo}
\nabla_{\bm{M}} \mathcal{E}=-\frac{2}{nT} (\bm{M}^\dagger)^*\mathcal{X}_c^\dagger   \mathcal{X}_c^{\T},
\end{equation} 
and the Hessian matrix of $\mathcal{E}$ at $\bm{M}$ is
\begin{equation}\label{eq_hessiantwo}
\mathbf{H}_{\bm{M}} \mathcal{E}=\bm{0}. 
\end{equation}

\begin{proof} In view of (\ref{eq_ourlinear}), we need to apply (i) of Theorem \ref{thm_main}. We prepare two important facts for our proof. First, since $\mathcal{R}$ is the nuclear norm, we have that (\cite[Section 2.3]{negahban2012unified})
\begin{equation}\label{eq_dualnorm}
\mathcal{R}^*(\bm{M})=\| \bm{M} \|,
\end{equation}
where $\| \bm{M} \|$ stands for the largest singular value of $\bm{M}.$ Second, from the proof of \cite[Corollary 5]{NIPS}, we know that under Assumption \ref{assum_lowrank},  $\Psi(\overline{\mathcal{M}})=2 \sqrt{r}$ (Recall Definition \ref{defn_subspacecompati}).   Armed with the above results, we now proceed to check the conditions of Theorem \ref{thm_main} and the computation of the inputs there. 

In what follows, we denote $\Delta=\mathsf{M}-\widehat{\mathsf{M}}$ and $\bm{\Delta}$ to be a diagonal matrix with $n$ blocks whose diagonals are $\Delta.$   \\

\noindent \underline{Checking of the decomposablity and differentiablity}: It is clear that $\mathcal{L}$ is differentiable with respect to $\bm{M}.$ $\mathcal{R}$ is the nuclear norm and it is decomposable with respect to suitable subspaces defined in \cite[equations (13a) and (13b)]{negahban2012unified}.     \\

\noindent \underline{Checking of the RSC condition}: By (\ref{eq_hessianone}) and (\ref{eq_hessiantwo}),  
we find that the first-order Taylor expansion from Definition \ref{defn_rsc} is exact such that 
\begin{equation*}
\delta \mathcal{L}=\frac{2}{nT}\| \mathcal{X}_c \bm{\Delta} \|_F^2.
\end{equation*}  
It suffices to provide a lower bound for $\frac{2}{nT}\| \mathcal{X}_c \bm{\Delta} \|_F^2.$ {Note that 
\begin{align*}
\delta \mathcal{L}=\frac{2}{nT} \sum_{i=1}^n \| \bm{X}_i \Delta \|_F^2 & \geq \frac{2}{nT} \| \sum_{i=1}^n \bm{X}_i \Delta \|_F^2 \\ &\geq 2 \| \Delta \|_F^2 \lambda_{\min} \left( (1/nT) \sum_{i=1}^n \bm{X}_i \bm{X}_i^\top \right). 
\end{align*}
where $X=(\bm{x}_1, \cdots, \bm{x}_n)$ and in the first inequality we use the property of matrix norm and the second inequality we use Lemma \ref{lem_tracebound}.   }

By Assumption \ref{assump_parameter} and Lemma \ref{lem_covconverge}, for some constant $C>0,$ with probability at least $1-2 \exp(-n),$ we have 
\begin{equation}\label{kroneker1}
\|\frac{1}{n}XX^{\T}-\Sigma_s \| \leq C \sqrt{\frac{s}{n}}.
\end{equation} 
Together with  Assumptions \ref{assump_parameter} and \ref{ass_pdc}, when $n$ is sufficiently large, with probability at least $1-2\exp(-n),$ we have
{ 
\begin{equation*}
\lambda_{\min}(\frac{1}{n}XX^{\T}) \geq \frac{\sqrt{2}}{2} \lambda_{\min} (\Sigma_s).
\end{equation*} 
}

Moreover, by Lemmas \ref{lem_riemansum} and \ref{lem_circletheorem}, we find that for some constant $C>0,$
\begin{equation}\label{kroneker2}
\| \frac{1}{T} \bm{B}\bm{B}^{\T}-I_c \| \leq  \frac{C}{T^2},
\end{equation}
where we use the smoothness of the basis functions and the facts that $\int_0^1 b_i(t) b_j(t)dt=\delta_{ij}$ and  the $ij$th entry of $\frac{1}{T} \bm{B} \bm{B}^\top$ is $\frac{1}{T} \sum_{k=1}^T b_{i}(t_k) b_j(t_k).$ 

Hence, when $T$ is large enough, we have
{
\begin{equation*}
\lambda_{\min}(\frac{1}{T} \bm{B} \bm{B}^{\T}) \geq \frac{\sqrt{2}}{2}.
\end{equation*}
}
%
%
This shows that with probability at least $1-\exp(-n),$ we have 
{
\begin{equation*}
\delta \mathcal{L} \geq c \lambda_{\min}(\Sigma_s) \| \Delta \|_F^2,
\end{equation*}
where we use the fact that the eigenvalues of $A \otimes B$ are the products of the eigenvalues of $A$ and $B.$ 
}
Hence, we can take $\kappa_{\mathcal{L}}=C \lambda_{\min} (\Sigma_s)>0.$ 

\noindent \underline{Computation of $\lambda_n$}: Let $\mathbf{E}$ be the matrix contains $\bm{E}_i, i=1,2,\cdots, n.$ By (\ref{eq_gradone}) and (\ref{eq_gradtwo}), we have that 
\begin{equation}\label{eq_1213}
\nabla_{\bm{M}} \mathcal{L}=-\frac{1}{nT} \left( \mathbf{E} \mathcal{X}_c^{\T}+\mathcal{X}_c \mathbf{E}^{\T}\right)-\frac{3}{nT}(\bm{M}^\dagger)^* \mathcal{X}_c^\dagger \mathcal{X}_c^{\T}-\frac{1}{nT} \mathcal{X}_c \left[ (\bm{M}^\dagger)^* \mathcal{X}_c^\dagger\right]^{\T},
\end{equation} 
where we use the fact that $\mathcal{Y}-\bm{M} \mathcal{X}_c=(\bm{M}^\dagger)^* \mathcal{X}_c^\dagger+\mathbf{E}.$ By (\ref{eq_dualnorm}), it suffices to provide an upper bound for the largest singular value of the right-hand side of (\ref{eq_1213}) using Lemmas \ref{lem_circlerectangular} and \ref{lem_riemansum}. First, by Lemmas \ref{lem_circlerectangular} and \ref{lem_localconcen}, under Assumption \ref{assum_varaible}, for some constant $C_q>0,$ with probability at least $1-C_q n^{-q \tau}$, for some constant $C>0,$
\begin{align*}
\frac{1}{nT}\| \mathbf{E} \mathcal{X}_c^{\T}\| \leq \frac{p \xi n^{\tau}}{\sqrt{T}}.
\end{align*} 
Similarly, we have 
\begin{align*}
\frac{1}{nT}\| \mathcal{X}_c \mathbf{E}^{\T}\| \leq \frac{p \xi n^{\tau}}{\sqrt{T}}.
\end{align*} 
Second, by (\ref{eq_betaexpan}), (\ref{eq_scalarmodel}) and  Assumption \ref{ass_pdc} that $\bm{x}_i, i=1,2,\cdots, n,$ are subgaussian, for some small constant $\tau>0,$ we have that with probability at least $1-2\exp(-n^{2\tau}),$ 
\begin{equation}\label{eq_c-d1}
\frac{1}{nT} \| (\bm{M}^\dagger)^* \mathcal{X}_c^\dagger \mathcal{X}_c^{\T} \| \leq  \frac{p \xi n^{\tau} c^{-d}}{\sqrt{T}},
\end{equation} 
where we use Lemma \ref{lem_circlerectangular}. Similarly, we have 
\begin{equation}\label{eq_c-d2}
\frac{1}{nT} \| \mathcal{X}_c \left[ (\bm{M}^\dagger)^* \mathcal{X}_c^\dagger\right]^{\T} \| \leq  \frac{p \xi n^{\tau} c^{-d}}{\sqrt{T}}.
\end{equation}
This implies that with probability at least $1-C_q n^{-q \tau}$, we can choose
\begin{equation*}
\lambda_n \geq C \frac{p \xi n^{\tau}}{\sqrt{T}}.
\end{equation*}

\noindent \underline{Computation of $\epsilon$ in (\ref{eq_epsilon})}: We simply show that $\epsilon$ can be chosen as a bounded constant value with high probability, which is sufficient for our proof. This is done by using Cauchy-Schwartz inequality.  Note that 
\begin{equation*}
\mathcal{D}(\Delta)=\frac{1}{nT}  \operatorname{tr}\left( \bm{\Delta} \mathcal{X}_c ((\theta^\dagger)^* \mathcal{X}_c^\dagger)^{\T} \right) \leq  \frac{1}{nT} \left( \| \bm{\Delta} \mathcal{X}_c \|_F^2+\| (\theta^\dagger \mathcal{X}_c^\dagger)^*  \|_F^2 \right),
\end{equation*}
where we use Cauchy-Schwartz inequality.  On one hand,  by (\ref{kroneker1}) and  Assumption \ref{ass_pdc} that $\Sigma_s$ is bounded, when $n$ is large enough, we have that with probability at least $1-2\exp(-n)$
\begin{equation}\label{eq_epsilon1}
\frac{1}{nT} \| \Delta \mathcal{X}_c \|_F^2 \leq C_1 \lambda_1(\Sigma_s) \| \Delta \|_F^2 \leq C_2,
\end{equation}
where in the first inequality we use Lemma \ref{lem_tracebound} and second inequality we use Assumption \ref{assum_lowrank},  and $C_1, C_2>0$ are some constants. On the other hand, by (\ref{eq_betaexpan}), (\ref{eq_scalarmodel}) and  Assumption \ref{ass_pdc} that $\bm{x}_i, i=1,2,\cdots, n,$ are subgaussian, for some small constant $\tau>0,$ we have that with probability at least $1-2\exp(-n^{2\tau}),$ 
\begin{equation}\label{eq_epsilon2}
\frac{1}{nT} \| (\theta^\dagger)^* \mathcal{X}_c^\dagger \|_F^2 \leq  \xi p s^2 n^{2\tau} c^{-d}=o(1),
\end{equation}
where in the first inequality we use the fact that for subgaussian random variable $x,$ we have $\mathbb{P}(|x| \geq n^{\tau})\leq 1- 2\exp(-n^{2\tau})$ and the definition of $\| \cdot \|_F^2,$ and for the second equality we use Assumption \ref{assump_parameter}. By (\ref{eq_epsilon1}) and (\ref{eq_epsilon2}), we have shown that with probability at least $1-2\exp(-n^{2\tau})$ 
\begin{equation*}
\mathcal{D}(\Delta) \leq 2C_2. 
\end{equation*} 

After checking the conditions of Theorem \ref{thm_main} and the computation of $\lambda_n$ and $\epsilon,$ together with the fact that $\mathcal{R}(\theta^*_{\mathcal{M}^\perp})=0,$ we complete the proof. 
 
\end{proof}

\subsection{Proof of Corollary \ref{coro_coro}}
In this section, we prove Corollary \ref{coro_coro}.

\begin{proof}
Note that $\widehat{\bm{\beta}}_j(t)-\bm{\beta}_j(t) =(\widehat{\bm{M}}_j-\bm{M}_j)b(t). $ Since 
\begin{align*}
\| \widehat{\bm{\beta}}_j(t)-\bm{\beta}_j(t) \|^2 &=\operatorname{tr}(( \widehat{\bm{\beta}}_j(t)-\bm{\beta}_j(t))( \widehat{\bm{\beta}}_j(t)-\bm{\beta}_j(t))^\top )  \\
&=\operatorname{tr}((\widehat{\bm{M}}_j-\bm{M}_j)(\widehat{\bm{M}}_j-\bm{M}_j)^\top(\bm{b}(t) \bm{b}^\top(t))),
\end{align*}
by Lemma \ref{lem_tracebound} and Theorem \ref{cor_main}, with probability at least $1-C_q n^{-q\tau},$ we have
\begin{equation*}
\| \widehat{\bm{\beta}}_j(t)-\bm{\beta}_j(t) \|^2 \leq C  \left( \frac{\sqrt{p \xi } n^{\tau/2}}{ \sqrt{c \operatorname{tr}(\Sigma_s)} T^{1/4}}+ \frac{\sqrt{r} p \xi n^{\tau}}{c \operatorname{tr}(\Sigma_s) \sqrt{T}} \right)^2 \|\bm{b}(t) \bm{b}^\top(t) \|.
\end{equation*}
Moreover, for any $t \in [0,1],$ we have 
\begin{equation}
\|\bm{b}(t) \bm{b}^\top(t) \|=\sum_{h=1}^c b_h^2(t) \leq c.  
\end{equation}
We can therefore conclude our proof. 
\end{proof}

\section{Numerical simulation and real data application results}

\begin{figure}[H] \centering \subfigure[]{\includegraphics[scale = 0.15]{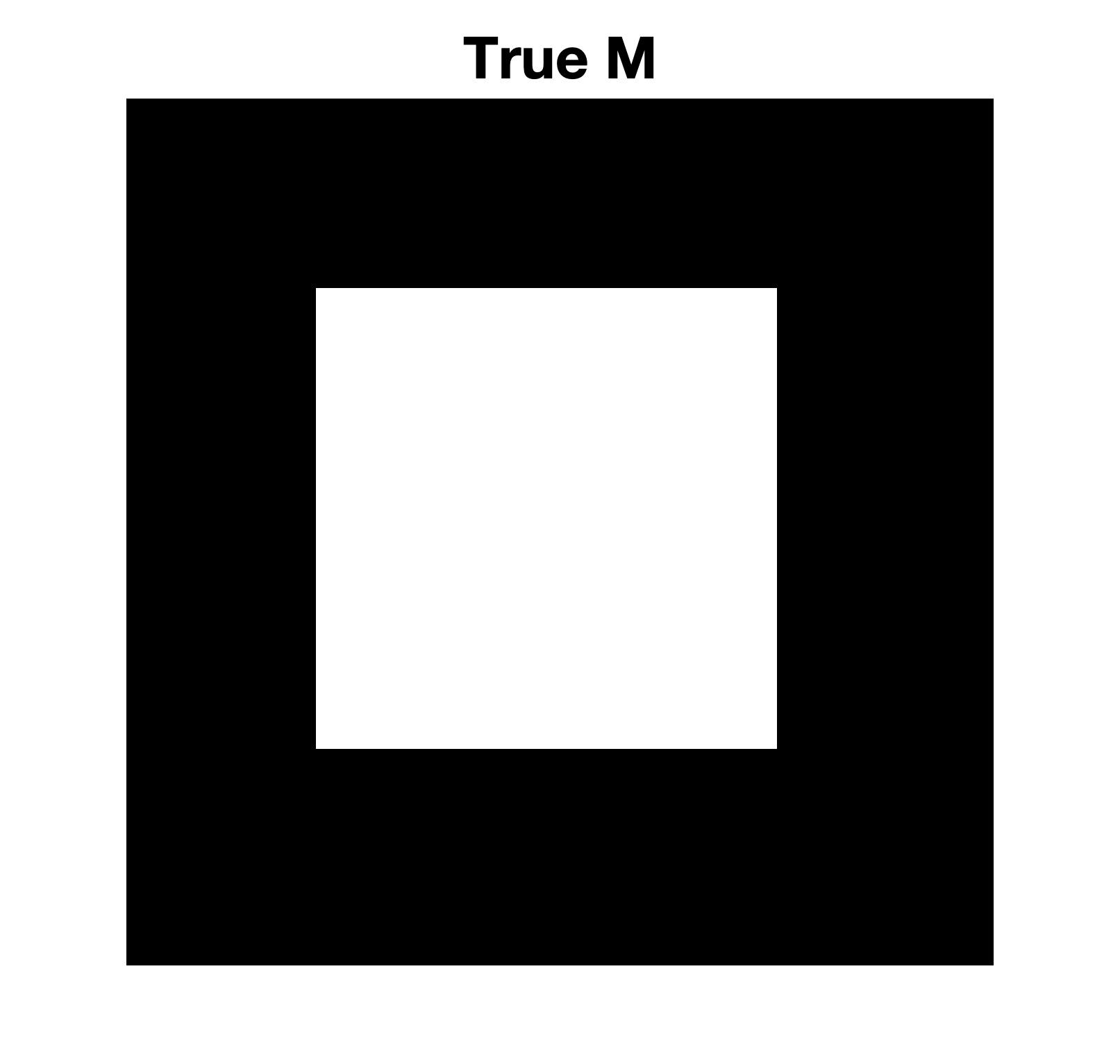}} \subfigure[]{\includegraphics[scale = 0.15]{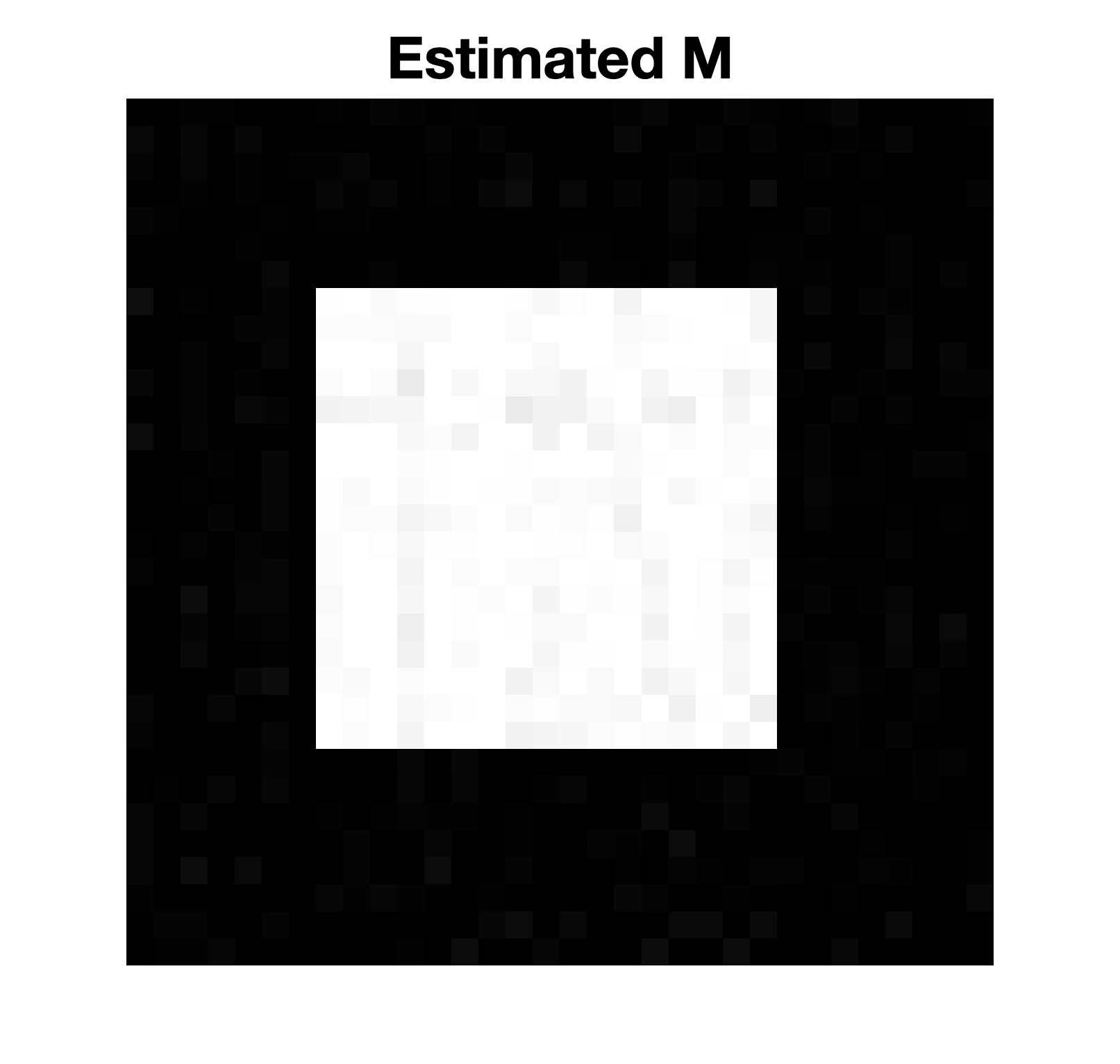}} \subfigure[]{\includegraphics[scale = 0.15]{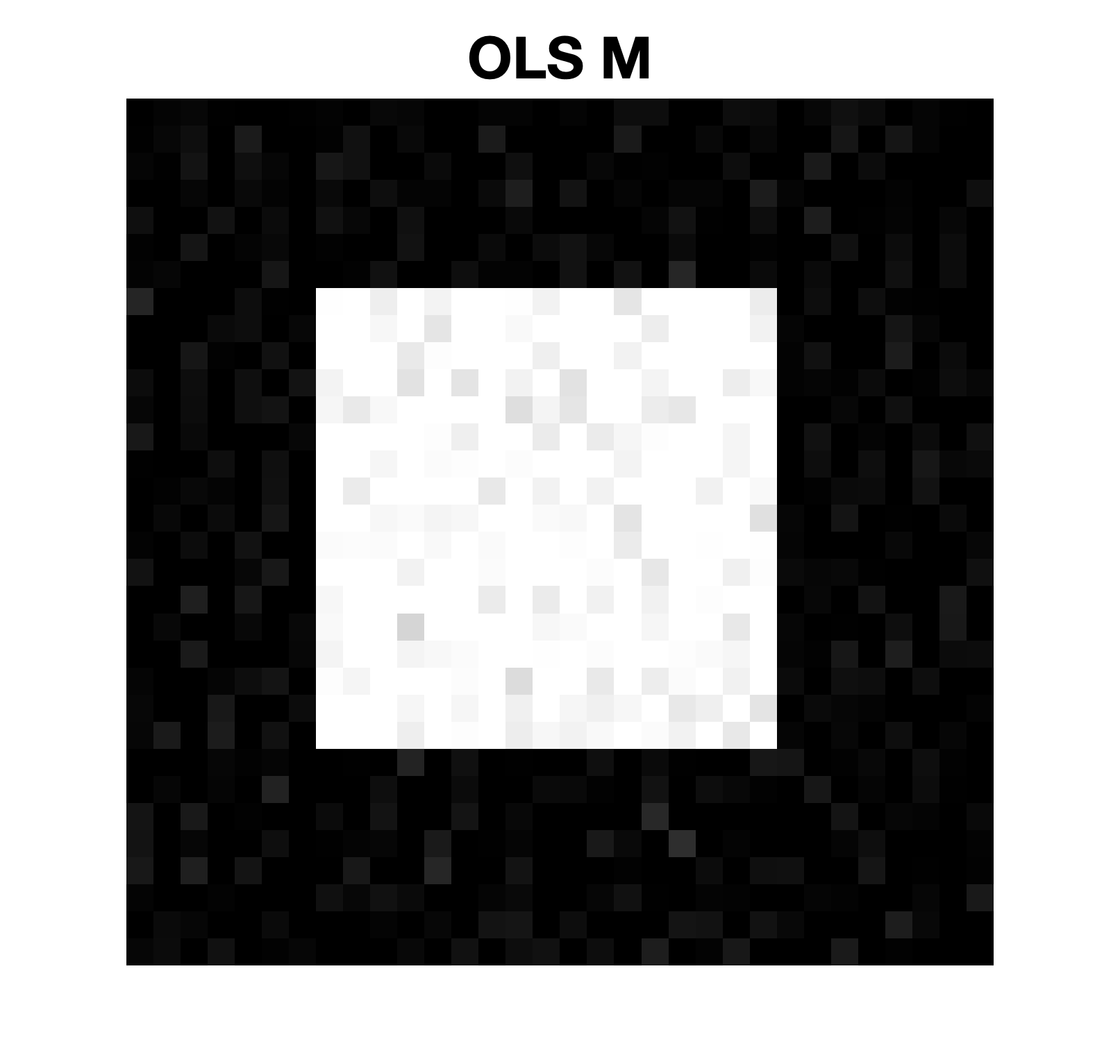}} 
\subfigure[]{\includegraphics[scale = 0.15]{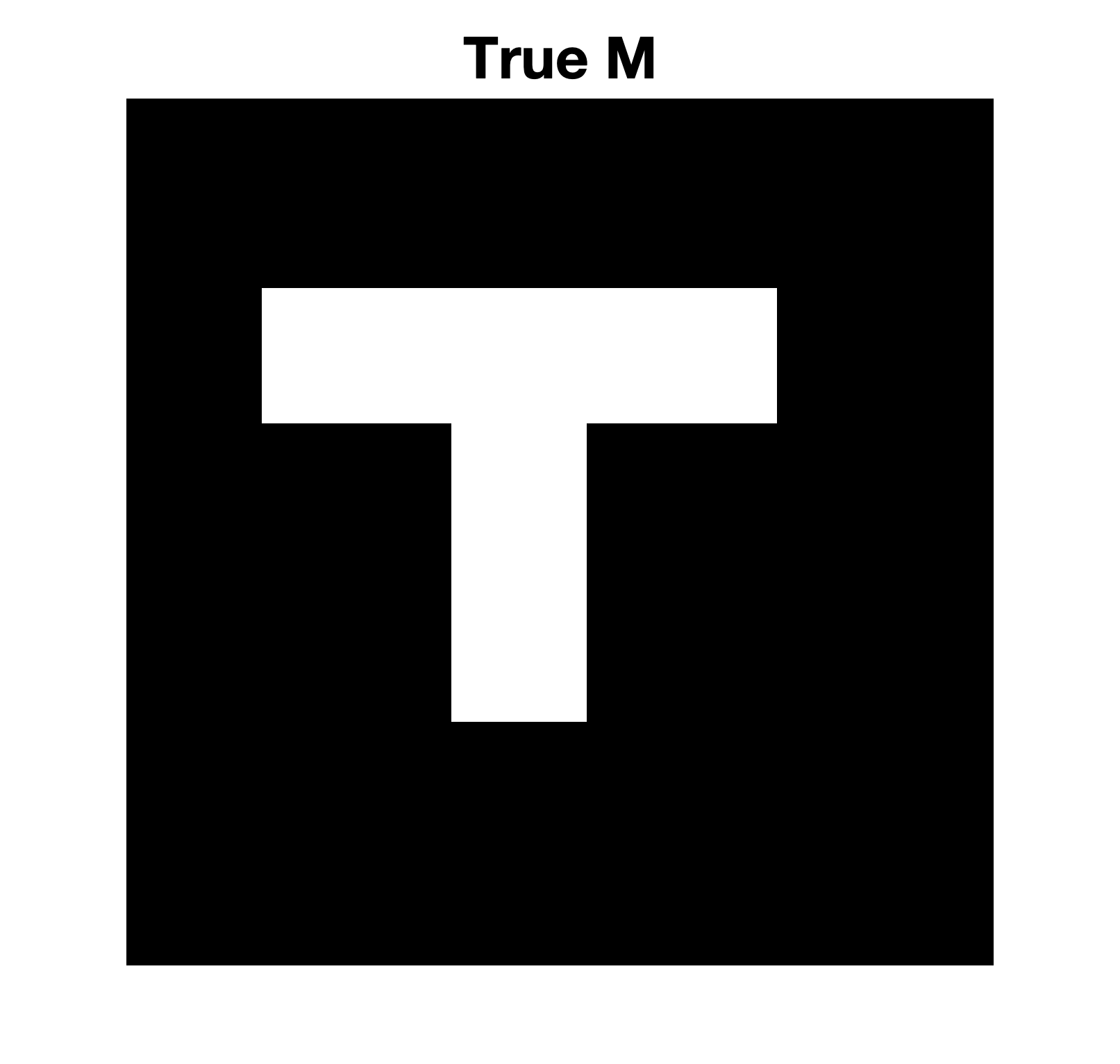}} \subfigure[]{\includegraphics[scale = 0.15]{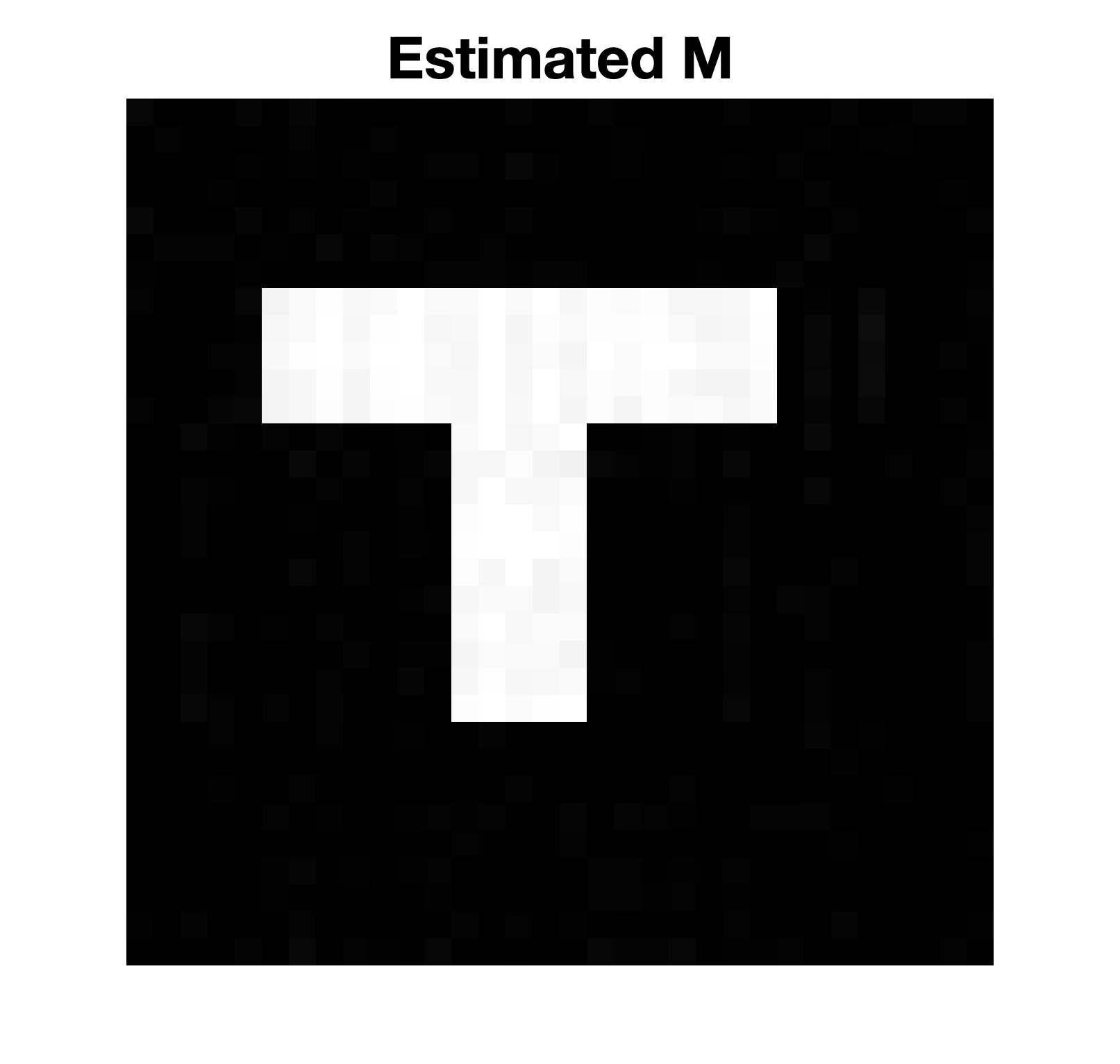}} \subfigure[]{\includegraphics[scale = 0.15]{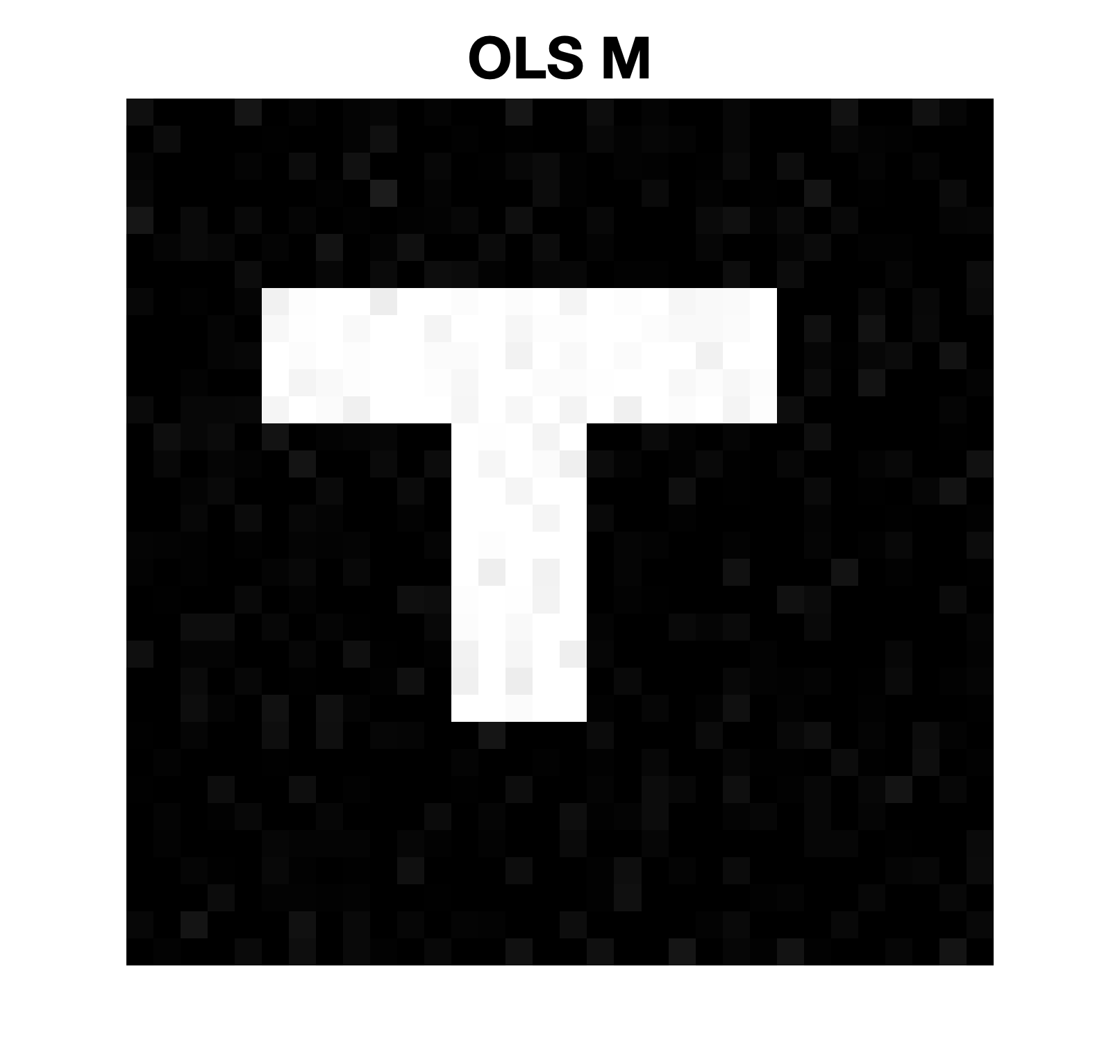}} 
\subfigure[]{\includegraphics[scale = 0.15]{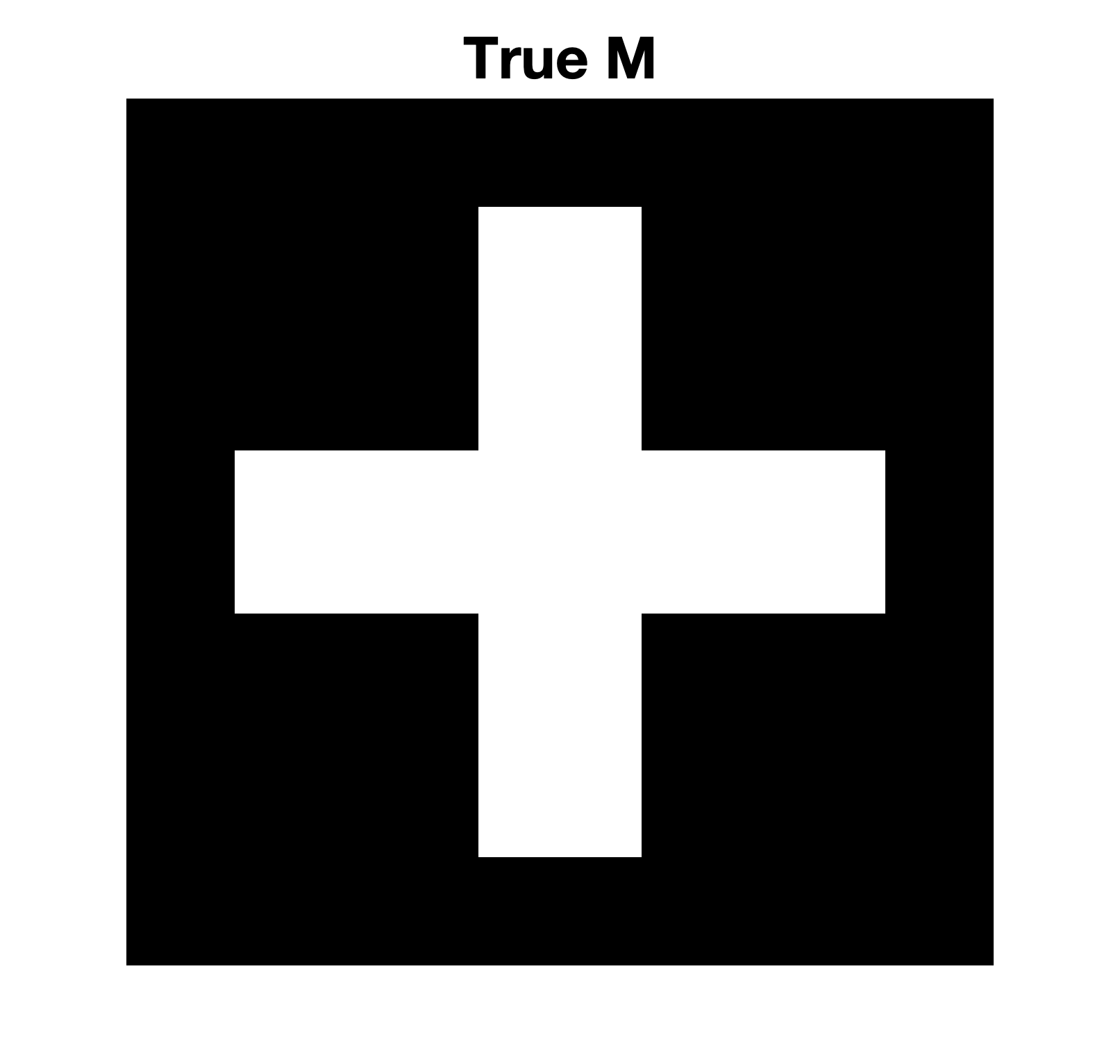}} \subfigure[]{\includegraphics[scale = 0.15]{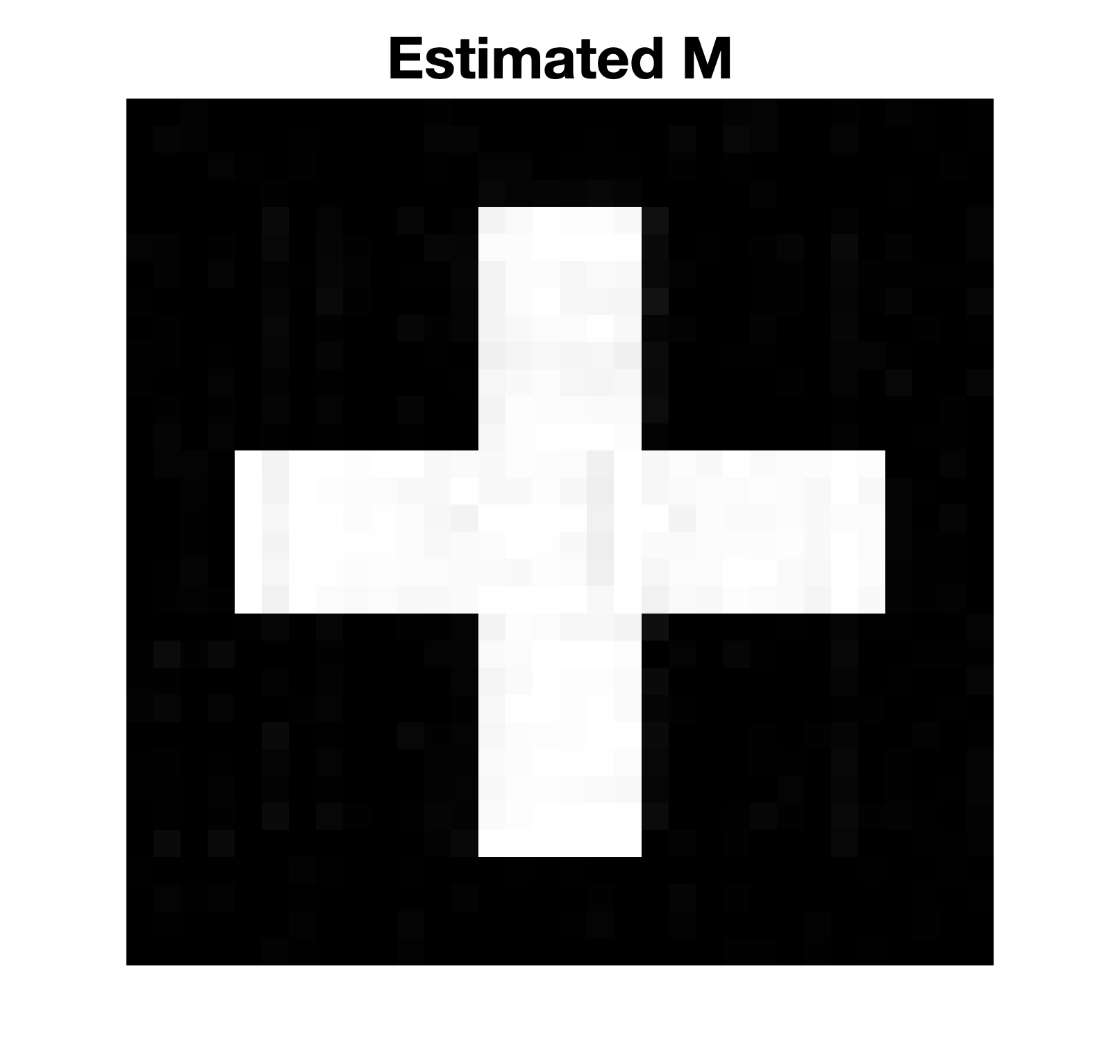}} \subfigure[]{\includegraphics[scale = 0.15]{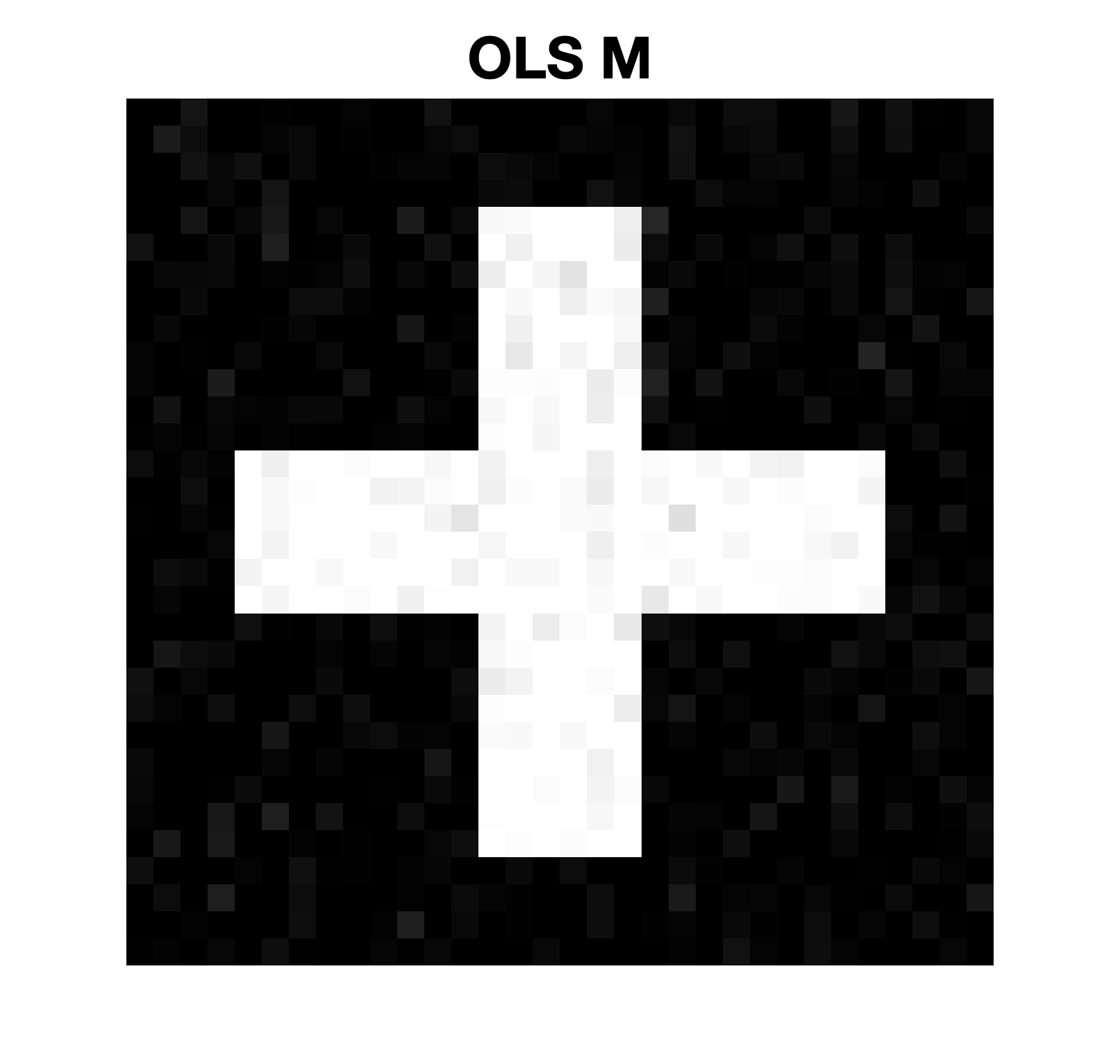}} 
\caption{Simulation results for the case $n=100$, $p=32$, $T=256$, $c=4$ and SNR = 1 from a randomly selected Monte Carlo run.
First column plots the shapes of true $\mathsf{M}$, second column plots estimated $\widehat{\mathsf{M}}$ based on the Fourier basis, and third column plots estimated $\widehat{\mathsf{M}}_{OLS}$ using OLS. The true basis is Fourier. The fitted basis is Chebyshev2.} 
\label{sim1snr1figure} 
\end{figure}

\begin{table}[H]
\centering
\begin{tabular}{r r r r}
\hline
MISE(sieve)	& Square	&T	&Cross\\
\hline
$\beta_1(t)$&1.944(0.049)&1.201(0.003)&2.984(0.018)\\
$\beta_2(t)$&1.946(0.045)&0.416(0.002)&2.987(0.018)\\
$\beta_3(t)$&1.958(0.048)&0.416(0.002)&2.984(0.019)\\
$\beta_4(t)$&1.952(0.048)&0.414(0.002)&1.425(0.008)\\
$\beta_5(t)$&1.954(0.048)&0.535(0.003)&1.426(0.006)\\
$\beta_6(t)$&1.937(0.045)&1.170(0.004)&1.783(0.019)\\
$\beta_7(t)$&1.944(0.049)&1.169(0.004)&1.799(0.019)\\
$\beta_8(t)$&4.198(0.067)&1.169(0.004)&1.798(0.019)\\
\hline
MISE(OLS)	& Square	&T	&Cross\\
\hline
$\beta_1(t)$&24.410(1.402)&2.075(0.006)&8.028(0.020)\\
$\beta_2(t)$&24.598(1.379)&0.931(0.005)&8.120(0.022)\\
$\beta_3(t)$&24.704(1.403)&0.935(0.005)&8.108(0.023)\\
$\beta_4(t)$&24.622(1.391)&0.938(0.006)&4.235(0.015)\\
$\beta_5(t)$&24.680(1.410)&1.450(0.007)&4.257(0.017)\\
$\beta_6(t)$&24.554(1.382)&2.260(0.008)&6.949(0.021)\\
$\beta_7(t)$&24.589(1.401)&2.267(0.008)&6.957(0.024)\\
$\beta_8(t)$&37.464(2.070)&2.230(0.007)&6.881(0.021)\\
\hline
\end{tabular}
\caption{Simulation results for $n=100$, $p=32$, $T=256$, $c=4$ and SNR = 5 for 100 Monte Carlo runs. The mean values of MISEs ($10^{-2}$) for 8 functional slope estimates are listed for the proposed method as well as the OLS method. Their associated standard errors are reported in the parentheses . For each method, the $\mathsf{M} \in \mathbb{R}^{32 \times 32}$ are chosen to be a $32$ by $32$ pictures of Square, T and Cross correspondingly. The true basis is Fourier. The fitted basis is Chebyshev2.}
\label{sim1:snr5:mise:basis21:tab}
\end{table}

\begin{table}[H]
\centering
\begin{tabular}{r r r r}
\hline
MISE(sieve)	& Square	&T	&Cross\\
\hline
$\beta_1(t)$&0.037(0.004)&0.020(0.001)&0.028(0.001)\\
$\beta_2(t)$&0.038(0.004)&0.019(0.001)&0.029(0.001)\\
$\beta_3(t)$&0.039(0.005)&0.016(0.001)&0.032(0.001)\\
$\beta_4(t)$&0.040(0.005)&0.018(0.001)&0.032(0.001)\\
$\beta_5(t)$&0.038(0.005)&0.020(0.001)&0.029(0.001)\\
$\beta_6(t)$&0.040(0.005)&0.020(0.001)&0.032(0.001)\\
$\beta_7(t)$&0.038(0.005)&0.019(0.001)&0.030(0.001)\\
$\beta_8(t)$&0.051(0.007)&0.018(0.000)&0.029(0.001)\\
\hline
MISE(OLS)	& Square	&T	&Cross\\
\hline
$\beta_1(t)$&0.659(0.130)&0.089(0.002)&0.139(0.003)\\
$\beta_2(t)$&0.714(0.130)&0.109(0.002)&0.175(0.004)\\
$\beta_3(t)$&0.702(0.132)&0.109(0.002)&0.175(0.004)\\
$\beta_4(t)$&0.713(0.134)&0.114(0.002)&0.179(0.004)\\
$\beta_5(t)$&0.708(0.132)&0.109(0.002)&0.174(0.003)\\
$\beta_6(t)$&0.717(0.135)&0.107(0.002)&0.176(0.004)\\
$\beta_7(t)$&0.709(0.133)&0.104(0.002)&0.177(0.004)\\
$\beta_8(t)$&0.890(0.195)&0.084(0.002)&0.137(0.003)\\
\hline
\end{tabular}
\caption{Simulation results for $n=100$, $p=32$, $T=256$, $c=4$ and SNR = 5 for 100 Monte Carlo runs using Fourier basis. The mean values of MISEs ($10^{-2}$) for 8 functional slope estimates are listed for the proposed method as well as the OLS method. Their associated standard errors are reported in the parentheses . For each method, the $\mathsf{M} \in \mathbb{R}^{32 \times 32}$ are chosen to be a $32$ by $32$ pictures of Square, T and Cross correspondingly. The true basis type is Fourier. The fitted basis is Fourier.}\label{sim1:snr5:mise:basis22:tab}
\end{table}

\begin{table}[H]
\centering
\begin{tabular}{rrrrrr}
\hline
 SNR & Basis & Square	&T	&Cross & ORACLE\\
\hline
1 & Fourier & 5.190(0.049) &4.000(0.000) &4.000(0.000) &4\\
5 & Fourier & 4.220(0.052) &4.000(0.000) &4.000(0.000) &4\\
10 & Fourier & 4.100(0.030) &4.000(0.000) &4.000(0.000) &4\\
\hline
1 & Chebyshev2 & 5.210(0.050) &4.000(0.000) &4.000(0.000) &4\\
5 & Chebyshev2 & 4.270(0.057) &4.000(0.000) &4.000(0.000) &4\\
10 & Chebyshev2 & 4.120(0.036) &4.000(0.000) &4.000(0.000) &4\\
\hline
\end{tabular}
\caption{Simulation results for $n=100$, $p=32$, $T=256$, $c=4$ for 100 Monte Carlo runs. The average number of basis selected as well as the oracle number of basis is listed for each of basis type, Fourier or Chebyshev2, and SNR, 1, 5 or 10. Their associated standard errors are reported in the parentheses .}
\label{sim1:c:tab}
\end{table}

\begin{figure}[H]
\centering
\includegraphics[height=2in,width=3.5in]{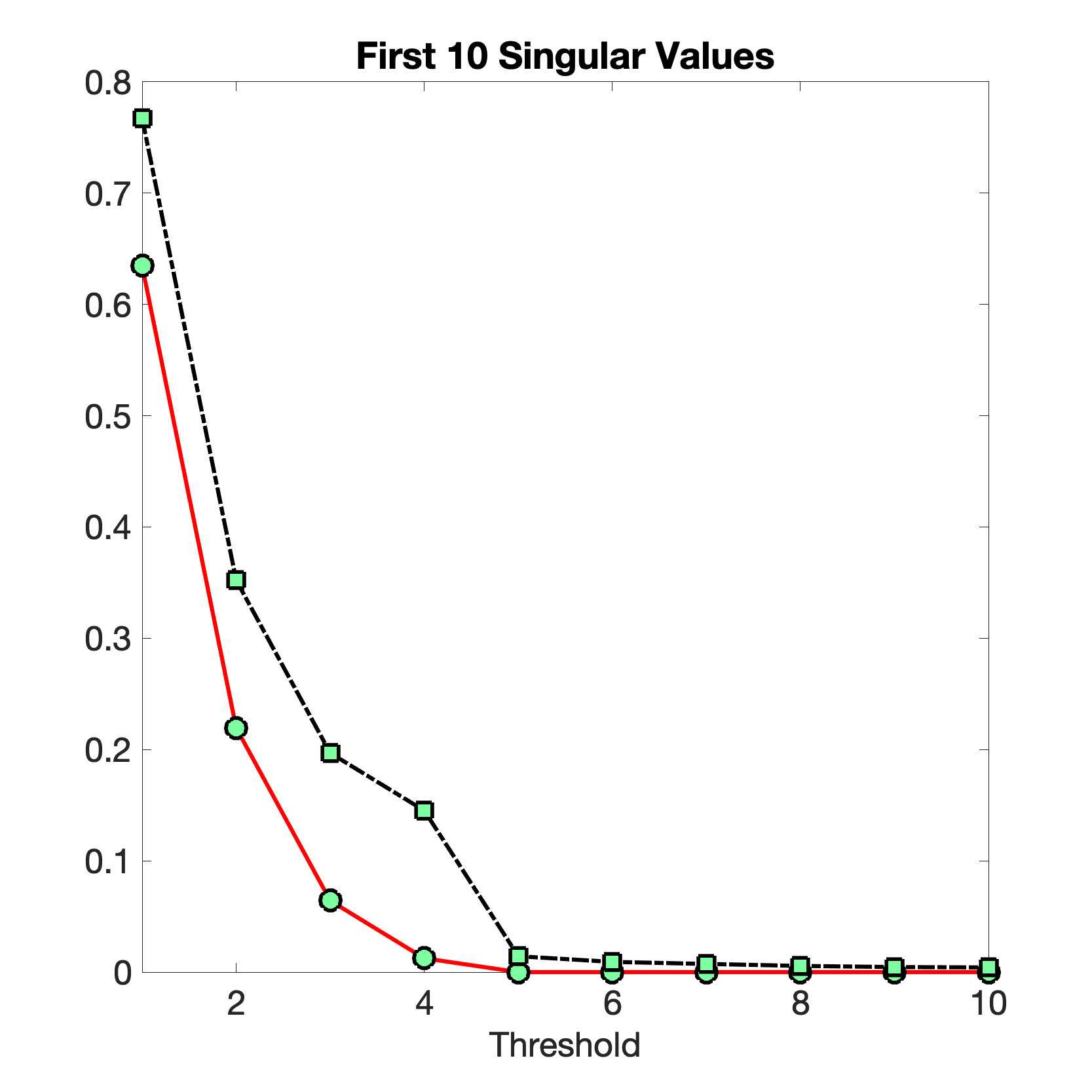}
\caption{Real Data Results: The scree plot for the first 10 singular values for the estimated  $\widehat{\mathsf{M}}$ (red solid) and $\widehat{\mathsf{M}}_{OLS}$ using OLS (black dashed). The fitted basis is Fourier. }
\label{dataPlotsScree}
\end{figure}
\begin{figure}[H]
\centering
  \subfigure[]{\includegraphics[height=1.5in,width=2in]{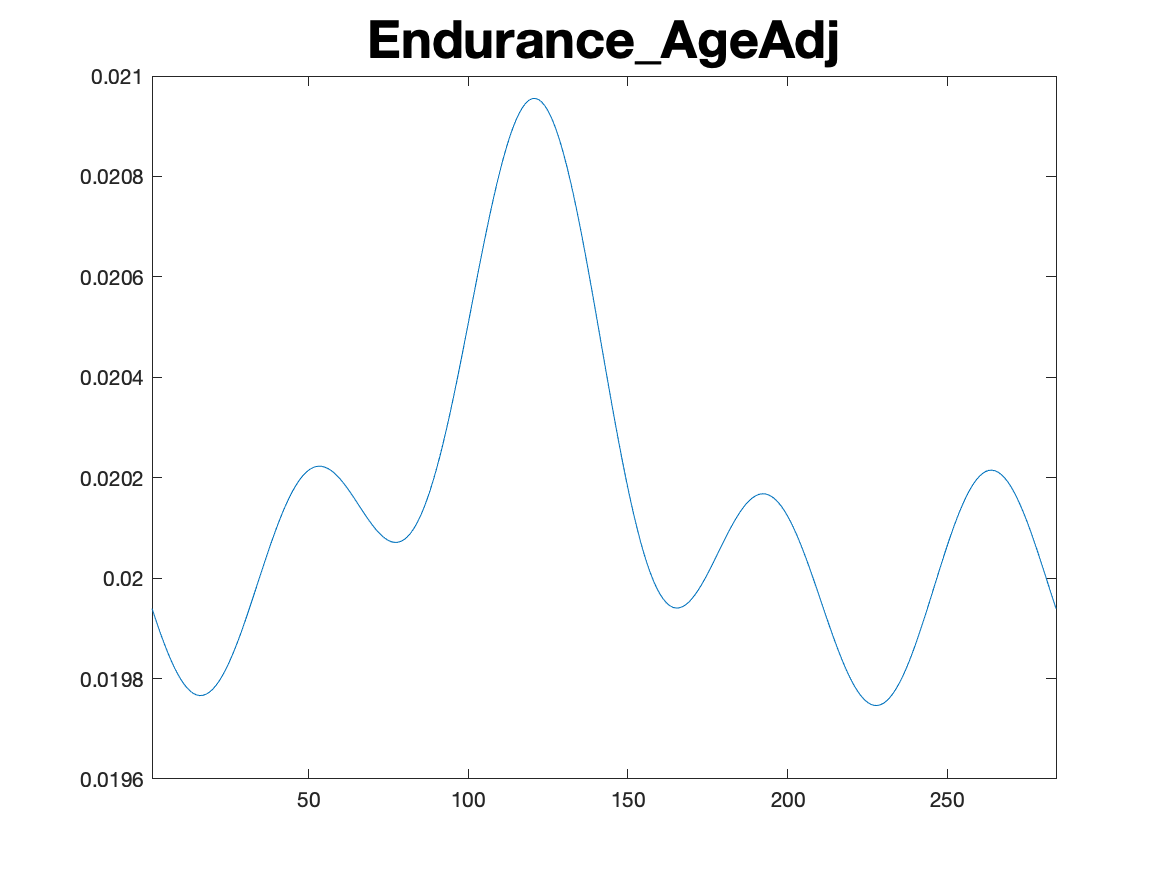}}
  \subfigure[]{\includegraphics[height=1.5in,width=2in]{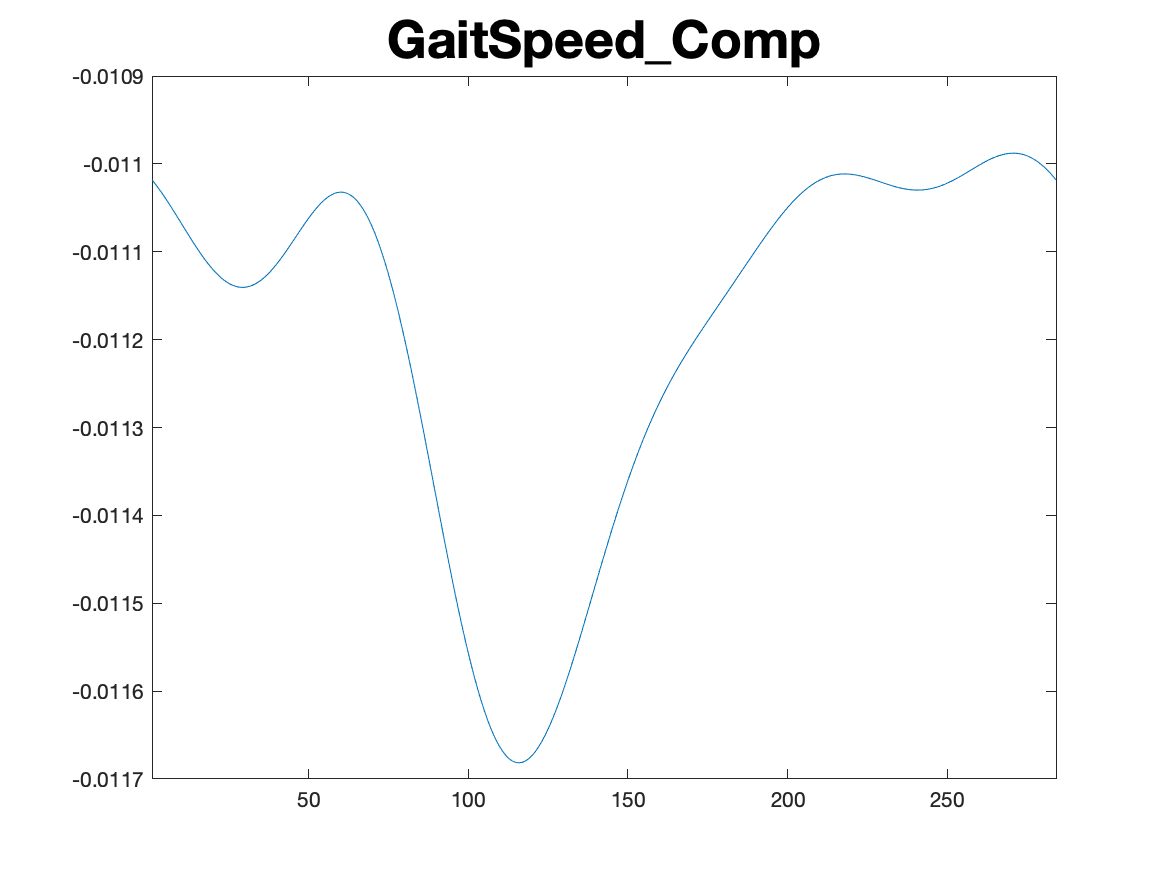}}
  \subfigure[]{\includegraphics[height=1.5in,width=2in]{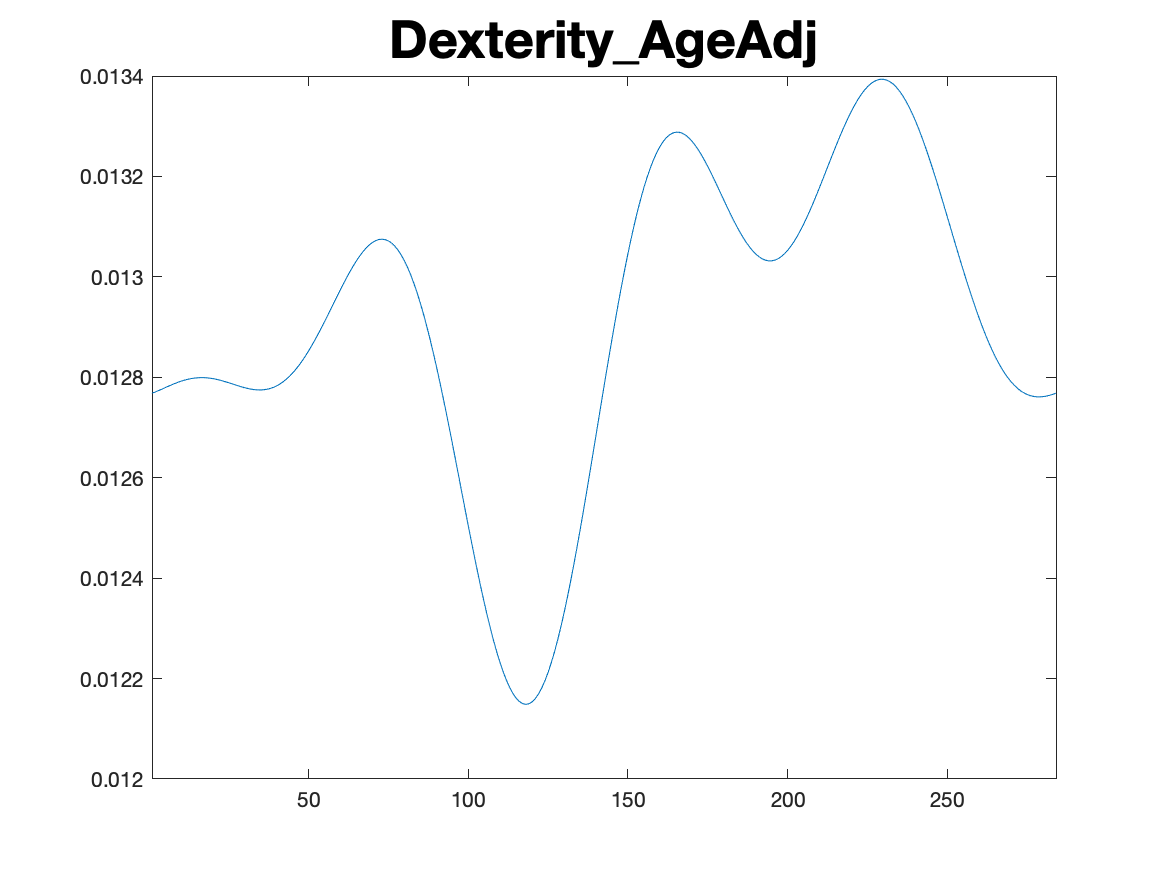}}
  \subfigure[]{\includegraphics[height=1.5in,width=2in]{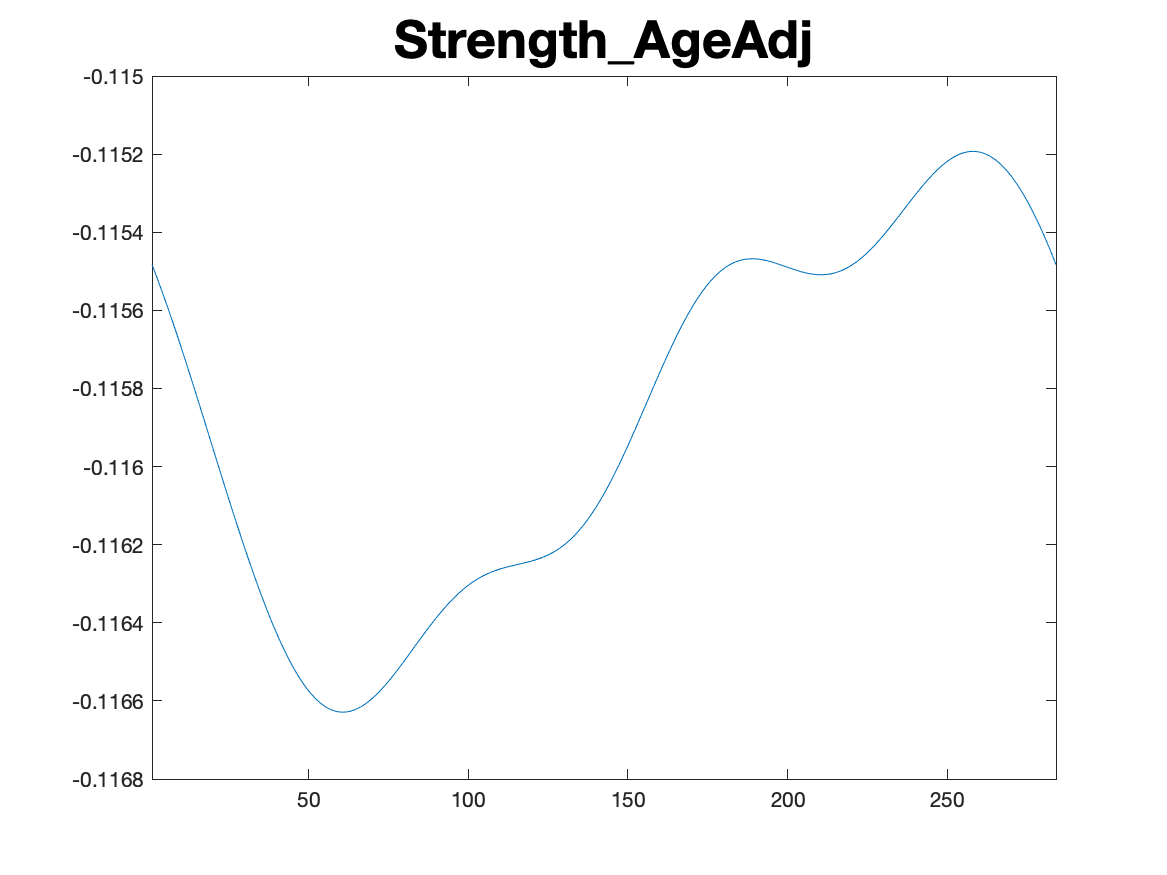}}
\caption{Real Data Results: Panel (a) - (d) plot, for left superior frontal region, the estimated $\hat{\beta}_j(t)$ for $s=4$ motor instrument covariates: ``Endurance-AgeAdj",    ``GaitSpeed-Comp",   ``Dexterity-AgeAdj",  and  ``Strength-AgeAdj". The fitted basis is Fourier. }
\label{dataPlots1}
\end{figure}
\begin{figure}[H]
\centering
  \subfigure[]{\includegraphics[height=1.5in,width=2in]{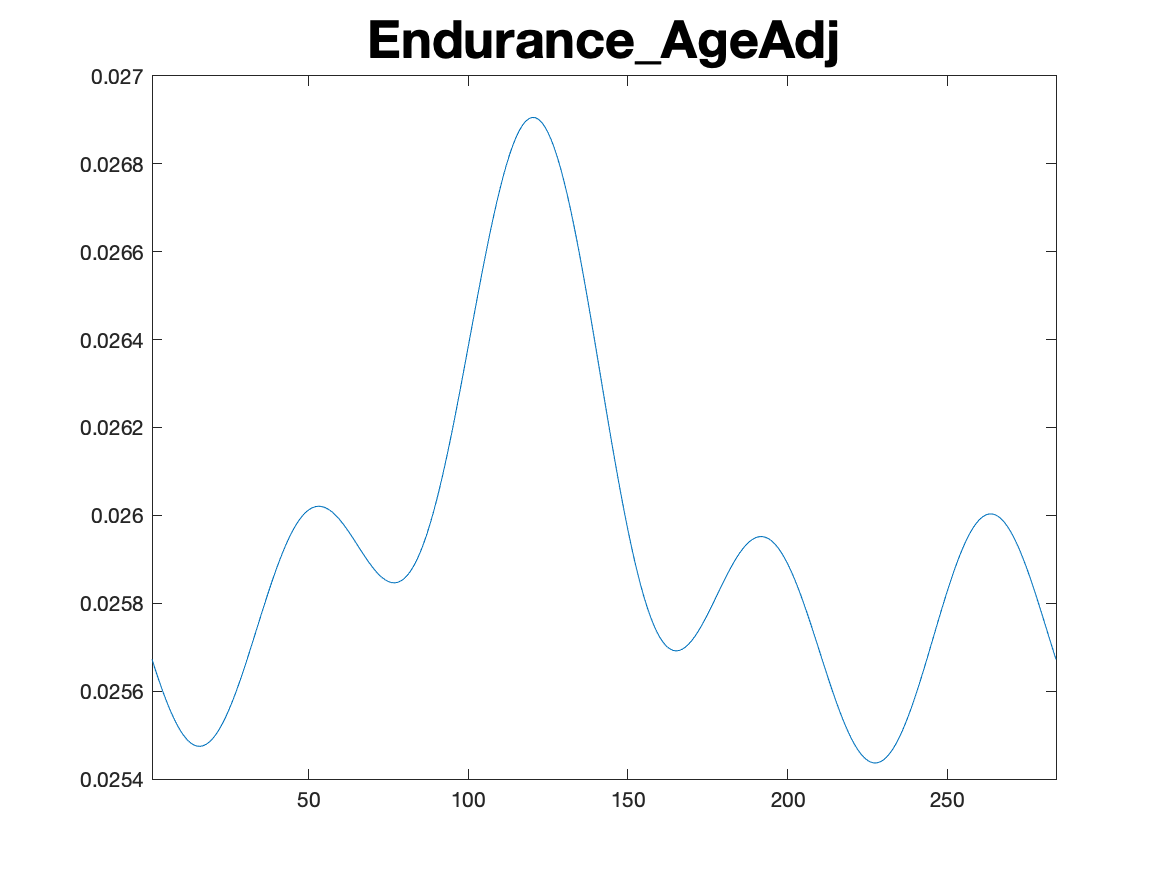}}
  \subfigure[]{\includegraphics[height=1.5in,width=2in]{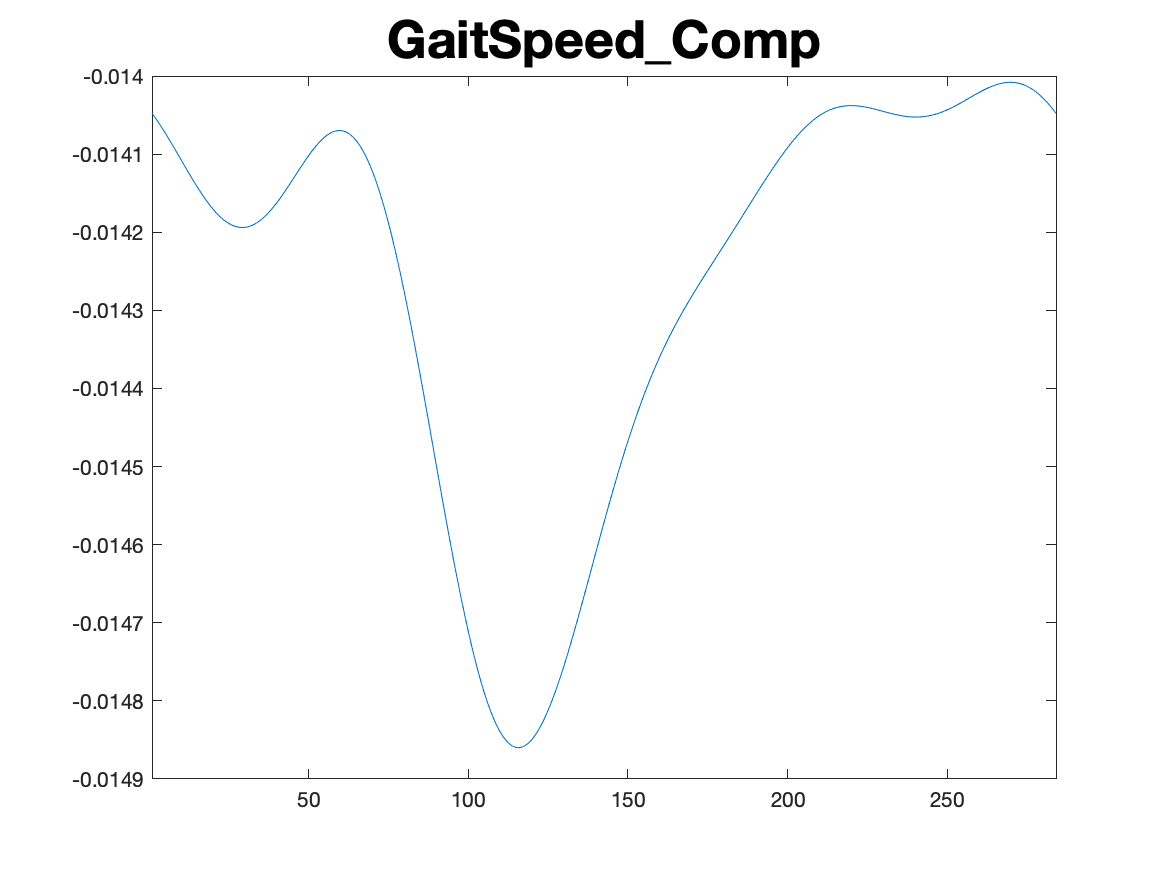}}
  \subfigure[]{\includegraphics[height=1.5in,width=2in]{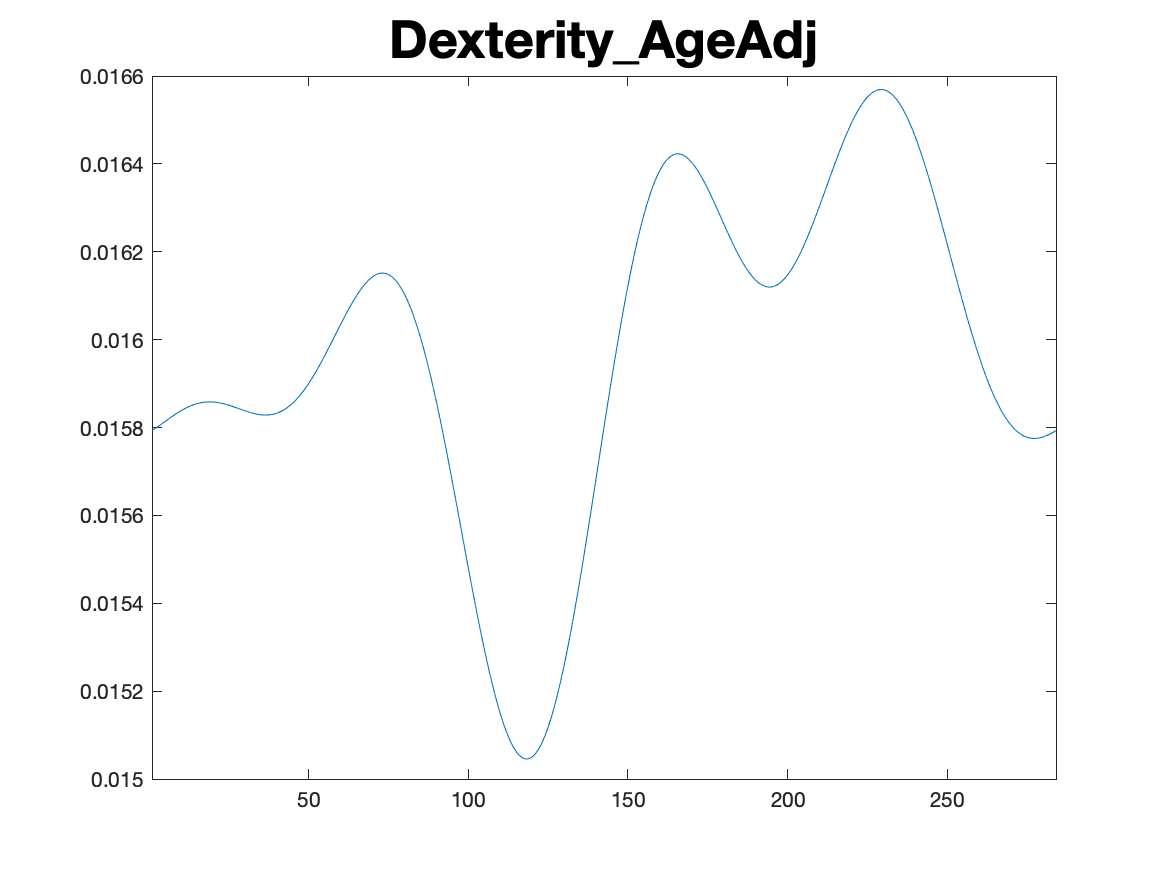}}
  \subfigure[]{\includegraphics[height=1.5in,width=2in]{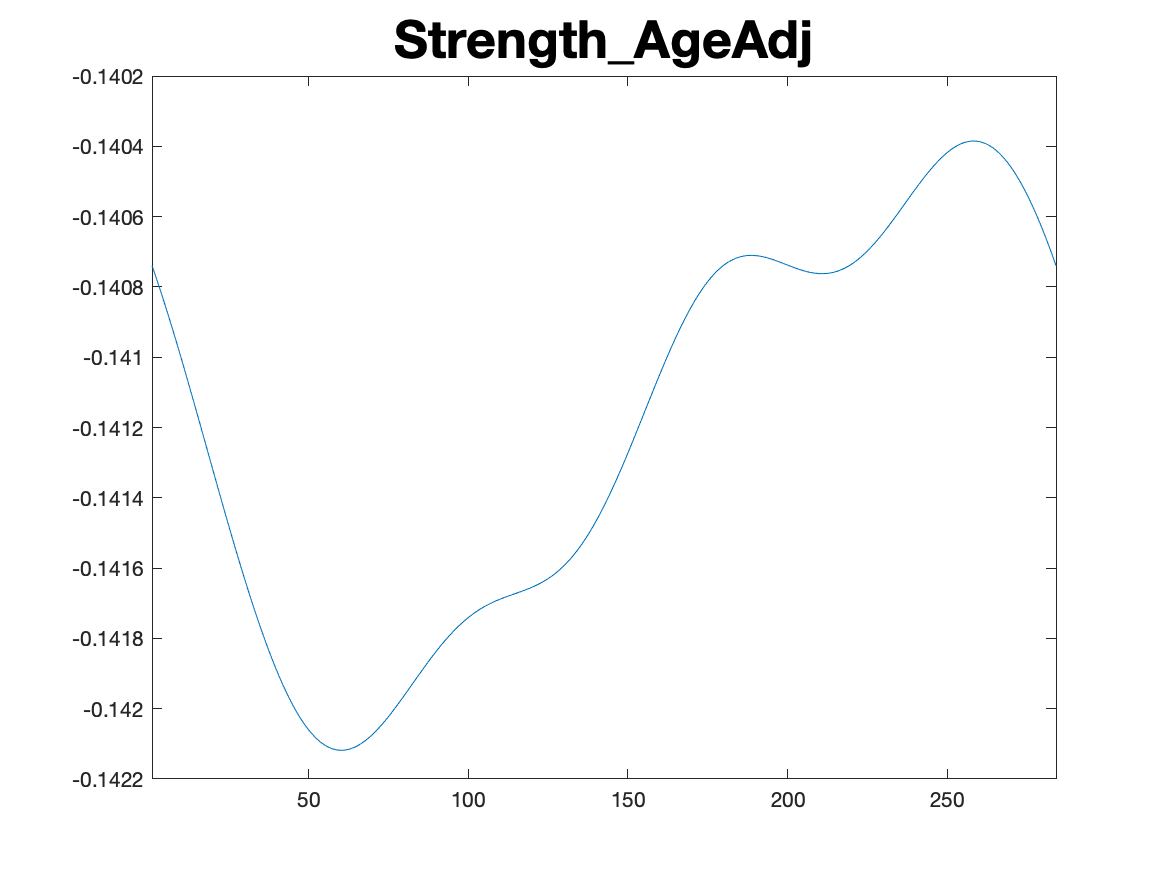}}
\caption{Real Data Results: Panel (a) - (d) plot, for right superior frontal region, the estimated $\{\hat{\beta}_j(t): 1\leq j\leq 4\}$ corresponding to four motor instrument covariates: ``Endurance-AgeAdj",    ``GaitSpeed-Comp",   ``Dexterity-AgeAdj",  and  ``Strength-AgeAdj", respectively. The fitted basis is Fourier. }
\label{dataPlots2}
\end{figure}

\begin{figure}[H]
\centering
  \subfigure[]{\includegraphics[height=1.5in,width=2in]{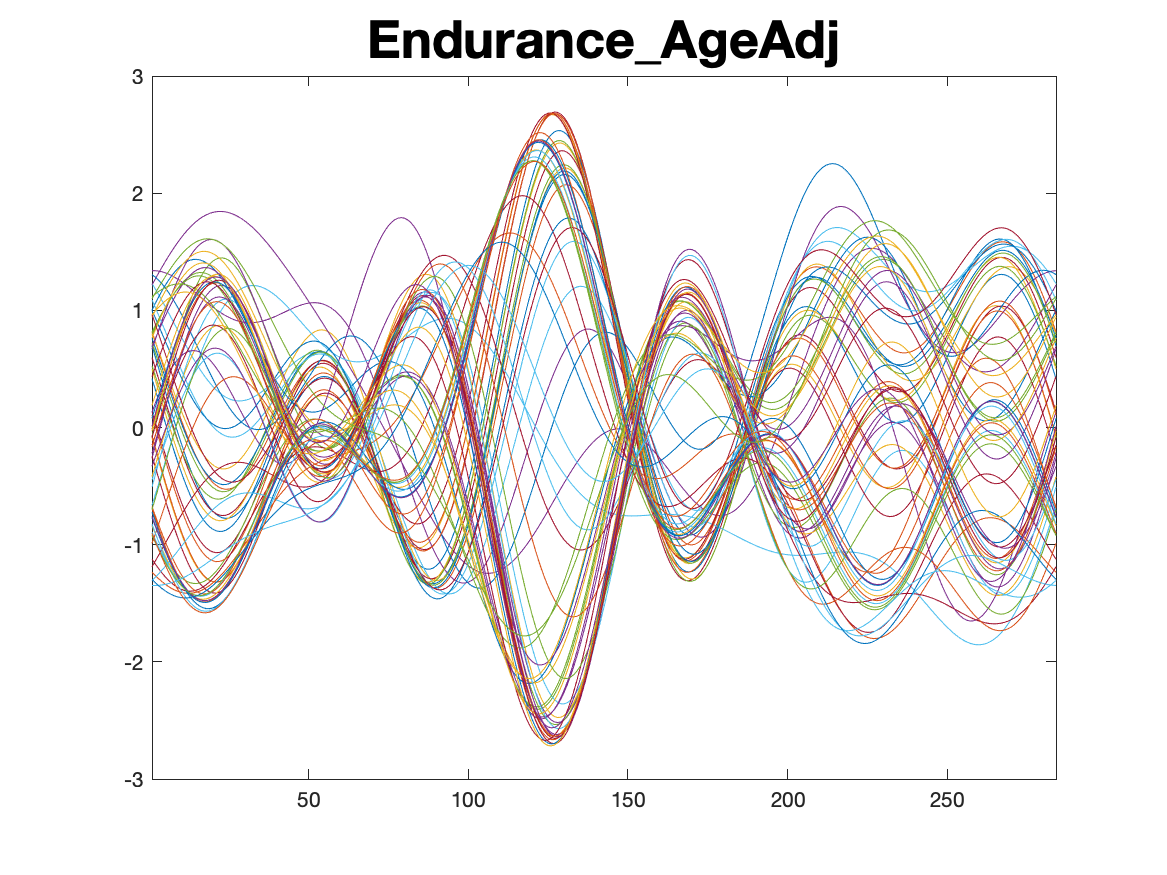}}
  \subfigure[]{\includegraphics[height=1.5in,width=2in]{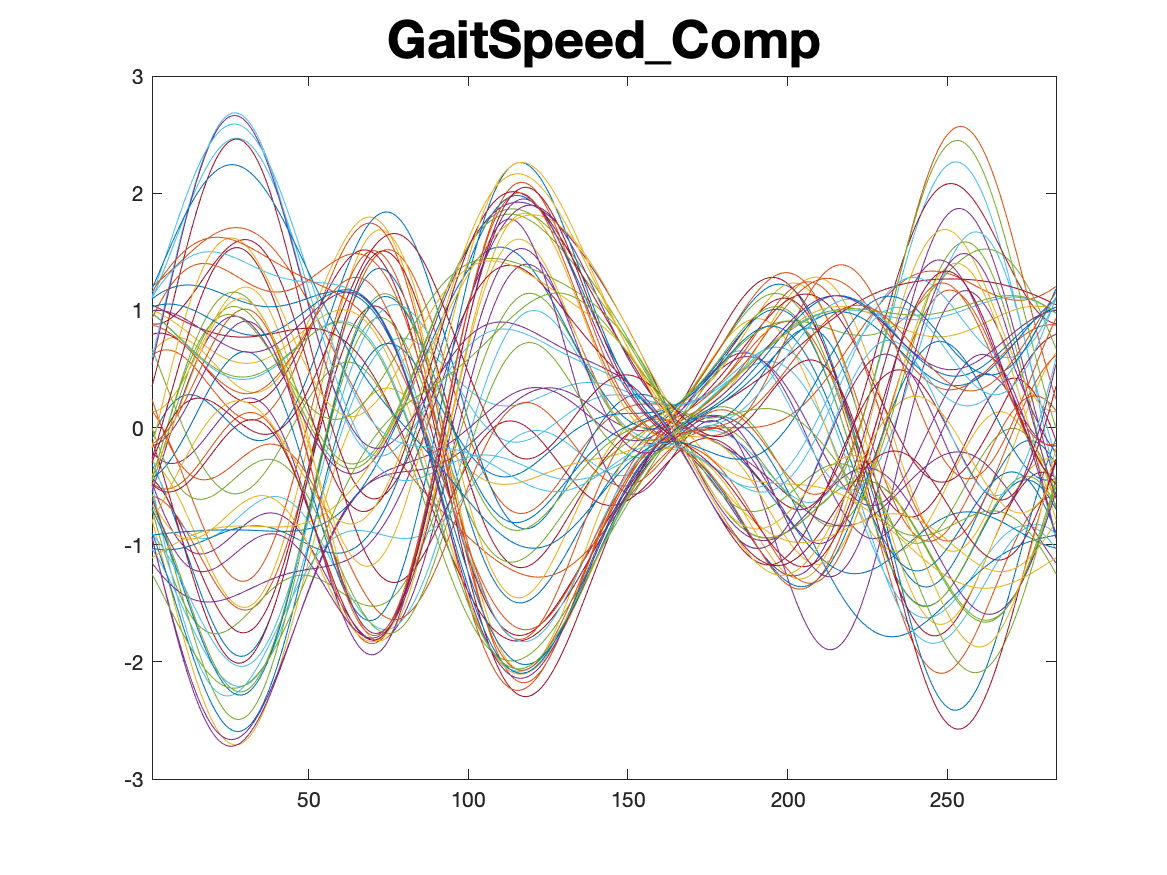}}
  \subfigure[]{\includegraphics[height=1.5in,width=2in]{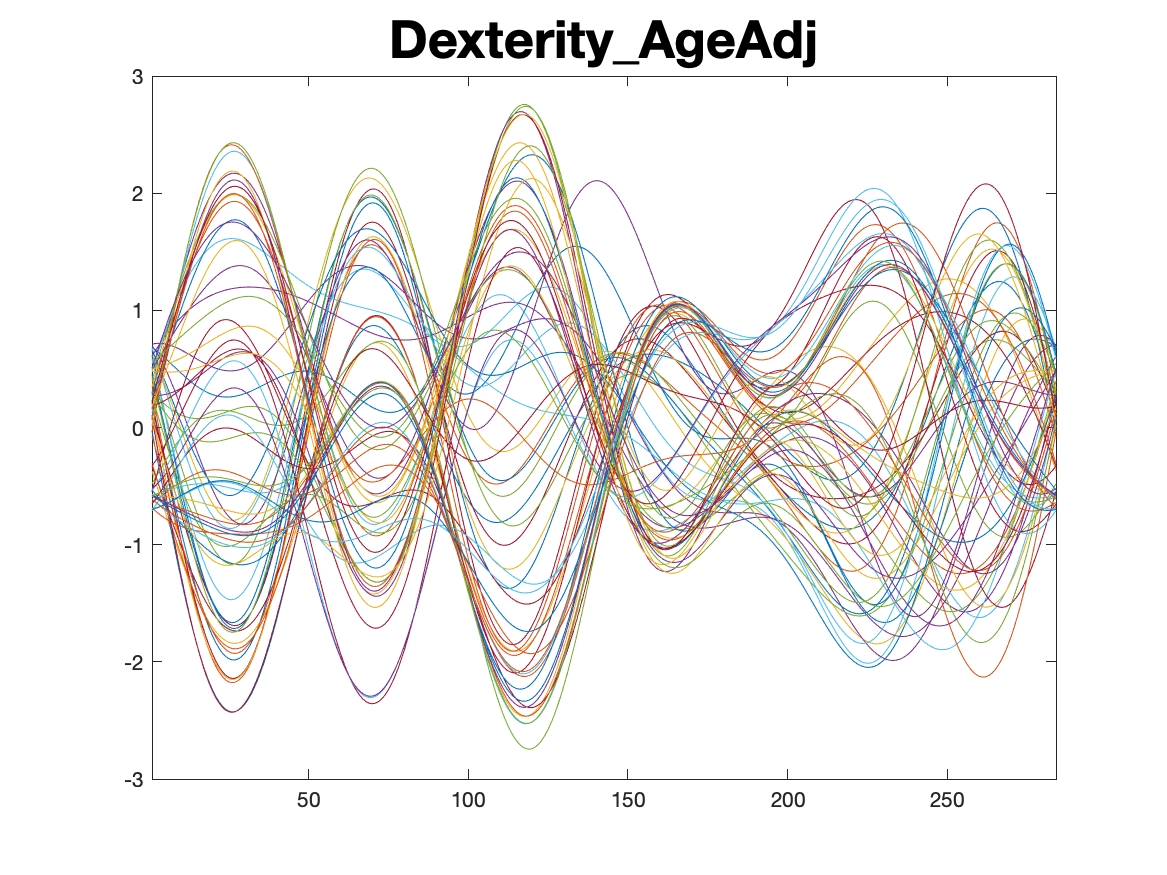}}
  \subfigure[]{\includegraphics[height=1.5in,width=2in]{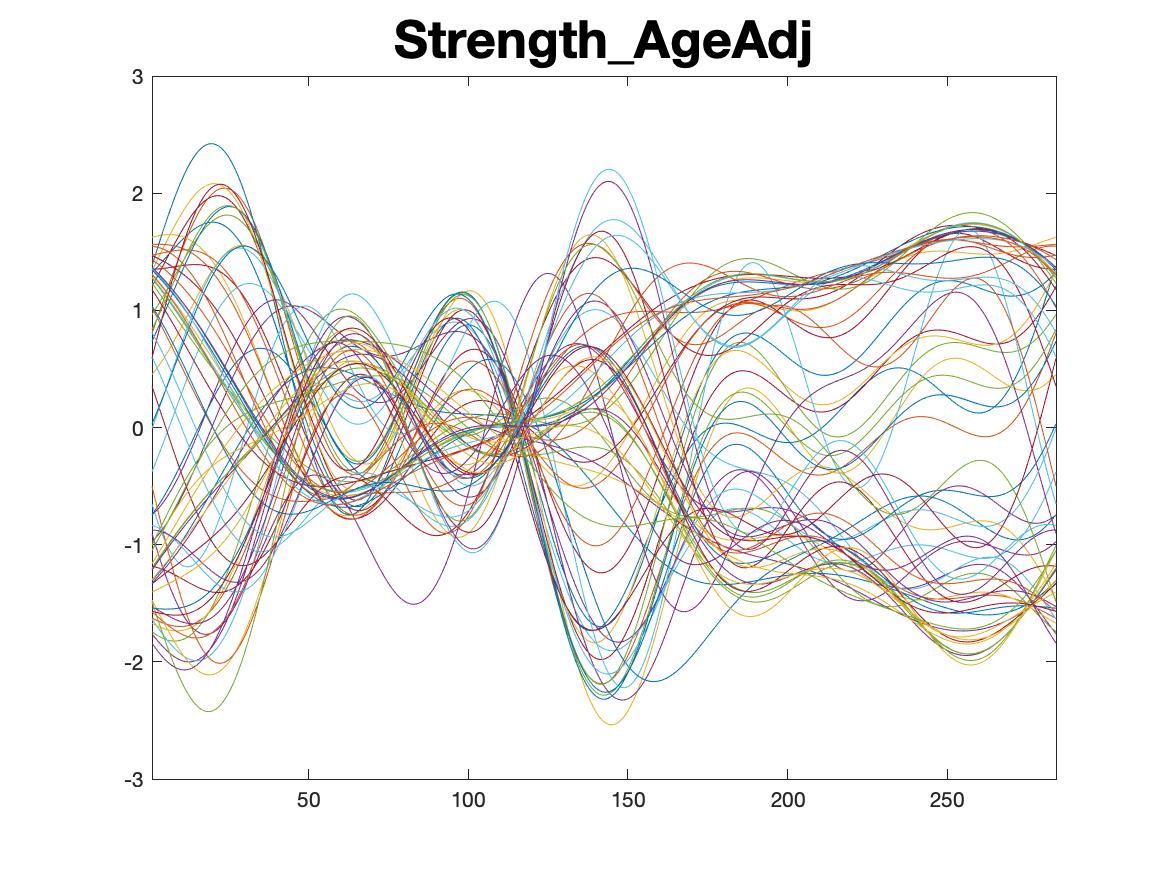}}
\caption{Real Data Results: Panel (a) - (d) plot the $\hat{\beta}_{j,\mathrm{stand}}(t)$'s of $68$ ROIs for $s=4$ motor instrument covariates: ``Endurance-AgeAdj",    ``GaitSpeed-Comp",   ``Dexterity-AgeAdj",  and  ``Strength-AgeAdj".  The fitted basis is Fourier. }
\label{dataPlots3}
\end{figure}

\begin{table}[htbp]
\centering
\begin{tabular}{r r r r}
\hline
MISE(sieve)	& Square	&T	&Cross\\
\hline
$\beta_1(t)$&0.166(0.008)&0.095(0.002)&0.144(0.005)\\
$\beta_2(t)$&0.175(0.009)&0.092(0.003)&0.145(0.006)\\
$\beta_3(t)$&0.181(0.009)&0.082(0.003)&0.152(0.005)\\
$\beta_4(t)$&0.194(0.010)&0.089(0.003)&0.147(0.006)\\
$\beta_5(t)$&0.190(0.011)&0.103(0.004)&0.140(0.005)\\
$\beta_6(t)$&0.178(0.009)&0.097(0.003)&0.158(0.006)\\
$\beta_7(t)$&0.190(0.009)&0.098(0.003)&0.142(0.005)\\
$\beta_8(t)$&0.243(0.012)&0.092(0.003)&0.132(0.004)\\
\hline
MISE(OLS)	& Square	&T	&Cross\\
\hline
$\beta_1(t)$&3.810(0.206)&0.429(0.008)&0.721(0.016)\\
$\beta_2(t)$&4.182(0.220)&0.555(0.012)&0.892(0.019)\\
$\beta_3(t)$&4.234(0.217)&0.535(0.010)&0.870(0.017)\\
$\beta_4(t)$&4.398(0.242)&0.542(0.011)&0.914(0.019)\\
$\beta_5(t)$&4.287(0.227)&0.533(0.012)&0.892(0.018)\\
$\beta_6(t)$&4.264(0.229)&0.532(0.011)&0.863(0.018)\\
$\beta_7(t)$&4.339(0.218)&0.536(0.011)&0.883(0.018)\\
$\beta_8(t)$&5.048(0.299)&0.425(0.007)&0.701(0.012)\\
\hline
\end{tabular}
\caption{Simulation results for $n=100$, $p=32$, $T=256$, $c=4$ and SNR = 1 for 100 Monte Carlo runs. The mean values of MISEs ($10^{-2}$) for 8 functional slope estimates are listed for the proposed method as well as the OLS method. Their associated standard errors are reported in the parentheses. For each method, the $\mathsf{M} \in \mathbb{R}^{32 \times 32}$ are chosen to be a $32$ by $32$ pictures of Square, T and Cross correspondingly. The true basis is Fourier. The fitted basis is Fourier.}
\label{sim1:snr1:mise:basis22:tab}
\end{table}

\begin{table}[htbp]
\centering
\begin{tabular}{r r r r}
\hline
MISE(sieve)	& Square	&T	&Cross\\
\hline
$\beta_1(t)$&0.016(0.001)&0.010(0.000)&0.014(0.000)\\
$\beta_2(t)$&0.017(0.001)&0.009(0.000)&0.016(0.001)\\
$\beta_3(t)$&0.017(0.002)&0.009(0.000)&0.017(0.001)\\
$\beta_4(t)$&0.017(0.001)&0.009(0.000)&0.015(0.001)\\
$\beta_5(t)$&0.017(0.001)&0.011(0.000)&0.015(0.000)\\
$\beta_6(t)$&0.016(0.001)&0.010(0.000)&0.017(0.001)\\
$\beta_7(t)$&0.017(0.001)&0.010(0.000)&0.016(0.001)\\
$\beta_8(t)$&0.021(0.002)&0.010(0.000)&0.014(0.000)\\
\hline
MISE(OLS)	& Square	&T	&Cross\\
\hline
$\beta_1(t)$&0.248(0.046)&0.044(0.001)&0.069(0.001)\\
$\beta_2(t)$&0.276(0.047)&0.056(0.001)&0.089(0.002)\\
$\beta_3(t)$&0.272(0.047)&0.056(0.001)&0.088(0.002)\\
$\beta_4(t)$&0.271(0.048)&0.055(0.001)&0.087(0.002)\\
$\beta_5(t)$&0.270(0.046)&0.054(0.001)&0.085(0.002)\\
$\beta_6(t)$&0.268(0.045)&0.055(0.001)&0.087(0.002)\\
$\beta_7(t)$&0.272(0.047)&0.053(0.001)&0.089(0.002)\\
$\beta_8(t)$&0.324(0.070)&0.043(0.001)&0.069(0.001)\\
\hline
\end{tabular}
\caption{Simulation results for $n=100$, $p=32$, $T=256$, $c=4$ and SNR = 10 for 100 Monte Carlo runs. The mean values of MISEs ($10^{-2}$) for 8 functional slope estimates are listed for the proposed method as well as the OLS method. Their associated standard errors are reported in the parentheses. For each method, the $\mathsf{M} \in \mathbb{R}^{32 \times 32}$ are chosen to be a $32$ by $32$ pictures of Square, T and Cross correspondingly. The true basis is Fourier. The fitted basis is Fourier.}
\label{sim1:snr10:mise:basis22:tab}
\end{table}

\begin{table}[htbp]
\centering
\begin{tabular}{r r r r}
\hline
MISE(sieve)	& Square	&T	&Cross\\
\hline
$\beta_1(t)$&1.946(0.048)&1.244(0.004)&3.007(0.023)\\
$\beta_2(t)$&1.974(0.050)&0.462(0.004)&3.030(0.019)\\
$\beta_3(t)$&1.960(0.053)&0.455(0.004)&3.034(0.021)\\
$\beta_4(t)$&1.953(0.051)&0.459(0.005)&1.505(0.010)\\
$\beta_5(t)$&1.962(0.050)&0.584(0.005)&1.494(0.012)\\
$\beta_6(t)$&1.946(0.055)&1.226(0.006)&1.826(0.021)\\
$\beta_7(t)$&1.969(0.050)&1.227(0.006)&1.825(0.022)\\
$\beta_8(t)$&4.152(0.069)&1.219(0.005)&1.829(0.020)\\
\hline
MISE(OLS)	& Square	&T	&Cross\\
\hline
$\beta_1(t)$&27.115(1.481)&2.596(0.016)&9.302(0.054)\\
$\beta_2(t)$&28.202(1.503)&1.606(0.019)&9.742(0.063)\\
$\beta_3(t)$&28.206(1.529)&1.612(0.020)&9.685(0.056)\\
$\beta_4(t)$&28.145(1.532)&1.631(0.023)&5.813(0.051)\\
$\beta_5(t)$&28.251(1.494)&2.097(0.021)&5.836(0.047)\\
$\beta_6(t)$&28.066(1.536)&2.922(0.023)&8.553(0.055)\\
$\beta_7(t)$&28.080(1.515)&2.890(0.025)&8.526(0.052)\\
$\beta_8(t)$&39.690(2.153)&2.750(0.019)&8.191(0.054)\\
\hline
\end{tabular}
\caption{Simulation results for $n=100$, $p=32$, $T=256$, $c=4$ and SNR = 1 for 100 Monte Carlo runs. The mean values of MISEs ($10^{-2}$) for 8 functional slope estimates are listed for the proposed method as well as the OLS method. Their associated standard errors are reported in the parentheses. For each method, the $\mathsf{M} \in \mathbb{R}^{32 \times 32}$ are chosen to be a $32$ by $32$ pictures of Square, T and Cross correspondingly. The true basis is Fourier. The fitted basis is Chebyshev2.}
\label{sim1:snr1:mise:basis21:tab}
\end{table}

\begin{table}[htbp]
\centering
\begin{tabular}{r r r r}
\hline
MISE(sieve)	& Square	&T	&Cross\\
\hline
$\beta_1(t)$&1.878(0.040)&1.201(0.003)&2.938(0.017)\\
$\beta_2(t)$&1.884(0.041)&0.412(0.001)&2.940(0.017)\\
$\beta_3(t)$&1.873(0.040)&0.413(0.001)&2.936(0.018)\\
$\beta_4(t)$&1.875(0.040)&0.412(0.001)&1.410(0.006)\\
$\beta_5(t)$&1.875(0.040)&0.533(0.003)&1.400(0.005)\\
$\beta_6(t)$&1.887(0.041)&1.170(0.003)&1.743(0.018)\\
$\beta_7(t)$&1.873(0.040)&1.168(0.003)&1.745(0.018)\\
$\beta_8(t)$&4.085(0.057)&1.169(0.003)&1.746(0.017)\\
\hline
MISE(OLS)	& Square	&T	&Cross\\
\hline
$\beta_1(t)$&22.550(1.093)&2.015(0.004)&7.893(0.013)\\
$\beta_2(t)$&22.674(1.096)&0.850(0.003)&7.919(0.016)\\
$\beta_3(t)$&22.596(1.088)&0.851(0.004)&7.913(0.017)\\
$\beta_4(t)$&22.628(1.096)&0.850(0.003)&4.031(0.012)\\
$\beta_5(t)$&22.601(1.092)&1.363(0.005)&4.028(0.012)\\
$\beta_6(t)$&22.700(1.106)&2.176(0.004)&6.771(0.016)\\
$\beta_7(t)$&22.600(1.090)&2.176(0.005)&6.763(0.016)\\
$\beta_8(t)$&34.711(1.618)&2.160(0.005)&6.734(0.013)\\
\hline
\end{tabular}
\caption{Simulation results for $n=100$, $p=32$, $T=256$, $c=4$ and SNR = 10 for 100 Monte Carlo runs. The mean values of MISEs ($10^{-2}$) for 8 functional slope estimates are listed for the proposed method as well as the OLS method. Their associated standard errors are reported in the parentheses. For each method, the $\mathsf{M} \in \mathbb{R}^{32 \times 32}$ are chosen to be a $32$ by $32$ pictures of Square, T and Cross correspondingly. The true basis is Fourier. The fitted basis is Chebyshev2.}
\label{sim1:snr10:mise:basis21:tab}
\end{table}

\begin{table}[htbp]
\centering
\begin{tabular}{r r r r}
\hline
MISE(sieve)	& Square	&T	&Cross\\
\hline
$\beta_1(t)$&0.159(0.010)&0.098(0.003)&0.144(0.005)\\
$\beta_2(t)$&0.163(0.009)&0.096(0.004)&0.137(0.005)\\
$\beta_3(t)$&0.174(0.010)&0.081(0.003)&0.164(0.006)\\
$\beta_4(t)$&0.172(0.010)&0.091(0.003)&0.140(0.005)\\
$\beta_5(t)$&0.156(0.010)&0.107(0.004)&0.143(0.006)\\
$\beta_6(t)$&0.165(0.010)&0.100(0.003)&0.156(0.005)\\
$\beta_7(t)$&0.174(0.011)&0.101(0.004)&0.139(0.006)\\
$\beta_8(t)$&0.210(0.013)&0.095(0.003)&0.135(0.004)\\
\hline
MISE(OLS)	& Square	&T	&Cross\\
\hline
$\beta_1(t)$&3.926(0.211)&0.432(0.009)&0.707(0.015)\\
$\beta_2(t)$&4.339(0.225)&0.539(0.012)&0.885(0.022)\\
$\beta_3(t)$&4.401(0.220)&0.519(0.012)&0.862(0.018)\\
$\beta_4(t)$&4.353(0.219)&0.539(0.013)&0.875(0.019)\\
$\beta_5(t)$&4.305(0.230)&0.544(0.012)&0.850(0.018)\\
$\beta_6(t)$&4.293(0.224)&0.532(0.013)&0.883(0.018)\\
$\beta_7(t)$&4.333(0.226)&0.547(0.015)&0.869(0.017)\\
$\beta_8(t)$&5.131(0.299)&0.433(0.010)&0.688(0.014)\\
\hline
\end{tabular}
\caption{Simulation results for $n=100$, $p=32$, $T=256$, $c=4$ and SNR = 1 for 100 Monte Carlo runs. The mean values of MISEs ($10^{-2}$) for 8 functional slope estimates are listed for the proposed method as well as the OLS method. Their associated standard errors are reported in the parentheses. For each method, the $\mathsf{M} \in \mathbb{R}^{32 \times 32}$ are chosen to be a $32$ by $32$ pictures of Square, T and Cross correspondingly. The true basis is Chebyshev2. The fitted basis is Chebyshev2.}
\label{sim1:snr1:mise:basis11:tab}
\end{table}

\begin{table}[htbp]
\centering
\begin{tabular}{r r r r}
\hline
MISE(sieve)	& Square	&T	&Cross\\
\hline
$\beta_1(t)$&0.044(0.006)&0.021(0.001)&0.029(0.001)\\
$\beta_2(t)$&0.044(0.005)&0.019(0.001)&0.031(0.001)\\
$\beta_3(t)$&0.050(0.006)&0.018(0.001)&0.034(0.001)\\
$\beta_4(t)$&0.046(0.005)&0.018(0.001)&0.032(0.001)\\
$\beta_5(t)$&0.046(0.006)&0.023(0.001)&0.030(0.001)\\
$\beta_6(t)$&0.045(0.005)&0.022(0.001)&0.034(0.001)\\
$\beta_7(t)$&0.045(0.005)&0.020(0.001)&0.033(0.001)\\
$\beta_8(t)$&0.060(0.009)&0.019(0.001)&0.028(0.001)\\
\hline
MISE(OLS)	& Square	&T	&Cross\\
\hline
$\beta_1(t)$&0.749(0.138)&0.088(0.002)&0.141(0.003)\\
$\beta_2(t)$&0.798(0.137)&0.110(0.003)&0.174(0.004)\\
$\beta_3(t)$&0.824(0.144)&0.106(0.003)&0.178(0.004)\\
$\beta_4(t)$&0.807(0.140)&0.108(0.002)&0.178(0.004)\\
$\beta_5(t)$&0.814(0.143)&0.111(0.003)&0.172(0.004)\\
$\beta_6(t)$&0.793(0.139)&0.109(0.003)&0.175(0.003)\\
$\beta_7(t)$&0.798(0.138)&0.106(0.002)&0.176(0.004)\\
$\beta_8(t)$&1.032(0.207)&0.086(0.002)&0.140(0.003)\\
\hline
\end{tabular}
\caption{Simulation results for $n=100$, $p=32$, $T=256$, $c=4$ and SNR = 5 for 100 Monte Carlo runs. The mean values of MISEs ($10^{-2}$) for 8 functional slope estimates are listed for the proposed method as well as the OLS method. Their associated standard errors are reported in the parentheses. For each method, the $\mathsf{M} \in \mathbb{R}^{32 \times 32}$ are chosen to be a $32$ by $32$ pictures of Square, T and Cross correspondingly. The true basis is Chebyshev2. The fitted basis is Chebyshev2.}
\label{sim1:snr5:mise:basis11:tab}
\end{table}
\begin{table}[htbp]
\centering
\begin{tabular}{r r r r}
\hline
MISE(sieve)	& Square	&T	&Cross\\
\hline
$\beta_1(t)$&0.018(0.002)&0.011(0.000)&0.014(0.000)\\
$\beta_2(t)$&0.018(0.002)&0.010(0.000)&0.016(0.000)\\
$\beta_3(t)$&0.019(0.003)&0.009(0.000)&0.016(0.001)\\
$\beta_4(t)$&0.018(0.002)&0.010(0.000)&0.015(0.000)\\
$\beta_5(t)$&0.018(0.002)&0.012(0.000)&0.015(0.001)\\
$\beta_6(t)$&0.019(0.003)&0.011(0.000)&0.016(0.001)\\
$\beta_7(t)$&0.020(0.003)&0.010(0.000)&0.016(0.001)\\
$\beta_8(t)$&0.023(0.004)&0.010(0.000)&0.014(0.000)\\
\hline
MISE(OLS)	& Square	&T	&Cross\\
\hline
$\beta_1(t)$&0.293(0.069)&0.043(0.001)&0.068(0.001)\\
$\beta_2(t)$&0.317(0.069)&0.053(0.001)&0.084(0.002)\\
$\beta_3(t)$&0.314(0.070)&0.053(0.001)&0.084(0.002)\\
$\beta_4(t)$&0.313(0.070)&0.054(0.001)&0.083(0.002)\\
$\beta_5(t)$&0.316(0.070)&0.054(0.001)&0.085(0.002)\\
$\beta_6(t)$&0.319(0.072)&0.053(0.001)&0.087(0.002)\\
$\beta_7(t)$&0.323(0.071)&0.053(0.001)&0.085(0.002)\\
$\beta_8(t)$&0.393(0.103)&0.042(0.001)&0.067(0.001)\\
\hline
\end{tabular}
\caption{Simulation results for $n=100$, $p=32$, $T=256$, $c=4$ and SNR = 10 for 100 Monte Carlo runs. The mean values of MISEs ($10^{-2}$) for 8 functional slope estimates are listed for the proposed method as well as the OLS method. Their associated standard errors are reported in the parentheses. For each method, the $\mathsf{M} \in \mathbb{R}^{32 \times 32}$ are chosen to be a $32$ by $32$ pictures of Square, T and Cross correspondingly. The true basis is Chebyshev2. The fitted basis is Chebyshev2.}
\label{sim1:snr10:mise:basis11:tab}
\end{table}
\begin{table}[htbp]
\centering
\begin{tabular}{r r r r}
\hline
MISE(sieve)	& Square	&T	&Cross\\
\hline
$\beta_1(t)$&5.515(0.104)&3.100(0.032)&11.250(0.062)\\
$\beta_2(t)$&5.521(0.103)&1.626(0.013)&11.260(0.072)\\
$\beta_3(t)$&5.598(0.096)&1.666(0.015)&11.338(0.071)\\
$\beta_4(t)$&5.515(0.109)&1.667(0.014)&7.910(0.065)\\
$\beta_5(t)$&5.572(0.101)&3.766(0.027)&3.116(0.039)\\
$\beta_6(t)$&5.599(0.105)&4.453(0.030)&6.996(0.075)\\
$\beta_7(t)$&5.570(0.106)&4.473(0.031)&6.963(0.076)\\
$\beta_8(t)$&8.087(0.149)&4.438(0.031)&6.914(0.078)\\
\hline
MISE(OLS)	& Square	&T	&Cross\\
\hline
$\beta_1(t)$&69.510(2.704)&13.131(0.331)&41.488(0.864)\\
$\beta_2(t)$&71.406(2.785)&7.504(0.183)&42.341(0.892)\\
$\beta_3(t)$&72.134(2.859)&7.610(0.176)&42.425(0.881)\\
$\beta_4(t)$&71.144(2.712)&7.748(0.187)&25.450(0.466)\\
$\beta_5(t)$&71.838(2.816)&14.171(0.318)&21.714(0.545)\\
$\beta_6(t)$&71.875(2.751)&16.817(0.382)&39.297(0.974)\\
$\beta_7(t)$&71.724(2.792)&16.868(0.388)&39.247(0.988)\\
$\beta_8(t)$&100.029(3.930)&16.492(0.378)&38.312(0.958)\\
\hline
\end{tabular}
\caption{Simulation results for $n=100$, $p=32$, $T=256$, $c=4$ and SNR = 1 for 100 Monte Carlo runs. The mean values of MISEs ($10^{-2}$) for 8 functional slope estimates are listed for the proposed method as well as the OLS method. Their associated standard errors are reported in the parentheses. For each method, the $\mathsf{M} \in \mathbb{R}^{32 \times 32}$ are chosen to be a $32$ by $32$ pictures of Square, T and Cross correspondingly. The true basis is Chebyshev2. The fitted basis is Fourier.}
\label{sim1:snr1:mise:basis12:tab}
\end{table}
\begin{table}[htbp]
\centering
\begin{tabular}{r r r r}
\hline
MISE(sieve)	& Square	&T	&Cross\\
\hline
$\beta_1(t)$&5.636(0.103)&3.024(0.035)&10.959(0.055)\\
$\beta_2(t)$&5.579(0.099)&1.608(0.010)&10.955(0.058)\\
$\beta_3(t)$&5.592(0.100)&1.603(0.011)&10.951(0.056)\\
$\beta_4(t)$&5.618(0.097)&1.605(0.010)&7.616(0.061)\\
$\beta_5(t)$&5.623(0.105)&3.781(0.023)&2.998(0.026)\\
$\beta_6(t)$&5.579(0.096)&4.451(0.027)&6.868(0.063)\\
$\beta_7(t)$&5.608(0.100)&4.435(0.027)&6.813(0.056)\\
$\beta_8(t)$&8.248(0.150)&4.440(0.027)&6.842(0.059)\\
\hline
MISE(OLS)	& Square	&T	&Cross\\
\hline
$\beta_1(t)$&64.673(2.387)&10.933(0.362)&41.135(0.846)\\
$\beta_2(t)$&64.822(2.420)&5.587(0.172)&41.347(0.847)\\
$\beta_3(t)$&64.864(2.416)&5.564(0.169)&41.255(0.852)\\
$\beta_4(t)$&65.097(2.429)&5.567(0.173)&22.879(0.384)\\
$\beta_5(t)$&65.035(2.427)&11.680(0.329)&19.530(0.478)\\
$\beta_6(t)$&64.810(2.417)&14.160(0.417)&38.493(0.934)\\
$\beta_7(t)$&64.921(2.413)&14.121(0.413)&38.313(0.911)\\
$\beta_8(t)$&95.809(3.591)&14.027(0.410)&38.240(0.923)\\
\hline
\end{tabular}
\caption{Simulation results for $n=100$, $p=32$, $T=256$, $c=4$ and SNR = 5 for 100 Monte Carlo runs. The mean values of MISEs ($10^{-2}$) for 8 functional slope estimates are listed for the proposed method as well as the OLS method. Their associated standard errors are reported in the parentheses. For each method, the $\mathsf{M} \in \mathbb{R}^{32 \times 32}$ are chosen to be a $32$ by $32$ pictures of Square, T and Cross correspondingly. The true basis is Chebyshev2. The fitted basis is Fourier.}
\label{sim1:snr5:mise:basis12:tab}
\end{table}
\begin{table}[htbp]
\centering
\begin{tabular}{r r r r}
\hline
MISE(sieve)	& Square	&T	&Cross\\
\hline
$\beta_1(t)$&5.603(0.085)&3.052(0.028)&11.043(0.060)\\
$\beta_2(t)$&5.604(0.083)&1.593(0.009)&11.028(0.058)\\
$\beta_3(t)$&5.608(0.084)&1.592(0.009)&11.048(0.061)\\
$\beta_4(t)$&5.612(0.085)&1.590(0.009)&7.596(0.055)\\
$\beta_5(t)$&5.617(0.084)&3.669(0.025)&3.034(0.027)\\
$\beta_6(t)$&5.584(0.084)&4.357(0.028)&6.913(0.063)\\
$\beta_7(t)$&5.638(0.084)&4.358(0.028)&6.919(0.067)\\
$\beta_8(t)$&8.245(0.128)&4.359(0.028)&6.923(0.064)\\
\hline
MISE(OLS)	& Square	&T	&Cross\\
\hline
$\beta_1(t)$&61.110(2.285)&12.198(0.257)&40.356(0.644)\\
$\beta_2(t)$&61.348(2.298)&6.066(0.121)&40.385(0.651)\\
$\beta_3(t)$&61.322(2.285)&6.046(0.119)&40.471(0.653)\\
$\beta_4(t)$&61.367(2.294)&6.077(0.120)&22.103(0.284)\\
$\beta_5(t)$&61.387(2.299)&12.748(0.233)&18.963(0.374)\\
$\beta_6(t)$&61.203(2.300)&15.589(0.295)&37.552(0.719)\\
$\beta_7(t)$&61.542(2.297)&15.616(0.293)&37.601(0.720)\\
$\beta_8(t)$&91.039(3.399)&15.554(0.294)&37.433(0.717)\\
\hline
\end{tabular}
\caption{Simulation results for $n=100$, $p=32$, $T=256$, $c=4$ and SNR = 10 for 100 Monte Carlo runs. The mean values of MISEs ($10^{-2}$) for 8 functional slope estimates are listed for the proposed method as well as the OLS method. Their associated standard errors are reported in the parentheses. For each method, the $\mathsf{M} \in \mathbb{R}^{32 \times 32}$ are chosen to be a $32$ by $32$ pictures of Square, T and Cross correspondingly. The true basis is Chebyshev2. The fitted basis is Fourier.}
\label{sim1:snr10:mise:basis12:tab}
\end{table}

\begin{table}[htbp]
\centering
\begin{tabular}{rrrr}
\hline
MISE(sieve)	& Square	&T	&Cross\\
\hline
$\beta_1(t)$&0.190(0.012)&0.097(0.003)&0.146(0.006)\\
$\beta_2(t)$&0.208(0.014)&0.094(0.004)&0.151(0.008)\\
$\beta_3(t)$&0.211(0.017)&0.089(0.004)&0.164(0.008)\\
$\beta_4(t)$&0.194(0.014)&0.094(0.004)&0.158(0.009)\\
$\beta_5(t)$&0.227(0.015)&0.103(0.003)&0.160(0.008)\\
$\beta_6(t)$&0.191(0.013)&0.103(0.004)&0.172(0.009)\\
$\beta_7(t)$&0.229(0.016)&0.098(0.003)&0.149(0.007)\\
$\beta_8(t)$&0.271(0.021)&0.096(0.003)&0.140(0.005)\\
\hline
Rank($\widehat{\mathsf{M}}_{\mathrm{sieve}}$)&1.440(0.219) &13.800(0.192) &13.010(0.260)\\
\hline
MISE($c=6$)	& Square	&T	&Cross\\
\hline
$\beta_1(t)$&0.322(0.028)&0.482(0.021)&0.810(0.038)\\
$\beta_2(t)$&0.357(0.034)&0.250(0.010)&0.854(0.041)\\
$\beta_3(t)$&0.357(0.033)&0.249(0.011)&0.864(0.038)\\
$\beta_4(t)$&0.334(0.032)&0.260(0.010)&0.433(0.018)\\
$\beta_5(t)$&0.377(0.035)&0.486(0.022)&0.424(0.016)\\
$\beta_6(t)$&0.328(0.032)&0.605(0.026)&0.865(0.038)\\
$\beta_7(t)$&0.381(0.036)&0.605(0.025)&0.851(0.041)\\
$\beta_8(t)$&0.485(0.045)&0.562(0.023)&0.831(0.036)\\
\hline
Rank($\widehat{\mathsf{M}}_{c=6}$)&1.000(0.000) &2.000(0.000) &2.000(0.000)\\
\hline
\end{tabular}
\caption{Simulation results for $n=100$, $p=32$, $T=256$, $c=4$ and SNR = 1 for 100 Monte Carlo runs. The mean values of MISEs ($10^{-2}$) for 8 functional slope estimates are listed for the proposed methods when  $c$ is chosen by cross validation and $c$ is chosen to be fixed as $6$. Their associated standard errors are reported in the parentheses. The average ranks of $\widehat{\mathsf{M}}$ are also reported with the associated standard errors in the parentheses. For each method, the $\mathsf{M} \in \mathbb{R}^{32 \times 32}$ are chosen to be a $32$ by $32$ pictures of Square, T and Cross correspondingly. The true basis type is Fourier. The fitted basis type is Fourier.}
\label{sim1v1c6:snr1:mise:basis22:tab}
\end{table}
\begin{table}[htbp]
\centering
\begin{tabular}{rrrr}
\hline
MISE(sieve)	& Square	&T	&Cross\\
\hline
$\beta_1(t)$&0.046(0.008)&0.020(0.001)&0.030(0.001)\\
$\beta_2(t)$&0.051(0.009)&0.018(0.001)&0.033(0.001)\\
$\beta_3(t)$&0.049(0.007)&0.017(0.001)&0.035(0.001)\\
$\beta_4(t)$&0.050(0.009)&0.018(0.001)&0.032(0.001)\\
$\beta_5(t)$&0.050(0.008)&0.022(0.001)&0.033(0.001)\\
$\beta_6(t)$&0.045(0.007)&0.021(0.001)&0.036(0.001)\\
$\beta_7(t)$&0.051(0.009)&0.021(0.001)&0.032(0.001)\\
$\beta_8(t)$&0.063(0.012)&0.019(0.001)&0.031(0.001)\\
\hline
Rank($\widehat{\mathsf{M}}_{\mathrm{sieve}}$)&9.790(0.448) &13.410(0.303) &13.560(0.311)\\
\hline
MISE($c=6$)	& Square	&T	&Cross\\
\hline
$\beta_1(t)$&0.244(0.023)&0.408(0.015)&0.743(0.033)\\
$\beta_2(t)$&0.259(0.025)&0.198(0.007)&0.753(0.033)\\
$\beta_3(t)$&0.250(0.025)&0.197(0.006)&0.751(0.034)\\
$\beta_4(t)$&0.259(0.024)&0.197(0.006)&0.346(0.013)\\
$\beta_5(t)$&0.249(0.024)&0.408(0.015)&0.356(0.012)\\
$\beta_6(t)$&0.243(0.023)&0.505(0.018)&0.741(0.033)\\
$\beta_7(t)$&0.244(0.024)&0.497(0.018)&0.751(0.033)\\
$\beta_8(t)$&0.369(0.036)&0.498(0.018)&0.736(0.033)\\
\hline
Rank($\widehat{\mathsf{M}}_{c=6}$)&1.000(0.000) &2.000(0.000) &2.000(0.000)\\
\hline
\end{tabular}
\caption{Simulation results for $n=100$, $p=32$, $T=256$, $c=4$ and SNR = 5 for 100 Monte Carlo runs. The mean values of MISEs ($10^{-2}$) for 8 functional slope estimates are listed for the proposed methods when  $c$ is chosen by cross validation and $c$ is chosen to be fixed as $6$. Their associated standard errors are reported in the parentheses. The average ranks of $\widehat{\mathsf{M}}$ are also reported with the associated standard errors in the parentheses. For each method, the $\mathsf{M} \in \mathbb{R}^{32 \times 32}$ are chosen to be a $32$ by $32$ pictures of Square, T and Cross correspondingly. The true basis type is Fourier. The fitted basis type is Fourier.}
\label{sim1v1c6:snr5:mise:basis22:tab}
\end{table}
\begin{table}[htbp]
\centering
\begin{tabular}{rrrr}
\hline
MISE(sieve)	& Square	&T	&Cross\\
\hline
$\beta_1(t)$&0.022(0.004)&0.011(0.000)&0.014(0.000)\\
$\beta_2(t)$&0.024(0.005)&0.010(0.001)&0.016(0.001)\\
$\beta_3(t)$&0.023(0.004)&0.009(0.000)&0.018(0.001)\\
$\beta_4(t)$&0.024(0.004)&0.009(0.000)&0.015(0.001)\\
$\beta_5(t)$&0.025(0.004)&0.011(0.000)&0.016(0.001)\\
$\beta_6(t)$&0.025(0.005)&0.011(0.001)&0.017(0.001)\\
$\beta_7(t)$&0.024(0.004)&0.011(0.000)&0.016(0.001)\\
$\beta_8(t)$&0.030(0.006)&0.009(0.000)&0.015(0.001)\\
\hline
Rank($\widehat{\mathsf{M}}_{\mathrm{sieve}}$)&10.540(0.369) &13.490(0.351) &13.950(0.250)\\
\hline
MISE($c=6$)	& Square	&T	&Cross\\
\hline
$\beta_1(t)$&0.296(0.031)&0.407(0.016)&0.681(0.027)\\
$\beta_2(t)$&0.292(0.031)&0.196(0.007)&0.683(0.028)\\
$\beta_3(t)$&0.298(0.032)&0.196(0.006)&0.689(0.027)\\
$\beta_4(t)$&0.295(0.031)&0.193(0.007)&0.313(0.009)\\
$\beta_5(t)$&0.293(0.031)&0.412(0.016)&0.315(0.010)\\
$\beta_6(t)$&0.290(0.031)&0.504(0.019)&0.679(0.027)\\
$\beta_7(t)$&0.292(0.031)&0.509(0.019)&0.687(0.028)\\
$\beta_8(t)$&0.439(0.046)&0.505(0.018)&0.678(0.027)\\
\hline
Rank($\widehat{\mathsf{M}}_{c=6}$)&1.000(0.000) &2.000(0.000) &2.000(0.000)\\
\hline
\end{tabular}
\caption{Simulation results for $n=100$, $p=32$, $T=256$, $c=4$ and SNR = 10 for 100 Monte Carlo runs. The mean values of MISEs ($10^{-2}$) for 8 functional slope estimates are listed for the proposed methods when  $c$ is chosen by cross validation and $c$ is chosen to be fixed as $6$. Their associated standard errors are reported in the parentheses. The average ranks of $\widehat{\mathsf{M}}$ are also reported with the associated standard errors in the parentheses. For each method, the $\mathsf{M} \in \mathbb{R}^{32 \times 32}$ are chosen to be a $32$ by $32$ pictures of Square, T and Cross correspondingly. The true basis type is Fourier. The fitted basis type is Fourier.}
\label{sim1v1c6:snr10:mise:basis22:tab}
\end{table}

\begin{table}[htbp]
\centering
\begin{tabular}{rrrr}
\hline
Rank(sieve)	& Square	&T	&Cross\\
\hline
SNR = 1 &1.000(0.000) &2.030(0.022) &2.000(0.000)\\
SNR = 5 &1.000(0.000) &2.000(0.000) &2.000(0.000)\\
SNR = 10 &1.000(0.000) &2.000(0.000) &2.000(0.000)\\
\hline
\end{tabular}
\caption{Simulation results for $n=100$, $p=32$, $T=256$, $c=4$ for 100 Monte Carlo runs. The average ranks of $\widehat{\mathsf{M}}$ are reported with the associated standard errors in the parentheses. The $\mathsf{M} \in \mathbb{R}^{32 \times 32}$ are chosen to be a $32$ by $32$ pictures of Square, T and Cross correspondingly. The true basis type is Fourier. The fitted basis type is Chebyshev2.}
\label{sim1v1c6:mise:basis21:tab}
\end{table}

\begin{table}[htbp]
\centering
\begin{tabular}{rrrrrr}
\hline
 SNR & Basis & Square	&T	&Cross & ORACLE\\
\hline
1 & Chebyshev2 & 4.280(0.045) &4.740(0.044) &5.000(0.000) &50\\
5 & Chebyshev2 & 4.280(0.045) &4.740(0.044) &5.000(0.000) &50\\
10 & Chebyshev2 & 4.280(0.045) &4.740(0.044) &5.000(0.000) &50\\
\hline
1 & Fourier & 4.220(0.042) &4.790(0.041) &5.000(0.000) &50\\
5 & Fourier & 4.200(0.040) &4.790(0.041) &5.000(0.000) &50\\
10 & Fourier & 4.200(0.040) &4.790(0.041) &5.000(0.000) &50\\
\hline
\end{tabular}
\caption{Simulation results for $n=100$, $p=32$, $T=256$, $c=50$ for 100 Monte Carlo runs. The mean number of basis selected as well as the oracle number of basis is listed for each of basis type, Fourier or Chebyshev2, and SNR, 1, 5 or 10. Their associated standard errors are reported in the parentheses.}
\label{sim1v7:c:tab}
\end{table}
\begin{table}[htbp]
\centering
\begin{tabular}{rrrr}
\hline
MISE(sieve)	& Square	&T	&Cross\\
\hline
$\beta_1(t)$&1.083(0.015)&0.732(0.053)&0.526(0.012)\\
$\beta_2(t)$&1.071(0.017)&0.939(0.055)&0.792(0.013)\\
$\beta_3(t)$&1.084(0.016)&0.955(0.057)&0.811(0.013)\\
$\beta_4(t)$&1.746(0.049)&0.944(0.056)&0.786(0.011)\\
\hline
MISE(OLS)	& Square	&T	&Cross\\
\hline
$\beta_1(t)$&5.389(0.172)&4.106(0.091)&6.001(0.021)\\
$\beta_2(t)$&5.411(0.176)&6.018(0.162)&9.337(0.034)\\
$\beta_3(t)$&5.454(0.171)&6.054(0.161)&9.438(0.031)\\
$\beta_4(t)$&10.244(0.369)&5.945(0.157)&9.205(0.027)\\
\hline
\end{tabular}
\caption{Simulation results for $n=100$, $p=32$, $T=256$, $c=50$ and SNR = 1 for 100 Monte Carlo runs. The mean values of MISEs ($10^{-2}$) for 4 functional slope estimates are listed for the proposed method as well as the OLS method. Their associated standard errors are reported in the parentheses. For each method, the $\mathsf{M} \in \mathbb{R}^{32 \times 32}$ are chosen to be a $32$ by $32$ pictures of Square, T and Cross correspondingly. The true basis type is Chebyshev2. The fitted basis type is Chebyshev2.}
\label{sim1v7:snr1:mise:basis11:tab}
\end{table}
\begin{table}[htbp]
\centering
\begin{tabular}{rrrr}
\hline
MISE(sieve)	& Square	&T	&Cross\\
\hline
$\beta_1(t)$&1.071(0.014)&0.722(0.052)&0.514(0.011)\\
$\beta_2(t)$&1.065(0.015)&0.934(0.055)&0.782(0.011)\\
$\beta_3(t)$&1.071(0.015)&0.941(0.056)&0.790(0.011)\\
$\beta_4(t)$&1.736(0.048)&0.937(0.056)&0.774(0.010)\\
\hline
MISE(OLS)	& Square	&T	&Cross\\
\hline
$\beta_1(t)$&5.177(0.169)&3.935(0.086)&5.763(0.009)\\
$\beta_2(t)$&5.169(0.170)&5.813(0.155)&9.060(0.015)\\
$\beta_3(t)$&5.189(0.168)&5.829(0.154)&9.106(0.013)\\
$\beta_4(t)$&10.050(0.364)&5.795(0.153)&8.951(0.012)\\
\hline
\end{tabular}
\caption{Simulation results for $n=100$, $p=32$, $T=256$, $c=50$ and SNR = 5 for 100 Monte Carlo runs. The mean values of MISEs ($10^{-2}$) for 4 functional slope estimates are listed for the proposed method as well as the OLS method. Their associated standard errors are reported in the parentheses. For each method, the $\mathsf{M} \in \mathbb{R}^{32 \times 32}$ are chosen to be a $32$ by $32$ pictures of Square, T and Cross correspondingly. The true basis type is Chebyshev2. The fitted basis type is Chebyshev2.}
\label{sim1v7:snr5:mise:basis11:tab}
\end{table}
\begin{table}[htbp]
\centering
\begin{tabular}{rrrr}
\hline
MISE(sieve)	& Square	&T	&Cross\\
\hline
$\beta_1(t)$&1.069(0.014)&0.721(0.052)&0.513(0.011)\\
$\beta_2(t)$&1.065(0.015)&0.934(0.055)&0.781(0.011)\\
$\beta_3(t)$&1.069(0.015)&0.939(0.056)&0.787(0.011)\\
$\beta_4(t)$&1.735(0.048)&0.936(0.056)&0.772(0.010)\\
\hline
MISE(OLS)	& Square	&T	&Cross\\
\hline
$\beta_1(t)$&5.149(0.168)&3.914(0.086)&5.734(0.006)\\
$\beta_2(t)$&5.140(0.169)&5.789(0.154)&9.030(0.010)\\
$\beta_3(t)$&5.154(0.168)&5.800(0.153)&9.062(0.009)\\
$\beta_4(t)$&10.026(0.364)&5.778(0.152)&8.918(0.008)\\
\hline
\end{tabular}
\caption{Simulation results for $n=100$, $p=32$, $T=256$, $c=50$ and SNR = 10 for 100 Monte Carlo runs. The mean values of MISEs ($10^{-2}$) for 4 functional slope estimates are listed for the proposed method as well as the OLS method. Their associated standard errors are reported in the parentheses. For each method, the $\mathsf{M} \in \mathbb{R}^{32 \times 32}$ are chosen to be a $32$ by $32$ pictures of Square, T and Cross correspondingly. The true basis type is Chebyshev2. The fitted basis type is Chebyshev2.}
\label{sim1v7:snr10:mise:basis11:tab}
\end{table}
\begin{table}[htbp]
\centering
\begin{tabular}{rrrr}
\hline
MISE(sieve)	& Square	&T	&Cross\\
\hline
$\beta_1(t)$&7.386(0.053)&3.605(0.020)&5.356(0.068)\\
$\beta_2(t)$&7.357(0.055)&5.619(0.029)&8.148(0.075)\\
$\beta_3(t)$&7.400(0.056)&5.646(0.029)&8.201(0.077)\\
$\beta_4(t)$&17.320(0.074)&5.620(0.028)&8.108(0.074)\\
\hline
MISE(OLS)	& Square	&T	&Cross\\
\hline
$\beta_1(t)$&19.985(0.507)&18.658(0.374)&29.668(0.239)\\
$\beta_2(t)$&20.059(0.513)&28.123(0.573)&46.284(0.368)\\
$\beta_3(t)$&20.177(0.509)&28.228(0.572)&46.579(0.372)\\
$\beta_4(t)$&41.213(0.952)&27.899(0.567)&45.524(0.369)\\
\hline
\end{tabular}
\caption{Simulation results for $n=100$, $p=32$, $T=256$, $c=50$ and SNR = 1 for 100 Monte Carlo runs. The mean values of MISEs ($10^{-2}$) for 4 functional slope estimates are listed for the proposed method as well as the OLS method. Their associated standard errors are reported in the parentheses. For each method, the $\mathsf{M} \in \mathbb{R}^{32 \times 32}$ are chosen to be a $32$ by $32$ pictures of Square, T and Cross correspondingly. The true basis type is Chebyshev2. The fitted basis type is Fourier.}
\label{sim1v7:snr1:mise:basis12:tab}
\end{table}
\begin{table}[htbp]
\centering
\begin{tabular}{rrrr}
\hline
MISE(sieve)	& Square	&T	&Cross\\
\hline
$\beta_1(t)$&7.379(0.052)&3.592(0.018)&5.327(0.066)\\
$\beta_2(t)$&7.365(0.053)&5.610(0.027)&8.130(0.072)\\
$\beta_3(t)$&7.384(0.053)&5.622(0.027)&8.153(0.073)\\
$\beta_4(t)$&17.325(0.073)&5.611(0.027)&8.079(0.072)\\
\hline
MISE(OLS)	& Square	&T	&Cross\\
\hline
$\beta_1(t)$&19.588(0.500)&18.146(0.359)&28.711(0.228)\\
$\beta_2(t)$&19.578(0.502)&27.511(0.545)&45.155(0.354)\\
$\beta_3(t)$&19.630(0.500)&27.563(0.547)&45.291(0.355)\\
$\beta_4(t)$&40.946(0.946)&27.461(0.545)&44.539(0.352)\\
\hline
\end{tabular}
\caption{Simulation results for $n=100$, $p=32$, $T=256$, $c=50$ and SNR = 5 for 100 Monte Carlo runs. The mean values of MISEs ($10^{-2}$) for 4 functional slope estimates are listed for the proposed method as well as the OLS method. Their associated standard errors are reported in the parentheses. For each method, the $\mathsf{M} \in \mathbb{R}^{32 \times 32}$ are chosen to be a $32$ by $32$ pictures of Square, T and Cross correspondingly. The true basis type is Chebyshev2. The fitted basis type is Fourier.}
\label{sim1v7:snr5:mise:basis12:tab}
\end{table}
\begin{table}[htbp]
\centering
\begin{tabular}{rrrr}
\hline
MISE(sieve)	& Square	&T	&Cross\\
\hline
$\beta_1(t)$&7.375(0.052)&3.590(0.018)&5.323(0.066)\\
$\beta_2(t)$&7.364(0.053)&5.609(0.026)&8.129(0.072)\\
$\beta_3(t)$&7.378(0.053)&5.618(0.026)&8.145(0.072)\\
$\beta_4(t)$&17.323(0.073)&5.610(0.026)&8.075(0.071)\\
\hline
MISE(OLS)	& Square	&T	&Cross\\
\hline
$\beta_1(t)$&19.521(0.499)&18.071(0.358)&28.592(0.227)\\
$\beta_2(t)$&19.506(0.500)&27.424(0.541)&45.026(0.352)\\
$\beta_3(t)$&19.543(0.498)&27.462(0.543)&45.122(0.353)\\
$\beta_4(t)$&40.889(0.945)&27.398(0.542)&44.414(0.350)\\
\hline
\end{tabular}
\caption{Simulation results for $n=100$, $p=32$, $T=256$, $c=50$ and SNR = 10 for 100 Monte Carlo runs. The mean values of MISEs ($10^{-2}$) for 4 functional slope estimates are listed for the proposed method as well as the OLS method. Their associated standard errors are reported in the parentheses. For each method, the $\mathsf{M} \in \mathbb{R}^{32 \times 32}$ are chosen to be a $32$ by $32$ pictures of Square, T and Cross correspondingly. The true basis type is Chebyshev2. The fitted basis type is Fourier.}
\label{sim1v7:snr10:mise:basis12:tab}
\end{table}
\begin{table}[htbp]
\centering
\begin{tabular}{rrrr}
\hline
MISE(sieve)	& Square	&T	&Cross\\
\hline
$\beta_1(t)$&3.645(0.050)&1.327(0.006)&1.350(0.023)\\
$\beta_2(t)$&3.628(0.047)&1.610(0.009)&2.390(0.026)\\
$\beta_3(t)$&3.648(0.052)&1.596(0.009)&2.405(0.026)\\
$\beta_4(t)$&6.511(0.100)&1.600(0.009)&2.260(0.026)\\
\hline
MISE(OLS)	& Square	&T	&Cross\\
\hline
$\beta_1(t)$&13.922(0.036)&8.003(0.022)&10.108(0.026)\\
$\beta_2(t)$&14.027(0.039)&11.873(0.034)&16.103(0.040)\\
$\beta_3(t)$&14.058(0.045)&11.768(0.031)&16.054(0.043)\\
$\beta_4(t)$&26.994(0.044)&11.735(0.025)&15.617(0.036)\\
\hline
\end{tabular}
\caption{Simulation results for $n=100$, $p=32$, $T=256$, $c=50$ and SNR = 1 for 100 Monte Carlo runs. The mean values of MISEs ($10^{-2}$) for 4 functional slope estimates are listed for the proposed method as well as the OLS method. Their associated standard errors are reported in the parentheses. For each method, the $\mathsf{M} \in \mathbb{R}^{32 \times 32}$ are chosen to be a $32$ by $32$ pictures of Square, T and Cross correspondingly. The true basis type is Fourier. The fitted basis type is Chebyshev2.}
\label{sim1v7:snr1:mise:basis21:tab}
\end{table}
\begin{table}[htbp]
\centering
\begin{tabular}{rrrr}
\hline
MISE(sieve)	& Square	&T	&Cross\\
\hline
$\beta_1(t)$&3.624(0.048)&1.321(0.005)&1.337(0.022)\\
$\beta_2(t)$&3.616(0.047)&1.595(0.008)&2.371(0.024)\\
$\beta_3(t)$&3.625(0.049)&1.589(0.008)&2.378(0.024)\\
$\beta_4(t)$&6.508(0.098)&1.592(0.008)&2.256(0.024)\\
\hline
MISE(OLS)	& Square	&T	&Cross\\
\hline
$\beta_1(t)$&13.511(0.016)&7.738(0.009)&9.751(0.011)\\
$\beta_2(t)$&13.522(0.018)&11.497(0.015)&15.622(0.018)\\
$\beta_3(t)$&13.538(0.019)&11.452(0.014)&15.602(0.019)\\
$\beta_4(t)$&26.645(0.019)&11.458(0.011)&15.287(0.016)\\
\hline
\end{tabular}
\caption{Simulation results for $n=100$, $p=32$, $T=256$, $c=50$ and SNR = 5 for 100 Monte Carlo runs. The mean values of MISEs ($10^{-2}$) for 4 functional slope estimates are listed for the proposed method as well as the OLS method. Their associated standard errors are reported in the parentheses. For each method, the $\mathsf{M} \in \mathbb{R}^{32 \times 32}$ are chosen to be a $32$ by $32$ pictures of Square, T and Cross correspondingly. The true basis type is Fourier. The fitted basis type is Chebyshev2.}
\label{sim1v7:snr5:mise:basis21:tab}
\end{table}
\begin{table}[htbp]
\centering
\begin{tabular}{rrrr}
\hline
MISE(sieve)	& Square	&T	&Cross\\
\hline
$\beta_1(t)$&3.620(0.048)&1.320(0.005)&1.336(0.022)\\
$\beta_2(t)$&3.614(0.047)&1.593(0.008)&2.369(0.024)\\
$\beta_3(t)$&3.621(0.048)&1.589(0.008)&2.374(0.024)\\
$\beta_4(t)$&6.510(0.098)&1.590(0.008)&2.257(0.024)\\
\hline
MISE(OLS)	& Square	&T	&Cross\\
\hline
$\beta_1(t)$&13.457(0.011)&7.704(0.007)&9.705(0.008)\\
$\beta_2(t)$&13.459(0.012)&11.446(0.011)&15.559(0.013)\\
$\beta_3(t)$&13.470(0.013)&11.414(0.010)&15.545(0.013)\\
$\beta_4(t)$&26.607(0.014)&11.423(0.008)&15.248(0.011)\\
\hline
\end{tabular}
\caption{Simulation results for $n=100$, $p=32$, $T=256$, $c=50$ and SNR = 10 for 100 Monte Carlo runs. The mean values of MISEs ($10^{-2}$) for 4 functional slope estimates are listed for the proposed method as well as the OLS method. Their associated standard errors are reported in the parentheses. For each method, the $\mathsf{M} \in \mathbb{R}^{32 \times 32}$ are chosen to be a $32$ by $32$ pictures of Square, T and Cross correspondingly. The true basis type is Fourier. The fitted basis type is Chebyshev2.}
\label{sim1v7:snr10:mise:basis21:tab}
\end{table}
\begin{table}[htbp]
\centering
\begin{tabular}{rrrr}
\hline
MISE(sieve)	& Square	&T	&Cross\\
\hline
$\beta_1(t)$&1.091(0.015)&0.674(0.049)&0.555(0.012)\\
$\beta_2(t)$&1.097(0.019)&0.905(0.054)&0.834(0.014)\\
$\beta_3(t)$&1.096(0.016)&0.891(0.052)&0.847(0.014)\\
$\beta_4(t)$&1.723(0.049)&0.896(0.053)&0.805(0.014)\\
\hline
MISE(OLS)	& Square	&T	&Cross\\
\hline
$\beta_1(t)$&5.219(0.159)&4.266(0.088)&6.106(0.018)\\
$\beta_2(t)$&5.300(0.173)&6.343(0.152)&9.572(0.029)\\
$\beta_3(t)$&5.295(0.157)&6.283(0.152)&9.551(0.031)\\
$\beta_4(t)$&9.836(0.343)&6.246(0.149)&9.282(0.026)\\
\hline
\end{tabular}
\caption{Simulation results for $n=100$, $p=32$, $T=256$, $c=50$ and SNR = 1 for 100 Monte Carlo runs. The mean values of MISEs ($10^{-2}$) for 4 functional slope estimates are listed for the proposed method as well as the OLS method. Their associated standard errors are reported in the parentheses. For each method, the $\mathsf{M} \in \mathbb{R}^{32 \times 32}$ are chosen to be a $32$ by $32$ pictures of Square, T and Cross correspondingly. The true basis type is Fourier. The fitted basis type is Fourier.}
\label{sim1v7:snr1:mise:basis22:tab}
\end{table}
\begin{table}[htbp]
\centering
\begin{tabular}{rrrr}
\hline
MISE(sieve)	& Square	&T	&Cross\\
\hline
$\beta_1(t)$&1.072(0.014)&0.669(0.049)&0.545(0.012)\\
$\beta_2(t)$&1.073(0.015)&0.890(0.053)&0.819(0.012)\\
$\beta_3(t)$&1.074(0.014)&0.884(0.052)&0.824(0.012)\\
$\beta_4(t)$&1.685(0.045)&0.886(0.053)&0.800(0.012)\\
\hline
MISE(OLS)	& Square	&T	&Cross\\
\hline
$\beta_1(t)$&4.943(0.152)&4.092(0.082)&5.858(0.008)\\
$\beta_2(t)$&4.959(0.158)&6.095(0.146)&9.233(0.013)\\
$\beta_3(t)$&4.960(0.152)&6.069(0.146)&9.225(0.014)\\
$\beta_4(t)$&9.500(0.329)&6.067(0.145)&9.053(0.011)\\
\hline
\end{tabular}
\caption{Simulation results for $n=100$, $p=32$, $T=256$, $c=50$ and SNR = 5 for 100 Monte Carlo runs. The mean values of MISEs ($10^{-2}$) for 4 functional slope estimates are listed for the proposed method as well as the OLS method. Their associated standard errors are reported in the parentheses. For each method, the $\mathsf{M} \in \mathbb{R}^{32 \times 32}$ are chosen to be a $32$ by $32$ pictures of Square, T and Cross correspondingly. The true basis type is Fourier. The fitted basis type is Fourier.}
\label{sim1v7:snr5:mise:basis22:tab}
\end{table}
\begin{table}[htbp]
\centering
\begin{tabular}{rrrr}
\hline
MISE(sieve)	& Square	&T	&Cross\\
\hline
$\beta_1(t)$&1.071(0.014)&0.668(0.049)&0.544(0.012)\\
$\beta_2(t)$&1.072(0.015)&0.887(0.053)&0.817(0.012)\\
$\beta_3(t)$&1.072(0.014)&0.883(0.052)&0.821(0.012)\\
$\beta_4(t)$&1.686(0.045)&0.885(0.052)&0.801(0.012)\\
\hline
MISE(OLS)	& Square	&T	&Cross\\
\hline
$\beta_1(t)$&4.918(0.152)&4.070(0.082)&5.826(0.006)\\
$\beta_2(t)$&4.926(0.156)&6.062(0.145)&9.189(0.009)\\
$\beta_3(t)$&4.927(0.152)&6.043(0.145)&9.183(0.010)\\
$\beta_4(t)$&9.481(0.329)&6.045(0.144)&9.026(0.008)\\
\hline
\end{tabular}
\caption{Simulation results for $n=100$, $p=32$, $T=256$, $c=50$ and SNR = 10 for 100 Monte Carlo runs. The mean values of MISEs ($10^{-2}$) for 4 functional slope estimates are listed for the proposed method as well as the OLS method. Their associated standard errors are reported in the parentheses. For each method, the $\mathsf{M} \in \mathbb{R}^{32 \times 32}$ are chosen to be a $32$ by $32$ pictures of Square, T and Cross correspondingly. The true basis type is Fourier. The fitted basis type is Fourier.}
\label{sim1v7:snr10:mise:basis22:tab}
\end{table}

\end{appendix}

\end{document}